\newcommand\footnoteref[1]{\protect\footnotemark[#1]}
\newcommand{\ba}{\begin{eqnarray}}
\newcommand{\ea}{\end{eqnarray}}
\newcommand{\be}{\begin{equation}}
\newcommand{\ee}{\end{equation}}
\newcommand{\bea}{\begin{eqnarray}}
\newcommand{\eea}{\end{eqnarray}}
\newcommand{\Jspace}{\hspace{-0.8mm}}
\newcommand{\doesntwork}{}
\newcommand{\T}{\mathrm{T}}
\newcommand{\Tj}{{}^{\scriptstyle J}\mathrm{T}}
\newcommand{\Ta}{\T^{\{ 1 \}}}
\newcommand{\Tb}{\T^{\{ 2 \}}}
\newcommand{\Ya}{\Y^{\{ 1 \}}}
\newcommand{\Yb}{\Y^{\{ 2 \}}}
\newcommand{\w}{\omega}
\newcommand{\Proj}{\mathcal{P}}
\newcommand{\Projk}{ \Proj^{ \{ k \} }}
\newcommand{\FullProj}{ \Proj^{ \{ 1\ldots N \} }}
\newcommand{\Y}{\mathrm{Y}}
\newcommand{\wigner}{\mathcal{W}}
\newcommand{\prestar}{\smallstar}
\newcommand{\abs}[1]{\ensuremath{\vert #1 \vert}}
\DeclareMathOperator{\tr}{tr}
\def\m {\mathfrak{m}}
\newcommand{\ket}[1]{\ensuremath{| #1 \rangle}{}}
\newcommand{\bra}[1]{\ensuremath{\langle #1 |}{}}
\newcommand{\SU}{\mathrm{SU}}
\newcommand{\JC}{\nu}
\newcommand{\unity}{\mathbbmss{1}}
\newcommand{\C}{\mathbb C}
\newtheorem{theorem}{Theorem}
\newtheorem{lemma}[theorem]{Lemma}
\newtheorem{corollary}{Corollary}
\newtheorem{result}{Result}
\def\ps@pprintTitle{%
     \let\@oddhead\@empty
     \let\@evenhead\@empty
     \def\@oddfoot{\footnotesize\itshape\hfill\today}%
     \let\@evenfoot\@oddfoot}
\begin{document}

\renewcommand*{\today}{July 30, 2018}

\begin{frontmatter}

\title{Time evolution of coupled spin systems\\ in a generalized Wigner representation}

\author{Bálint Koczor}
\ead{balint.koczor@tum.de}
\author{Robert Zeier}
\ead{zeier@tum.de}
\author{Steffen J. Glaser}
\ead{glaser@tum.de}
\address{Technische Universität München, Department Chemie,\\ Lichtenbergstrasse 4, 85747 Garching, Germany}

\begin{abstract}
Phase-space representations as given by Wigner functions are a powerful tool for 
representing the quantum state and characterizing its time evolution
in the case of infinite-dimensional quantum systems and have been widely 
used in quantum optics and beyond. Continuous phase spaces have also been studied 
for finite-dimensional quantum systems such as spin systems.
However, much less is known for finite-dimensional, \emph{coupled} systems,
and we present a complete theory of Wigner functions for this case.
In particular, we
provide a self-contained Wigner formalism for describing and predicting the time evolution
of coupled spins which lends itself to visualizing the high-dimensional structure of
multi-partite quantum states. We completely treat the case of an arbitrary number
of coupled spins $1/2$, thereby establishing the equation of motion using Wigner functions.
The explicit form of the time evolution is then calculated for 
up to three spins $1/2$. The underlying physical principles of our Wigner representations 
for coupled spin systems
are illustrated
with multiple examples 
which are easily translatable to other experimental scenarios.
\end{abstract}

\begin{keyword}
Quantum mechanics, Wigner representation, Phase space dynamics, Nuclear magnetic resonance
\end{keyword}

\end{frontmatter}


\section{Introduction}

Prior to the emergence of quantum mechanics, 
geometric intuition played a particularly strong role in the formulation of classical physics.
Breaking with this tradition, quantum mechanics is often
formulated abstractly  by Hilbert-space operators
such as the density operator describing the quantum state 
or the Hamiltonian corresponding to the total energy of the system.
The demand for an intuitive formulation of quantum mechanics has driven the development of
the so-called Wigner-Weyl formalism \cite{Wey27,Weyl31,Weyl50,wigner1932}
which is equivalent to other formulations of quantum mechanics, but its phase-space
approach mirrors the classical phase space. Moreover, the time evolution of these
quantum systems can be entirely characterized on the level of 
Wigner functions
\cite{Gro46,Moy49} and in a similar fashion 
as the evolution of a statistical ensemble of classical particles.

Similar as quantum systems with an infinite-dimensional Hilbert space
as studied in quantum optics, the Wigner formalism 
for describing the time evolution 
on a continuous phase space has been extended to
finite-dimensional quantum systems such as spins (see Sec.~\ref{prior}).
However, much is less is known for finite-dimensional, \emph{coupled} systems.
We present 
a complete theory of Wigner functions 
applicable to arbitrary density matrices and operators
of coupled spin systems.
In particular, we specify Wigner functions for operators of
arbitrary coupled spin systems,
i.e., systems that consist of an arbitrary number of coupled spins $J$.
Furthermore, we address the following questions for coupled quantum systems:
How does the Wigner function of a quantum state evolve in this case? 
Given the Wigner function of a Hamiltonian, how can 
one predict the Wigner function of a 
quantum state at a later time without relying on explicit matrices?

In finite-dimensional 
quantum systems, so far only the Wigner formalism for systems consisting of uncoupled spins has been fully developed and only special results for systems consisting of two coupled spins have been reported in the literature.
Here we solve the open question of how to compute the time evolution
of arbitrary coupled spin systems using a consistent Wigner formalism.
Our characterization of the time evolution relies on explicit partial derivatives of Wigner functions.
Moreover, our Wigner representation is also suited
for graphically visualizing the high-dimensional structure of
multi-partite quantum states or operators in a compact and instructive form.
This allows for geometric reasoning beyond matrix mechanics
and provides a novel didactic approach complementary 
to matrix treatments of the time evolution.

As our results might be of interest to a wider audience, we present our work on 
several levels.
Most importantly, the underlying physical principles are 
\emph{first}
highlighted through a set of illustrative examples 
for coupled spin systems, which
are easily translatable to 
experimental approaches for realizing qubits in trapped ions, quantum dots, 
or superconducting circuits. This demonstrates
that our novel approach for calculating the time evolution
nicely conforms with conventional Hilbert-space quantum mechanics.
Building on the intuition from the examples, we \emph{then}
develop and discuss
the mathematical
formulation 
of our Wigner representation coupled quantum systems
and its time evolution
in sufficient detail
for facilitating theoretical extensions in the future. These theoretical
advances on computing the time evolution for coupled quantum systems
in a consistent Wigner frame
constitute our central results.

We continue this introduction by first reviewing
basic properties of Wigner functions
of infinite-dimensional quantum systems
which will motivate and guide our approach.
Then, we summarize
results from the literature for both 
Wigner functions and visualization techniques of finite-dimensional quantum systems.
Finally, before starting the main text, we provide a summary of our results, motivate them further, and outline the structure of this work.

\subsection{Wigner functions of infinite-dimensional quantum systems\label{introtowignerf}}
Even though we almost exclusively focus on Wigner functions 
of finite-dimensional quantum systems such as spin systems, we will shortly review 
how the time evolution is established for Wigner functions of infinite-dimensional
quantum systems. This will also set the stage for 
related techniques in the finite-dimensional case. In general,
quantum mechanics describes how the quantum state evolves 
under the action of a Hamiltonian and there are at least
three independent approaches to this description:
the Hilbert-space formalism relying on matrices and operators \cite{cohen1991quantum}, 
the path-integral method \cite{feynman2005}, and the 
Wigner phase-space approach \cite{carruthers1983,hillery1997,kim1991,lee1995,
	gadella1995,zachos2005,schroeck2013,SchleichBook,Curtright-review}.

We consider here the latter approach which particularly
eases the comparison with classical mechanics.
Groenewold \cite{Gro46} and Moyal \cite{Moy49} formalized quantum mechanics
as a statistical theory on a classical phase space by associating 
the density operator in the Hilbert space with a function on the phase
space and interpreting this correspondence 
as the inverse of the Weyl transformation
\cite{Wey27,Weyl31,Weyl50}.
In particular, the density operator $\rho$ can be represented by 
a Wigner function \cite{wigner1932} $W_\rho(x,p)$ which
constitutes a quasiprobability function 
in classical phase-space coordinates $x$ and $p$.
A general framework for this theory was given by Bayen \emph{et al.}
\cite{1bayen1978,2bayen1978}.

More precisely, the Wigner formalism represents the density operator $\rho$
of an infinite-dimensional quantum system  
as the Fourier transformation (cf.\ p.~68 in \cite{SchleichBook})
\begin{equation*}
W_\rho(x,p)= \wigner(\rho) = 1/(2 \pi h)
\int_{-\infty}^{\infty}   \tilde{\rho}(x,\xi) \exp(-i p \xi/h) 
\,\mathrm{d} \xi,
\end{equation*}
where $\tilde{\rho}(x,\xi)$ is given by $\langle x+\xi/2 | \rho | x-\xi/2 \rangle$,
and $\wigner(\rho)$ denotes the Wigner transformation of $\rho$.
The Wigner function $W_\rho(x,p)$ is real, normalized
(i.e., $\int W_\rho(x,p)  \,\mathrm{d}x \,\mathrm{d}p= \tr(\rho) =1$), and bounded
(i.e., $-2/h \leq W_\rho(x,p) \leq 2/h$). Integrating $W_\rho(x,p)$ over the variable $p$
results in  the quantum-mechanical probability densities $P(x)$  in the coordinate $x$,
and \emph{vice versa} if $x$ and $p$ are exchanged. More generally, an arbitrary operator $A$
is associated with its Wigner function $W_A(x,p)$, and 
the quantum-mechanical expectation
value $\langle A \rangle=\int W_\rho(x,p) W_A(x,p)  \,\mathrm{d}x \,\mathrm{d}p$
is then computed as a classical, statistical average
over phase-space distributions.

In the Hilbert-space formalism, 
a quantum state  is described by the density operator
$\rho$ and its time evolution 
is governed by the von-Neumann equation 
\begin{equation}\label{NeumannEq}
i \frac{\partial \rho }{\partial t} = [ \mathcal{H} , \rho ]:=\mathcal{H} \rho - \rho \mathcal{H}
\end{equation}
where $\mathcal{H}$ denotes the Hamiltonian of the quantum system.
The time evolution of a Wigner function can be directly calculated in the phase-space 
representation by introducing a so-called 
star product \cite{Gro46,1bayen1978,2bayen1978}, which mimics
the Wigner function $\wigner(AB)=W_A(x,p) \star W_B(x,p)$
of the product $AB$ of Hilbert space operators. The appropriate form 
$\star=\exp( i \hbar \{\cdot,\cdot\} /2)$
of the star product was given by Groenewold \cite{Gro46} where
$\{\cdot,\cdot\}=\overleftarrow{\partial}_x \overrightarrow{\partial}_p - \overleftarrow{\partial}_p \overrightarrow{\partial}_x$
is the Poisson bracket known from classical physics (cf.\ Vol.~1, §42 in \cite{landau1976}). 
As an important consequence reflecting a classical equation of motion, the 
time evolution of a Wigner function is given by (see Eq.~(10) in \cite{zachos2005})
\begin{equation} \label{Moyal_infinite}
i \hbar \frac{\partial W_\rho }{\partial t} =  [ W_\mathcal{H} , W_\rho  ]_\star  
:= W_\mathcal{H} \star W_\rho -  W_\mathcal{H} \star W_\rho .
\end{equation}
and 
can be determined as an expansion series in $\hbar$, whose first term
is given by the Poisson bracket $\{W_\mathcal{H},W_\rho\}$.
The Wigner formalism and its star product for infinite-dimensional quantum systems 
are well established 
and widely used in quantum optics and beyond. 
Along with what is known
for the star product of
finite-dimensional quantum systems (see Sec.~\ref{prior}),
it is our aim to develop an analogous theory
leading to a version of 
a differential star product for \emph{coupled} spin systems
which is
comparable to the one in
Eq.~\eqref{Moyal_infinite}.

\subsection{Prior work on Wigner functions of finite-dimensional quantum systems\label{prior}}

Fundamental postulates for 
phase-space representations of finite-dimensional quantum systems 
were proposed by
Stratonovich \cite{stratonovich}
(see \ref{stratpostulatessingle}),
and these build the foundations for 
Wigner functions of finite-dimensional quantum systems.
Reflecting the translational covariance
of Wigner functions of infinite-dimensional quantum systems, 
one of these postulates states that the Wigner function
has to transform naturally under rotations. 
The rotational covariance constrains continuous Wigner representations of spins
into a spherical coordinate systems. The resulting Wigner functions
can then be given by linear combinations of spherical harmonics,
which offers a convenient tool for visualizing spins (see Sec.~\ref{intro_visual}).

The Wigner transformation of single-spin operators 
was developed by V{\'a}rilly and Gracia-Bond{\'i}a \cite{VGB89}
and was then further extended by Brif and Mann \cite{Brif97,Brif98}.
In particular,
\cite{VGB89} provides an explicit formula for
the Wigner transformation for a single spin $J$, which
satisfies the Stratonovich postulates.
This formula uses a rank-$j$-dependent kernel which is based on  
mapping transition operators $\ket{J m}\bra{J m'}$ onto 
their corresponding Wigner functions $W_{\ket{J m}\bra{J m'}}$; the connection
between tensor operators $\Tj_{jm}$ and spherical harmonics $\Y_{jm}$
was also mentioned.
A more general kernel for the continuous phase-space representation of a single spin was stated
in \cite{Brif97,Brif98}. It defines Wigner functions 
of tensor operators $\Tj_{jm}$ of single spins as the corresponding
spherical harmonics
$\Y_{jm}$.
We build on these results in Sec.~\ref{AppendixWignerRepr} while also unifying 
normalization factors.

\begin{table}[tb]
	\centering
	\caption{Results from the literature for the Wigner formalism 
		of $N$ spins with spin number $J$. 
		Explicit references are given for
		the Wigner transformation $\wigner(A)$, the star product $f \star g$, and the equation
		of motion $\partial W_\rho / \partial t$.
		Cases not considered in the literature are left blank.\label{ResultInLiterature}}
\begin{minipage}{\textwidth}
	\begin{center}
	\begin{tabular}{@{\hspace{23mm}} l @{\hspace{16mm}} c @{\hspace{16mm}} c @{\hspace{23mm}} } 
		\\[-2mm]
		\hline\hline
		\\[-3mm]
		$N$   & Descr. & Arb. $J$ (incl.\ $J=1/2$)
		\\[1mm]  \hline \\[-2mm]
		$N=1$
		& $\wigner(A)$ &   Eq.~(2.14) in \cite{VGB89}
		\\  &&   + Eq.~(3.29) in \cite{Brif98} 
		\\[1mm]
		&  $f \star g$	 &  \cite{StarProd}   \\[1mm]  
		&   $\partial W_\rho / \partial t$ & 
		\footnote{Equation~(61) in \cite{StarProd} states the equation of motion for the limit of $J \rightarrow \infty$
			using only the Poisson bracket.}
		\footnote{Equation~(5.13) in \cite{VGB89} provides the equation of motion 
			of a spin $J$ for linear Hamiltonians using only the Poisson bracket.}
		\\[4mm] 
		$N=2$
		& $\wigner(A)$ & Eq.~(30) in \cite{klimov2005classical} 
		\\[1mm]
		&$f \star g$	     \\[1mm]  
		& $\partial W_\rho / \partial t$  
		& \footnote{The semiclassical equation of motion (for $J\gg1$) is computed for 
			a particular Hamiltonian ($\chi I_{1z}I_{2z}$) in Eq.~(34) of \cite{klimov2005classical}
			and conforms with our results shown in Sec.~\ref{twospinwignereval}.}
		\\[4mm]
		Arb. $N$ 
		& $\wigner(A)$ & 
		\footnote{
			Phase-space representations are given in terms of the so-called displaced parity 
			operator for a single spin $J$ (see Eq.~(8) in \cite{tilma2016})
			and for $N$ coupled spins $1/2$ (see Eq.~(9) in  \cite{tilma2016}).
			However, our Result~\ref{result1} for coupled spins-$J$ differs from the
			approach of \cite{tilma2016}: we view their Wigner function
			as a linearly shifted Q function $a Q_\rho - b$
			(for $J>1/2$), which also relaxes the Stratonovich postulates (iiia) and (iiib)
			from \ref{stratpostulates}.
		}
		\\[1mm]  \hline \hline  \\[-2mm]
	\end{tabular}
	\end{center}
\end{minipage}
\end{table}

Parallel to our work, a general approach for phase-space representations
was proposed in \cite{tilma2016} which is based on the
so-called displaced parity operator \cite{bishop1994}. The explicit form of the transformation
kernel is computed for the special cases of a single spin $J$ (see Eq.~(8) in \cite{tilma2016})  
and for $N$ coupled spins $1/2$ (see Eq.~(9) in  \cite{tilma2016}).
This also mostly conforms with our results on spin Wigner functions
and fulfills the covariance property under local operations.
However, our results differ from the
approach of \cite{tilma2016} since we view their Wigner function
as a linearly shifted Q function $a Q_\rho - b$
(for $J>1/2$), which also relaxes the Stratonovich postulates (iiia) and (iiib)
from \ref{stratpostulates}.

Complementing the star-product formalism in the infinite-dimensional case,
V{\'a}rilly and Gracia-Bond{\'i}a \cite{VGB89} discuss both the integral and differential form of a (twisted) star
product of Wigner representations in finite dimensions.
Carrying out explicit calculations for particular Hamiltonians (containing only $I_x$, $I_y$, $I_z$ \cite{EBW87}) 
they conclude that in this case a stronger
version of the Ehrenfest theorem holds for the equation of motion.
Namely the time evolution is exactly given by the Poisson bracket known from classical physics.

Klimov and Espinoza \cite{StarProd} determined an exact form of the differential star product for an arbitrary spin number
$J$. This star product is a sum of a pointwise product of two functions and combinations of derivatives with 
respect to spherical coordinates.
The method also requires a rank-$j$-dependent correction in the spherical-harmonics  decomposition
which defines the Wigner function, as well as the truncation of the maximal rank $j$. 
For the  restriction of their expression to a spin number of $1/2$, the calculation of the
star product is more complicated then in the current paper (details for generalizing 
our approach to an arbitrary spin number $J$
will be discussed elsewhere),
however, the derived equation of motion results in the Poisson bracket \cite{klimov2002ExactEvolution},
just as in \cite{VGB89} and in the current paper.
Similarly, the results of \cite{Gratus97} contain spherical functions in a particular limit 
in which the star commutator is given by the Poisson bracket.
For the case of a global $\SU(2)$-dynamical symmetry, 
a Wigner representation
and its corresponding star product 
was developed in
\cite{klimov2008generalized} along the lines of  \cite{StarProd},
leading to a representation which is not unique in the general case of coupled spins
(without global $\SU(2)$ symmetry).
See Table~\ref{ResultInLiterature} for a summary of 
results known in the literature.

\subsection{Discrete Wigner functions}
Several approaches \cite{wootters1987,leonhardt1996,gibbons2004,ferrie2009} 
exist to construct a discrete analog of  Wigner functions
for finite-dimensional quantum systems (see Table~1 in \cite{ferrie2009}).
For example, Wootters \cite{wootters1987} proposed
a discrete Wigner function by introducing
a discrete phase space on a discrete square grid of 
$p \times p$ points for
each Hilbert space of prime dimension $p$.
For composite systems such as coupled spins,  the 
Hilbert space is composed
of subsystems of prime dimension and the corresponding
discrete phase space contains
a Cartesian product of discrete square grids of prime dimension.
The Wigner function is finally defined over this grid and forms a flat,
but discretized analogue of the continuous classical phase space. 
The negativity of discrete and general Wigner functions will be discussed
in the conclusion (see Sec.~\ref{conclusion}).

As discrete Wigner functions are not discussed in the main text,
we shortly contrast them with our approach of finite-dimensional 
(continuous)
Wigner
functions for coupled quantum systems. 
Building on the work of Stratonovich \cite{stratonovich}, we 
obtain a spherical phase space which features 
continuous spherical functions as Wigner functions.
In contrast,
discrete Wigner functions on a square grid exhibit a considerable different 
geometry. In particular, the continuous degrees of freedom of our finite-dimensional Wigner representation
allow us to describe the time evolution in terms of partial derivatives of Wigner functions leading
to a differential form of the star product as an analog to the infinite-dimensional case
in Eq.~\eqref{Moyal_infinite}. This differential picture is not entirely natural for discrete
Wigner functions, and therefore integral forms of the time evolution are usually considered in the discrete case
(cf.\ \cite{marchiolli2013}).

\subsection{Visualization techniques for spins\label{intro_visual}}

There are numerous approaches for visualizing finite-dimensional quantum systems.
Feynman \emph{et al.}~\cite{Feynman_Vernon_57} represents 
operators in a two-level quantum system using three-dimensional (real)
vectors which can be interpreted as a Bloch vector,  field vector, or 
rotation vector. This representation is widely used in many fields, including
magnetic
resonance imaging \cite{handbook_04,EBW87} and spectroscopy \cite{EBW87} or quantum
optics \cite{SchleichBook}.

As in the present work, spin operators (as tensor operators \cite{Racah42})
have often been represented and visualized by spherical harmonics \cite{Jac99}.
In early work by Pines \emph{et al.}~\cite{Pines}, selected density operator
terms of a spin-$1$ particle are illustrated using spherical harmonics, see also 
\cite{HalsMNMRIX,SH91}. Quantum states of a collection of indistinguishable two-level atoms
are depicted by Wigner functions in Dowling \emph{et al.}~\cite{DowlingAgarwalSchleich}.
We also refer to similar illustrations in \cite{JHKS}.
Single-spin systems are visualized in \cite{PhilpKuchel} using spherical harmonics
while stressing applications in nuclear-magnetic resonance. The appendix of
\cite{PhilpKuchel} also discusses whether their method could 
be extended to coupled spins. Certain restricted cases 
of two coupled spins were considered in \cite{MJD}.
Harland \emph{et al.}~\cite{Harland} introduces a method for visualizing particular states
in two- and three-spin systems.

A general method for representing and visualizing 
arbitrary operators of coupled spin 
systems was proposed in \cite{DROPS}: 
This so-called DROPS representation establishes a bijective mapping between
operators and spherical functions by mapping 
tensor operators to spherical harmonics. Important features
as symmetries under simultaneous rotations or certain permutations and  
the set of involved spins are preserved and highlighted in its pictorial presentation.
We discuss relations 
to our visualization method in \ref{dropsappendix}.

The theoretical methods used in \cite{tilma2016} (as discussed in
Sec.~\ref{prior}) also yield a visualization technique for
finite-dimensional coupled quantum systems, which 
is covariant under rotations as in this work. For single spins,
their approach (see Fig.~1(a)-(c) in \cite{tilma2016})
is comparable to \cite{DROPS} and this work. However
for coupled spins, \cite{tilma2016} depicts only 
slices of their high-dimensional Wigner functions
(see Figure~1(d)-(f) and Figure~2 in \cite{tilma2016}).
In our representation, high-dimensional
Wigner functions are visualized by decomposing
them into sums of  product operators.

\subsection{Summary of results\label{summary_res}}

We will now summarize and discuss our results
for finite-dimensional coupled quantum systems, while
emphasizing 
the mathematical and theoretical advances contained in Sec.~\ref{theorysection}.
The current work systematically develops a generalized Wigner formalism
for finite-dimensional coupled quantum systems.
Most importantly, we solve the open question of how to compute the time evolution 
of coupled quantum systems
using a consistent Wigner frame.

It is our goal to describe
the time evolution of these coupled systems only using Wigner functions and not relying
on operators or matrices. Wigner functions of coupled quantum systems
can be uniquely characterized using multiple spherical coordinates
$\theta_k$ and $\phi_k$. This leads to intricate, high-dimensional functions
which are difficult to visualize. We resolve this difficulty and present an approach 
that decomposes a high-dimensional Wigner function
into a linear combination of products of spherical harmonics, which can be conveniently 
visualized while highlighting crucial properties of coupled quantum systems.
We denote our approach by the abbreviation PROPS which is assembled
from the letters of ``product operators.'' We emphasize that a given high-dimensional
Wigner function has usually multiple different but equivalent PROPS representations.

Even though the visualization in the PROPS representation might require
in general exponentially many terms 
as the number $N$ of spins grows, the Wigner function is still uniquely characterized
by a single $2N$-variate function (see Result~\ref{result1}). 
This necessary exponential scaling might limit plotting to a moderate
number of spins. However, visualizing
the dynamics of even a moderate number of spins
is useful for many active areas of research and education,
such as quantum information
processing \cite{NC00} or magnetic resonance \cite{EBW87}.
This visualization technique will allow us, for example, to explore the underlying mechanisms
of efficient experimental control schemes \cite{Roadmap2015}.
We want to also stress that the potential plotting inefficiencies do not 
affect our main theory as presented in Sec.~\ref{theorysection} as it 
directly operates on the defining $2N$-variate function (see Result~\ref{result1}).
For a single spin-$1/2$ system, our 
Wigner functions are similar to
the Bloch vector (cf.\ Sec.~\ref{intro_visual}).
But even in this simple case, our Wigner approach is more expressive
and
allows for a natural representation and visualization of
non-hermitian operators 
as given by coherence order operators (see Fig.~\ref{NonHermitian} below)
which cannot be represented using a single Bloch vector. 
For coupled spins, our Wigner representation
can be compared to a collection of Bloch vectors
for the special cases shown in Figs.~\ref{Figure4} and \ref{Figure5} below.
However, our Wigner representation goes well beyond a simple collection
of Bloch vectors as the number of elements in the linear decomposition
for the PROPS representation is in general not constant
(see Figs.~\ref{Figure2} and \ref{Figure8} below).

Our characterization of the time evolution leads 
to a self-contained theory  of finite-dimensional
quantum systems, while we focus in this work 
mainly on coupled spins $1/2$.
The determination of the correct star product for coupled spins $1/2$
constitutes the cornerstone of our approach for providing
a replacement of the von-Neumann equation applicable to Wigner functions. 
The explicit equation of motion is calculated for an arbitrary number of coupled 
spins $1/2$ in Result~\ref{result4} and discussed separately for the
particular case of natural coupling Hamiltonians in Corollary~\ref{result5}.
Refer to Table~\ref{SummaryofResultsTable} for an overview of 
the results presented in the
current work. Further details for a single spin~$J$ and coupled spins
are respectively summarized 
in following Sections \ref{coreres1} and \ref{coreres2}.

\begin{table}[tb]
	\centering
	\caption{Results presented in this work for the Wigner 
		formalism of $N$ coupled spins with spin number $J$.
		Explicit references 
		are given for 
		the Wigner transformation $\wigner(A)$, the star product $f \star g$, and the equation
		of motion $\partial W_\rho / \partial t$.
		Cases not considered here are left blank.}
	\label{SummaryofResultsTable}
\begin{minipage}{\textwidth}
\begin{center}
	\begin{tabular}{@{\hspace{15mm}} l @{\hspace{16mm}} c @{\hspace{16mm}} l @{\hspace{15mm}} l @{\hspace{16mm}}} 
		\\[-2mm]
		\hline\hline
		\\[-3mm]
		$N$   & Descr. & $J=1/2$ & Arb. $J$ 
		\\[1mm]  \hline \\[-2mm]
		$N=1$
		&  $\wigner(A)$ 
		& &Eq.~\eqref{WignerTransform}    \\[1mm]
		&  $f \star g$	
		& Result~\ref{result2} 
		& \doesntwork  \doesntwork  \doesntwork  \\  [1mm]
		& $\partial W_\rho / \partial t$	  
		& Eq.~\eqref{NeumannWignerEvolution} &
		\footnote{ The case of linear Hamiltonians is considered in Eq.~\eqref{evolutionArbitraryJ}.}
		\\[3mm] 
		$N=2$
		& $\wigner(A)$
		& & Sec.~\ref{TwoSpinSummary}
		    \\[1mm]
		&  $f \star g$	  
		& Corollary~\ref{twospinstarprodcorollary} 
		& \doesntwork  \doesntwork  \doesntwork  \\[1mm]  
		& $\partial W_\rho / \partial t$	  
		&  Corollary~\ref{twospinEQMcorollary} &     \\[3mm]  
		$N=3$
		&  $\wigner(A)$ 
		& & Sec.~\ref{threespinsummary}        \\[1mm]
		&  $f \star g$	          
		& Corollary~\ref{threespinstarprodcorollary}
		& \doesntwork  \doesntwork  \doesntwork  \\[1mm]  
		&  $\partial W_\rho / \partial t$				
		&  Corollary~\ref{threespinEQMcorollary}\footnote{\label{foo} A simplified form for natural 
		Hamiltonians with linear and bilinear terms
		is given in Corollary~\ref{result5}.} 
		&   \\[3mm] 
		Arb. $N$
		& $\wigner(A)$ 
		& & 	Result~\ref{result1}
		\\[1mm]
		&  $f \star g$			
		&  Result~\ref{result3}
		& \doesntwork  \doesntwork  \doesntwork \\[1mm]
		&  $\partial W_\rho / \partial t$ & 
		Result~\ref{result4}\textsuperscript{\ref{foo}}
		\doesntwork &    \doesntwork \doesntwork 
		\\[1mm]  \hline \hline
	\end{tabular}
	\end{center}
\end{minipage}
\end{table}

\subsubsection{Results for a single spin \texorpdfstring{$J$}{J} \label{coreres1}}
A single-spin operator $A$ is mapped to a Wigner function $W_A(\theta, \phi)$
by decomposing $A$ into a linear combination of tensor
operators which can be directly mapped to spherical harmonics
(see Sec.~\ref{AppendixWignerRepr}), and  the corresponding Wigner transformation is 
stated in Eq.~\eqref{WignerTransform}. For specifying
the time evolution of Wigner functions, one needs to introduce the star product
$W_A \star W_B$
of Wigner functions $W_A$ and  $W_B$
which is defined by its relation to the Wigner function 
$W_{AB}=W_A \star W_B$
of a product of operators $A$ and $B$
(see Sec.~\ref{subsecStarProdDef}). 
There are two approaches to compute the star product.
The first approach is known as the integral star product
and relies on an integral transformation of the functions $W_A$ and $W_B$
using a so-called trikernel \cite{VGB89}, and we detail this form in \ref{IntegralStarProd}
for a single spin $J$.

The second approach features a differential form which
is more convenient for computations.
This differential star product relies on the partial derivatives of the corresponding Wigner
functions $W_A$ and $W_B$, which is comparable to
the infinite-dimensional case discussed
in Sec.~\ref{introtowignerf}.
We calculate this new differential form of
the exact star product for a single spin $1/2$ in Result~\ref{result2}: it is a sum 
of a pointwise product $W_A W_B$  and the Poisson bracket $\{ W_A,W_B \}$
followed by a projection.
This form was not reported previously in the literature,
and provides a simplified approach as compared to the results of \cite{StarProd}
while its structure is similar to the structure of the infinite-dimensional star product.
We also derive
an algebraic expansion formula for  the star product of spherical harmonics
in general (see Sec.~\ref{deepformulas}), paving the way for a generalization to an arbitrary 
spin number $J$.
The explicit form of the star product determines the equation of motion
for a single spin $1/2$ and we obtain a particularly simple form
given by the Poisson bracket [see Eq.~\eqref{NeumannWignerEvolution}].

We also point out connections to similar characterizations.
The Poisson bracket 
is directly related to the canonical angular momentum operator
${\mathcal{L}}={r} \times {p}$ which
generates infinitesimal rotations of the sphere
and is known from infinite-dimensional quantum mechanics
(see  Sec.~\ref{connangmom}). 
Even though spins have no classical counterparts,
a classical description emerges from the quantum one
in the limit of $J\rightarrow \infty
$ as detailed in Sec.~\ref{InfDimConnection}.
This leads to a localized distribution
and arbitrary large values in the Wigner function.
Relations to 
Wigner functions of infinite-dimensional quantum systems
are investigated in Sec.~\ref{InfDimConnection}:
the flat phase-space coordinates $(p,q)$ known from classical mechanics
are replaced with spherical phase-space coordinates  $(R\cos\theta,\phi)$
for spins (see Section~\ref{connectionPScordinates}),
and the resulting equation of motion given in 
Sec.~\ref{connectionComutators} is formally equivalent to
the one obtained in the main text [see Eq.~\eqref{NeumannWignerEvolution}].
The star product for a single spin $1/2$ given in the main text can also be interpreted as
a quaternionic product (see Sec.~\ref{Quaternions}).

\subsubsection{Results for coupled spins \label{coreres2}}
The Wigner representation is generalized in Result~\ref{result1}
to an arbitrary number of coupled spins $J$ by extending the formula
for the Wigner transformation of product operators.
We consider Wigner functions for coupled spins of identical spin number $J$, but a generalization
to systems that are composed of particles of different spin number $J$ is straightforward. 
The Wigner functions for $N$ coupled spins 
are defined on a spherical phase space of $N$ spheres and 
have $2N$ coordinates of the form
$R\cos\theta_k$ and $\phi_k$. This setup satisfies the Stratonovich postulates 
of \ref{stratpostulatesmulti},
which includes the
covariance under arbitrary local rotations, and generalizes the covariance under simultaneous 
spin rotations in \cite{DROPS}.

The Wigner formalism for coupled spins $1/2$ is obtained
by extending the differential star product to multiple spins
in Result~\ref{result3}, and this 
also establishes the equation of motion,
which we computed in Result~\ref{result4}
for an arbitrary number of coupled spins $1/2$.
This allows us to describe the quantum properties
as corrections to the classical case which are given 
in a finite power-series expansion. Truncations 
to this expansion could be used to characterize 
a semi-classical approximation.
The equation of motion is then applied
in Sec.~\ref{summaryofresults}
in order to derive its simplified form
for up to three spins $1/2$. In Corollary~\ref{result5},
the simplified equation of motion for an arbitrary number of spins is explicitly given 
for the case of natural Hamiltonians
containing only linear and bilinear operators.

\subsection{Motivation}
Let us now further motivate our approach by highlighting 
its benefits as well as
crucial differences to other work
in the context of Wigner functions (and phase-space representations).
This discussion aims to clarify the choices made in this work.

The previous parts of this introduction have already emphasized 
our focus on coupled spin systems. In principle, their Wigner functions
could be defined by interpreting the coupled spin system as a single spin 
with a large enough spin number and applying the Wigner function techniques 
for single spins as in \cite{DowlingAgarwalSchleich}
(similarly as discussed in Sec.~\ref{prior} and \ref{coreres1}).
This would, however, neglect important features of coupled spin systems we want to \emph{stress}
in our approach such as symmetries under spin-local rotations or permutations 
of spins as well as spin-local properties of the quantum system (cf.\ p.~3 in \cite{DROPS}).
This distinguishability of spins is of crucial importance in, e.g., quantum information processing \cite{NC00}
where local control is often assumed.
These locality features are a focal point of our work and they critically depend on
describing the system in a suitably chosen basis which highlights the underlying
tensor-product structure.
In this regard, Result~\ref{result4} (see Sec.~\ref{app_coupled}) provides a 
novel perspective of expanding the time evolution into its parts according to their degree of nonlocality. 
Therefore, our results for the time evolution of Wigner functions
go well beyond the established theory 
for Wigner functions of single spins (see Sec.~\ref{prior}) and enable contributions 
into a significant new direction.
And this aim to highlight nonlocality properties is quite natural as one can infer, for example, from 
work on matrix product states in many-body physics (see, e.g., \cite{Scholl11,Orus14})
or, more generally, entanglement in quantum information (see, e.g., \cite{NC00,HHHH09,GT09,ES14}).

We want to also emphasize that---in the context of Wigner functions---this focus on coupled
spin systems (and their locality features) emerged only recently \cite{PhilpKuchel,MJD,Harland,DROPS,tilma2016}
(refer to Sec.~\ref{intro_visual}). The Wigner function for coupled spin systems
is defined as a unique $2N$-variate function (see Result~\ref{result1}).
Due to its high dimensionality, the Wigner function cannot be directly plotted
in three dimensions, and one would have to resort to plotting slices as discussed at the end of Sec.~\ref{intro_visual}.
This has motivated us to depict a Wigner function of $N$ coupled spins
using three-dimensional figures (without loss of information)
which are denoted as the PROPS representation (see Sec.~\ref{summary_res} and \ref{Examplessection})
and show decompositions into tensor-product operators.
For a moderate number of coupled spins,
our results can therefore be used as an analytic tool for characterizing the time evolution
in application areas such as quantum information processing \cite{NC00}, magnetic resonance \cite{EBW87}, 
or quantum control \cite{Roadmap2015}. 
We want to emphasize that our plotting choice of using the PROPS representation does not affect the theory
in Sec.~\ref{theorysection} as it directly operates on the $2N$-variate function defining the Wigner function.
All relevant operators in Sec.~\ref{theorysection}, including the star product, act linearly on its arguments resulting
in completely natural PROPS plots. In addition, the PROPS representation stresses---as intended---important nonlocality features
of the depicted Wigner function.

We recall that bosonic quantum systems (and similarly fermionic ones)
demonstrate different characteristics as compared to coupled spin systems with distinguishable spins. Foremost, 
the dimensionality of a bosonic quantum system is polynomial in the number of particles while 
the dimensionality of a coupled spin system is exponential in the number of spins. Also, due its permutation symmetry,
a bosonic system does not exhibit any localized properties and can be therefore
(for a fixed number of particles) \emph{naturally} 
embedded into a single spin with a large enough spin number
(see \cite{Dicke1954,stockton2003,toth2010,lucke2014,KZG17,sakurai1995modern,schwinger65}).
As discussed above, the same does not apply to general coupled spin systems and one needs to be cautious
in extending intuition from bosonic quantum systems to coupled spin systems considered in this work.

Finally, we want to clarify that this work does \emph{not} provide any progress on
reducing the complexity of simulating the time evolution of coupled spin systems.
Long-standing complexity hypotheses suggest that simulating the time evolution
of a coupled spin system with a classical computer
should (in general) have an exponential complexity in the number of spins
\cite{NC00}. We believe that this applies to both matrix methods and Wigner-function techniques.

\subsection{Structure of this work}

Our work is structured as follows: We start in Sec.~\ref{Examplessection}
with a set of introductory examples which establish and illustrate
the main ideas of our Wigner formalism for spins.
The theoretical methods 
that form the central results of this work
are detailed in Sec.~\ref{theorysection}
where the Wigner transformation of coupled 
spin operators and their star product are developed;
later parts of this work can be read first
as they do not explicitly depend on the detailed argument
contained in Sec.~\ref{theorysection}.
In Sec.~\ref{advancedexamples},
we apply our methods to advanced examples
further exploring the Wigner formalism in the case of two and three coupled spins 
and also considering
the creation of entanglement.
We discuss connections 
to other important concepts 
in Sec.~\ref{auxmaterial}
which includes
the quantum angular momentum, the infinite-dimensional Wigner formalism, 
quaternionic Wigner functions,
and the evolution of non-hermitian states.	
We conclude in Sec.~\ref{conclusion}, and certain details are deferred to appendices.

\section{Introductory examples\label{Examplessection}}
Our approach to directly determine the time evolution of a quantum system 
using Wigner functions is now illustrated with concrete examples, while the corresponding
theory is detailed in Sec.~\ref{theorysection} below. We start in Sec.~\ref{Ex_one}
with the case of a single spin $1/2$ and juxtapose the well-known matrix method
with our Wigner function approach.
We also analyze the case of 
two coupled spins $1/2$ (see Sec.~\ref{Ex_two}) and consider in particular the
time evolution under a scalar coupling. 

More advanced examples are deferred to Sec.~\ref{advancedexamples}
considering the evolution of two coupled spins $1/2$ under the CNOT gate (see Sec.~\ref{Ex_CNOT}),
and the evolution of three coupled spins $1/2$ (see Sec.~\ref{Ex_three}).
Recall that in all these cases the von-Neumann equation (see Eq.~\ref{NeumannEq})
determines the time evolution of the density operator by specifying its time derivative.

\subsection{Time evolution of a single spin\label{Ex_one}}

\subsubsection{Evolution of the density operator}
A simple example is presented
which considers the precession of a single spin 1/2 in an external
magnetic field.
Here, explicit matrices are used, and these
are decomposed into irreducible tensor operators.
In Sec.~\ref{Ex_one_Wigner}, the time evolution is then computed directly in 
the Wigner representation. 
Recall the irreducible tensor operators~\cite{EBW87}
\begin{subequations}
	\label{TensorOperatorMatricesALL}
	\begin{align} \label{TensorOperatorMatrices}
	&\T_{00}:= \tfrac{1}{\sqrt{2}} \unity_2 =
	\tfrac{1}{\sqrt{2}}
	\begin{pmatrix}
	1 & 0\\
	0 & 1
	\end{pmatrix}, \quad
	\T_{1,-1}:= I_- =
	\begin{pmatrix}
	0 & 0\\
	1 & 0
	\end{pmatrix},\\
	&\T_{10}:= \sqrt{2} I_z =
	\tfrac{1}{\sqrt{2}}
	\begin{pmatrix}
	1 & 0\\
	0 & -1
	\end{pmatrix}, \quad
	\T_{11}:= -I_+ =
	\begin{pmatrix}
	0 & -1\\
	0 & 0
	\end{pmatrix}
	\label{TensorOperatorMatrices2}
	\end{align}
\end{subequations}
for the case of a single spin $1/2$.
For arbitrary spin number $J$, the definition of tensor 
operators $\Tj_{jm}$ is based on their commutation relations 
\begin{equation}
	\label{TOpDefALL}
	[ ^{J}I_z , \Tj_{jm} ] =  m \: \Tj_{jm} \; \text{ and }\;
	[ ^{J}I_{\pm} , \Tj_{jm} ] =  \sqrt{\left( j \mp m \right)\left( j \pm m + 1  \right)} \:    \Tj_{j, m\pm1},
\end{equation}
as described by Racah \cite{Racah42},
where $^{J}\Jspace I_z$  and $^{J}\Jspace I_{\pm}= {}^{J} \Jspace I_x \pm  i\, {}^{J} \Jspace I_y$ 
are representations of arbitrary
spin-$J$  operators;\footnote{
	As usual, the Cartesian spin operators are defined as 
	$I_x=\sigma_x/2$, $I_y=\sigma_y/2$, and $I_z=\sigma_z/2$, where
	the Pauli matrices are
	$
	\sigma_x=\left(
	\begin{smallmatrix}
	0 & 1\\
	1 & 0
	\end{smallmatrix}
	\right)
	$,
	$
	\sigma_y=\left(
	\begin{smallmatrix}
	0 & -i\\
	i & 0
	\end{smallmatrix}
	\right)
	$, and
	$
	\sigma_z=\left(
	\begin{smallmatrix}
	1 & 0\\
	0 & -1
	\end{smallmatrix}
	\right)
	$.
}	
the  index $J$  is dropped in the spin-$1/2$ case.

An arbitrary spin-1/2 density matrix can be written as
$\rho = r_0 \, \unity_2 + \sum_\alpha r_\alpha I_\alpha$,
with $\alpha \in \{ x, y, z \}$.
Even though our Wigner representation is completely general and applicable 
to arbitrary density matrices and operators,
we omit the identity part $r_0 \, \unity_2 $
in some of the following examples
without affecting the time evolution and continue our discussion considering only the second term
$\sum_\alpha r_\alpha I_\alpha$.
This term 
is usually referred to as the deviation density matrix in quantum information processing 
(cf.\ Eq.~7.166 on p.~336 in \cite{NC00} or Eq.~2.5.13 on p.~47 in \cite{oliveira2011})
or as partial density matrix in magnetic resonance
(cf.\ Eq.~6 in \cite{farrar1990}, Eq.~2.125 on p.~55  in \cite{cavanagh1995}, or p.~243 in \cite{abragam1961}).
Although this simplification is also valid for individual quantum systems
(consisting of one or more coupled spins),
it is especially useful when considering thermal ensemble states 
for sufficiently large temperatures, i.e.\ for $r_0 \gg \sqrt{\sum_\alpha r_\alpha^2}$.

In our example, a rotation around the $z$ 
axis with an angular frequency $\w$ is generated
by the Hamiltonian
\begin{equation}\label{H_single}
\mathcal{H}=\w I_z = \w \tfrac{1}{\sqrt{2}}\T_{10},
\end{equation}
and the quantum state of a single spin at time $t=0$ is chosen
as the traceless deviation density matrix
\begin{equation}\label{rho_single}
\rho(0)=I_x 
= \tfrac{1}{2} \T_{1,-1} -  \tfrac{1}{2} \T_{11}.
\end{equation}
The time evolution is described by the von-Neumann equation, see Eq.~\eqref{NeumannEq},
and the first time derivative 
$i \partial \rho(0) / \partial t$ is determined by the commutator
\begin{equation*}
[ \mathcal{H} ,  \rho(0)  ] 
=  \tfrac{\w}{2\sqrt{2}}  [  \T_{10}  \, , \T_{1,-1}  ]  
-  \tfrac{\w}{2\sqrt{2}}  [ \T_{10} \, , \T_{11}  ] 
= - \tfrac{\w}{2}  \T_{1,-1}
- \tfrac{\w}{2}  \T_{11} = i \omega I_y,
\end{equation*}
whose form can also be inferred from the definitions in Eq.~\eqref{TOpDefALL}.
The solution of this differential
equation results in\footnote{Similarly as for $\rho(0)$, the time derivative of $\rho(t)$ decomposes 
	into a linear combination of the tensor operators $\T_{1,-1}$ and $\T_{11}$.
	It follows that the general solution can be parameterized as 
	$ \rho(t) =a(t)\, \T_{1,-1} - b(t)\, \T_{11} $ with $a(0)=b(0)=1/2$.
	This formula is substituted back into the von-Neumann equation
	[see Eq.~\eqref{NeumannEq}] and yields ${\partial} [   a(t) \T_{1,-1}  - b(t) \T_{11} ] /{\partial t}
	=   i  \w   [   a(t)  \T_{1,-1} + b(t)  \T_{11}  ]$,
	which splits up into the equations
	${\partial} a(t) /{\partial t}= i \w a(t)$ and ${\partial} b(t)/{\partial t} = -i \w b(t)$.
	Consequently, the solution is given by $a(t)=\exp{(i \w t)}/2$ and $b(t)=\exp{(-i \w t)}/2$.}
\begin{equation*}
\rho(t) =\tfrac{1}{2} e^{i \w t} \T_{1,-1} -  \tfrac{1}{2} e^{-i \w t} \T_{11}
= \cos(\w t)  I_x + \sin(\w t)  I_y.
\end{equation*}

\addtocounter{footnote}{1}
\footnotetext[\value{footnote}]{Hermitian operators
	result only in positive and negative values depicted
	as red (dark gray) and green (light gray).}
\newcounter{colors}
\setcounter{colors}{\value{footnote}}

\begin{figure}[tb]
	\centering
	\includegraphics{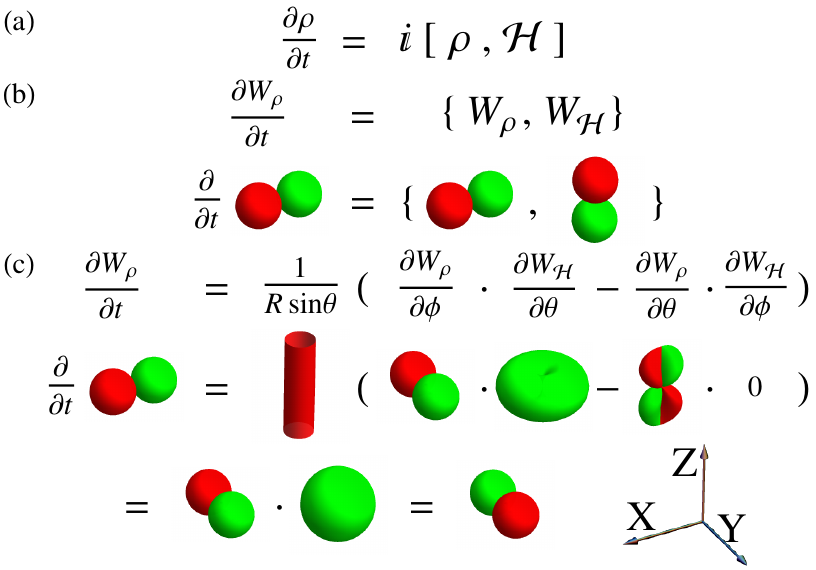}
	\caption{\label{graphicalPB}(Color online) (a) The equation of motion in the Hilbert space given by
		the von-Neumann equation.
		(b) The equation of motion of a single spin 1/2 in Winger space
		(upper row) and its graphical representation
		(lower row) in the particular case of $\mathcal{H}=\w I_z$
		with $\w=1$ and $\rho=I_x$.
		(c) Graphical representation of the Poisson bracket from (b).
		The spherical functions corresponding to the partial derivatives
		of the Wigner function $W_\rho=R \sin{\theta} \cos\phi$
		and the Hamiltonian $W_{\mathcal{H}}=R \,  \cos\theta$ are shown and
		$1/(R \sin{\theta})$ is visualized as an infinitely long
		cylinder.\footnoteref{\thecolors}
	}
\end{figure}

\subsubsection{Evolution of the Wigner functions\label{Ex_one_Wigner}}
Mirroring the preceding discussion in terms of matrices, 
the Hamiltonian $\mathcal{H}$ and
the traceless deviation density matrix $\rho(0)$   
from Eqs.~\eqref{H_single}--\eqref{rho_single}
are mapped to their Wigner functions
\begin{equation} \label{InitSingleSpin}
W_{\mathcal{H}}(\theta, \phi) = \tfrac{\w} {\sqrt{2}} \Y_{10}= \w  \, R   \cos\theta
\; \text{ and } \;
W_{\rho}(\theta, \phi,t=0) = 
\tfrac{1}{2} (  \Y_{1,-1} - \Y_{11}  ) = R \sin{\theta} \cos\phi 
\end{equation}
by replacing the tensor operators $\T_{jm}$ by the corresponding spherical harmonics
$\Y_{jm}=\Y_{jm}(\theta,\phi)$ \cite{Jac99}. This basic example conforms with
the general discussion in Sec.~\ref{AppendixWignerRepr} [see Eq.~\eqref{TensorOpSphHarm}];
note that  $R:=\sqrt{{3}/(8 \pi)}$. 

Here, the Wigner function $W_A (\theta,\phi) = |W_A(\theta,\phi)|\allowbreak \exp[i \eta (\theta,\phi)]$
of a single spin is visualized in the following way:
A surface is plotted whose surface element in the direction $(\theta,\phi)$ is at a
distance $|W_A(\theta,\phi)|$ from the origin.
The complex phase factor $ \exp[i \eta (\theta,\phi)]$ of the Wigner function is
represented by the color of its surface element.
This method visualizes spherical functions as three-dimensional shapes
(see Figs.~\ref{graphicalPB} and \ref{Figure1}).

The time evolution is governed 
by the shape of the appearing Wigner functions
via its angular derivatives.
The von-Neumann equation for a single spin $1/2$ translates in the Wigner 
representation to the equation (see Fig.~\ref{graphicalPB})
\begin{equation} \label{Equation_Motion}
\frac{  \partial W_{\rho}(\theta, \phi, t)  }{\partial t}  =  
\{       W_{\rho}(\theta, \phi,t)  \:  ,  \: W_{\mathcal{H}}(\theta, \phi)   \} 
 =  \frac{1}{R \sin\theta} \left(\frac{  \partial W_{\rho}  }{\partial \phi} 
\frac{  \partial W_{\mathcal{H}}  }{\partial \theta} 
- \frac{  \partial W_{\rho}  }{\partial \theta} 
\frac{  \partial W_{\mathcal{H}}  }{\partial \phi}\right).
\end{equation}
The Poisson bracket 
$\{W_{\rho}(\theta, \phi, t),\allowbreak{}W_{\mathcal{H}}(\theta, \phi) \}$ is further 
detailed in Sec.~\ref{singlespinstar}.
In our example, the time derivative at time $t=0$  is given by
\begin{figure}[tb]
	\centering
	\includegraphics{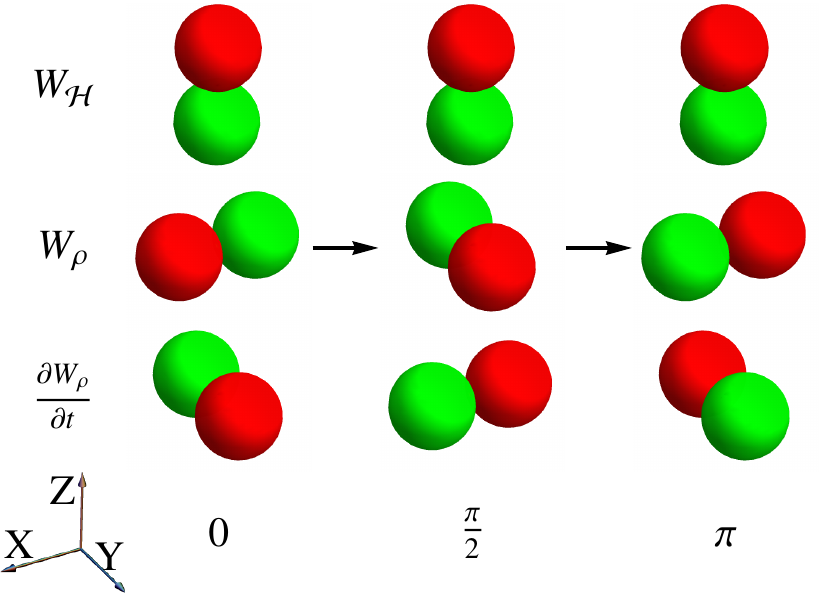}
	\caption{\label{Figure1}(Color online) Plots of the Wigner functions 
		$W_{\mathcal{H}}$  of the Hamiltonian $\mathcal{H}=\w I_z$ (upper row), 
		$W_{\rho}$  of the traceless deviation density matrix $\rho$ (middle row), and the time derivative 
		$\partial W_{\rho}(\theta, \phi, t)  / \partial t$ (lower row) corresponding to $\partial \rho / \partial t$ 
		at times $t=0$, $\w t = \pi /2$, and $\w t = \pi$.\footnoteref{\thecolors}
	}
\end{figure}
\begin{equation}
\frac{  \partial W_{\rho}(\theta, \phi, 0)  }{\partial t}
=  \w  R^2	\{   \sin\theta \cos\phi, \cos\theta    \} 
= \w 	R  \sin\theta \sin\phi. \label{OneSpinWignerDiffEquation}
\end{equation}
Refer also to Figure~\ref{graphicalPB} for a graphical representation of this particular example.
The solution of the differential equation in Eq.~\eqref{OneSpinWignerDiffEquation}
is given by\footnote{
	The Wigner function can be written as 
	$W_{\rho}(\theta, \phi,t)=R \sin\theta [ a(t)  \cos\phi +  b(t) \sin\phi   ]$ with $a(0)=1$ and $b(0)=0$. 
	Substituting this parametrization back into Eq.~\eqref{Equation_Motion}, 
	one derives the differential  equation 
	$ \sin\theta  [  (\partial a(t) / \partial t)  \cos\phi + ( \partial b(t) / \partial t) \sin\phi   ]
	= \w \sin\theta [  a(t)  \sin\phi  - b(t)  \cos\phi  ]$, which splits up into
	$\partial a(t)/\partial t=-\w b(t)$ and $\partial b(t)/\partial t=\w a(t)$.}
\begin{equation}
W_{\rho}(\theta, \phi,t)=R \sin\theta [ \cos( \w t)   \cos{\phi} + \sin( \w t) \sin\phi  ].
\end{equation}
Figure~\ref{Figure1} shows the Wigner function 
of the Hamiltonian and the density matrix evolving in time, including the time derivative
of $W_{\rho}(\theta, \phi,t)$. 
All operators depicted in Figure~\ref{Figure1} are hermitian 
and consequently only the colors red (dark gray) and green (light gray) appear,
which correspond to
positive and negative real values in their Wigner functions.
Note how the 
shapes govern the rotation of $W_{\rho}(\theta, \phi,t)$ around the $z$ axis.
The state of a spin 1/2 can be characterized by the Bloch vector
and the unitary
time evolution translates to rotations of this three-dimensional vector. The Wigner 
function provides a description similar to the Bloch vector (refer to Section~\ref{Quaternions})
and its time evolution, which is
supported by the Poisson bracket corresponds to the rotation of the Wigner function
(refer to Section~\ref{connangmom}). Non-hermitian parts of the density operator are relevant for
coherent spectroscopy \cite{Glaser1998} but cannot be represented by a single Bloch vector. The 
Wigner function, however, provides a natural way to represent, visualize, and predict the
time evolution of these non-hermitian operators.
An example describing the time evolution of non-hermitian spin states
is detailed in Section~\ref{singlespinnonhermitian}.

\subsection{Time evolution of two coupled spins\label{Ex_two}}
As discussed in Sec.~\ref{Ex_one}, the Wigner function 
of single spin-$1/2$ states
is similar to the Bloch-vector description, and
the unitary time evolution translates to the rotation of these
Wigner functions.
For multiple, coupled spins the underlying quantum dynamics becomes significantly more involved,
and the time evolution can in general not be fully characterized
in terms of rotations of Bloch vectors (see Fig.~\ref{Figure2} below).
In contrast, the quantum state can still be represented uniquely by a single Wigner function 
which is visualized by its PROPS representation
using a linear combination of products of spherical harmonics.
The corresponding time evolution is governed by a generalization of the Poisson bracket.
In the following example, we present one of the simplest examples where the Bloch vector picture
breaks down. Even though the initial quantum state is representable by a Bloch vector,
the time evolution creates a superposition of states (see Fig.~\ref{Figure2} below).

\subsubsection{Evolution of the density matrix\label{Ex_two_matrix}}
We consider now the time evolution for an example of two coupled spins $1/2$.
Similarly as in Sec.~\ref{Ex_one}, we first rely on explicit matrices, and our approach
using Wigner functions is detailed in Sec.~\ref{twospinwignereval} below.
In order to simplify and highlight the transformation to the Wigner space, 
we will subsequently differentiate between
tensor operators acting on different spins: 
The linear operators $\Ta_{jm}=\T_{jm} \otimes \T_{00}$ and $\Tb_{jm}=\T_{00} \otimes  \T_{jm}$ 
act respectively on the first and second spin, and they are constructed using a 
tensor product leading to four-by-four matrices.
A bilinear operator $\Ta_{j_1m_1} \Tb_{j_2m_2}$ 
acts on both spins and consists of a matrix product of single-spin operators.
Details on definitions and properties of  product operators
are deferred to Sec.~\ref{ProdOpFormalism}
(see Table~\ref{tensordef}) and a short summary is given in
\ref{EmbeddedOperatorsExplanation}.

Let us now consider a system of two coupled spins which evolve under 
the bilinear Hamiltonian 
\begin{equation}\label{Ham_IIB1}
\mathcal{H}= \pi \JC 2 I_{1z} I_{2z} = \pi \JC 2 \Ta_{10} \Tb_{10},
\end{equation}
which can arise from a heteronuclear scalar or dipolar coupling. Here, we have applied 
the notations $\Ta_{10}:=\T_{10} \otimes  \T_{00}
=(\T_{10} / \sqrt{2})  \otimes (\sqrt{2} \T_{00})=I_z\otimes \unity_2=I_{1z}$ and 
$\Tb_{10}:=\T_{00} \otimes  \T_{10}=I_{2z}$ (see Sec.~\ref{ProdOpFormalism}).
In addition, we specify the traceless deviation density matrix at time $t=0$ as
\begin{equation}\label{rho_IIB1}
\rho(0) =  I_{1x} = (\Ta_{1,-1}  -  \Ta_{11}) /\sqrt{2}.
\end{equation}
Equation~\eqref{NeumannEq} determines
the time differential $i \partial \rho(0) / \partial t$ 
as the commutator
\begin{equation*}
[ \mathcal{H} ,  \rho(0)  ] 
=  - \sqrt{2} \pi \JC\, 
(   \Ta_{1,-1}  {+}  \Ta_{11} ) \Tb_{10} 
=  i \pi \JC  2 I_{1y} I_{2z}.
\end{equation*}
One deduces that only the four tensor operators $\Ta_{11}$, $\Ta_{1,-1}$, $
\Ta_{11} \Tb_{10} $, and $ \Ta_{1,-1} \Tb_{10}$ can appear in the decomposition
of $\rho(t)$, as $\partial^2 \rho(0) / \partial t^2 \propto \rho(t)$.\footnote{ 
	Computing
	the second time derivative $-\partial^2 \rho(0) / \partial t^2$  
	via the double commutator $[ \mathcal{H} ,[ \mathcal{H} ,  \rho(0)  ] ]
	=\pi^2 \JC^2 \rho(0)$ and applying the formulas 
	$[  \Ta_{10} \Tb_{10},  \Ta_{1,-1} \Tb_{10}]\allowbreak{}=-\Ta_{1,-1}/4$
	and $[  \Ta_{10} \Tb_{10},  \Ta_{11}\Tb_{10}] = \Ta_{1,1}/4$, the result follows.}
Consequently, the time-dependent deviation density matrix can be written 
as 
\begin{gather} \label{ABdefinition}
\rho(t)  = a(t) A + b(t) B\text{, where}\\
A =\tfrac{1}{\sqrt{2}}(\Ta_{1,-1}  {-}  \Ta_{11} ) = I_{1x}, \quad \nonumber
B =\sqrt{2} ( \Ta_{1,-1}  {+}  \Ta_{11} ) \Tb_{10}  =  2 I_{1y} I_{2z},
\end{gather}
and  $a(0)=1$ and $b(0)=0$. 
Substituting this back into Eq.~\eqref{NeumannEq}, one obtains
the solution\footnote{
	The differential equation $\partial[a(t) A  + b(t) B]/\partial t
	=\allowbreak{}   \pi \JC\allowbreak{}  [a(t)  B -  b(t) A]$,
	decomposes into
	$\partial a(t)/\partial t=-\pi \JC  b(t)$
	and $\partial b(t)/\partial t= \pi \JC a(t)$.
	The solution follows from  $a(0)=1$, $b(0)=0$, 
	$a(t)=\cos( \pi \JC t )$, and
	$b(t)=\sin( \pi \JC t )$.}
\begin{equation}
\rho(t) = \cos( \pi \JC t )  I_{1x} + \sin( \pi \JC t)   2 I_{1y} I_{2z} .
\end{equation}
The detectable NMR signal  
is proportional to $\cos( \pi \JC t )$, and
one obtains a doublet spectrum with equal intensities and lines separated by $\JC$.

\begin{table}[tb]
	\centering 
	\caption{\label{CartesianWignerRepr} Wigner 
		representations of the identity as well as linear and bilinear Cartesian operators;
		$\lambda=R / \sqrt{2 \pi}^{ N-1 }$ with $R=\sqrt{3/(8\pi)}$ and  $a,b\in\{x,y,z\}$.
		The projection $I_{k \beta}$ onto the pure state $\ket{\beta}$ of a single spin 
		is discussed in Sec.~\ref{CNOT_evol}.}
	\begin{tabular}{@{\hspace{4mm}}l@{\hspace{8mm}}l@{\hspace{16mm}}l@{\hspace{8mm}}l@{\hspace{4mm}} }
		\\[-2mm]
		\hline\hline
		\\[-2mm]
		A  &  $\wigner(A)$ & A & $\wigner(A)$
		\\[1mm] 
		\hline 
		\\[-2mm]
		$\unity_{2^N}$  &   $1/\sqrt{2 \pi}^{N}$ &   $I_{k \beta}$
		&  $1/(2\sqrt{2 \pi}^{N})  - \lambda \cos \theta_k$  \\[1mm]
		$I_{kx}$     &    $\lambda \sin \theta_k \cos \phi_k$     
		&      $I_{kx}I_{\ell x}$  &  $ \sqrt{2 \pi}^N \lambda^2  \sin \theta_k \cos \phi_k \sin \theta_\ell \cos \phi_\ell $   
		\\[1mm] 
		$I_{ky}$  &   $\lambda \sin \theta_k \sin \phi_k$  
		&   $I_{kx}I_{\ell z}$  &  $ \sqrt{2 \pi}^N \lambda^2  \sin \theta_k \cos \phi_k \cos \theta_\ell $  
		\\[1mm] 
		$I_{kz}$  &   $\lambda \cos \theta_k $ &    $I_{k a}I_{\ell b}$    &    $\sqrt{2 \pi}^N \wigner(I_{k a}) \wigner(I_{\ell b}) $  
		\\[2mm] \hline \hline
	\end{tabular} 
\end{table}

\subsubsection{Evolution of Wigner functions\label{twospinwignereval}} 

We switch now to the Wigner picture and explain shortly how product operators 
in a two-spin system are represented as Wigner functions, while
details will be given in Sec.~\ref{AppendixWignerRepr} below.
Moreover, we translate the von-Neumann equation for two spins
into the Wigner picture.
This is then applied to
the example of Sec.~\ref{Ex_two_matrix}.

Wigner representations of operators acting on different spins are distinguished by 
different variables, thus an operator acting
on the first spin is transformed to its Wigner function by mapping the basis states $ \Ta_{jm}$ to their
corresponding spherical harmonics  
$\Ya_{jm} = \Y_{jm}(\theta_1,\phi_1) / \sqrt{4 \pi}$. Similarly, $ \Tb_{jm}$  
is mapped onto $\Yb_{jm} = \Y_{jm}(\theta_2,\phi_2) / \sqrt{4 \pi}$.
Product operators are constructed as simple pointwise products 
of their
Wigner functions and
$\Ta_{j_1m_1} \Tb_{j_2m_2}$ is mapped to the 
product 
$2 \pi \Ya_{j_1m_1} \Yb_{j_2m_2}= \Y_{j_1m_1}(\theta_1,\phi_1) \Y_{j_2,m_2}(\theta_2,\phi_2) /2$. 
Important examples are summarized in Table~\ref{CartesianWignerRepr}.
Suitable prefactors are introduced to ensure
consistent normalizations for matrix representations
and Wigner functions (see Sec.~\ref{AppendixWignerRepr}), and
the different normalization factors are also illustrated in Figure~\ref{GraphicalNormalization}.

The time evolution of the 
density matrix via the von-Neumann equation translates 
for Wigner functions of two spins
to the equation [see Sec.~\ref{TwoSpinSummary} and Corollary~\ref{twospinEQMcorollary}]
\begin{equation} \label{TwoSpinTimeDeriv}
\hspace{-1.4mm}\partial W_{\rho} / \partial t = 
\sqrt{2 \pi}   \Proj ^{\{ 1,2 \}}   
(  \{ W_{\rho}  , W_{\mathcal{H}}  \}^{\{ 1 \}}  {+}  \{ W_{\rho}   ,  W_{\mathcal{H}}  \}^{\{ 2 \}}  ).
\end{equation}
Here, the Poisson brackets from Eq.~\eqref{Equation_Motion} gain an additional index in order to identify
their spin dependence, i.e., the Poisson bracket $\{ f_a  , f_b  \}^{\{ 1 \}}$
contains derivatives with respect to the variables $\theta_1$ and $\phi_1$, 
while $\{ f_a  , f_b  \}^{\{ 2 \}}$ is defined with reference to $\theta_2$ and $\phi_2$.
As spherical harmonics with rank two or higher
are not allowed for spins $1/2$,
the projector $\Proj^{\{ 1,2 \}}$  removes these superfluous contributions, 
but leaves spherical harmonics $\Y_{jm}$
with rank  $j$ equal to zero or one unchanged. 
The Poisson brackets in Eq.~\eqref{TwoSpinTimeDeriv} can be simplified for
product operators $W_\rho=W_{\rho_1}(\theta_1,\phi_1)W_{\rho_2}(\theta_2,\phi_2)$ 
and $W_{\mathcal{H}}=W_{\mathcal{H}_1}(\theta_1,\phi_1)W_{\mathcal{H}_2}(\theta_2,\phi_2)$ 
into the form 
\begin{equation}
\partial W_{\rho} / \partial t = 
\sqrt{2 \pi}  \,   \{ W_{\rho_1}  , W_{\mathcal{H}_1}  \}^{\{ 1 \}}
\Proj ^{\{ 2 \}} (  W_{\rho_2}  W_{\mathcal{H}_2}   )    \label{ProdOpTimeEvolution}
+   \sqrt{2 \pi}  \,  \{ W_{\rho_2}  , W_{\mathcal{H}_2}  \}^{\{ 2 \}}
\Proj ^{\{ 1 \}} (  W_{\rho_1}  W_{\mathcal{H}_1}   ); 
\end{equation}
refer to Figure~\ref{GraphicalTimeEvolution}(a) for a visualization
of this computation.
In the PROPS representation,
product operators are indicated as overlapping circles 
(refer also to \ref{visualizationExplain}) and 
the overall Wigner function of a tensor product 
$W_{1,2}=W_1 W_2$ is given as a product of 
its parts. The corresponding Wigner functions $W_1$ is drawn in the left
circle and the Wigner functions $W_2$  in the right one. 

\begin{figure}[tb]
	\centering
	\includegraphics{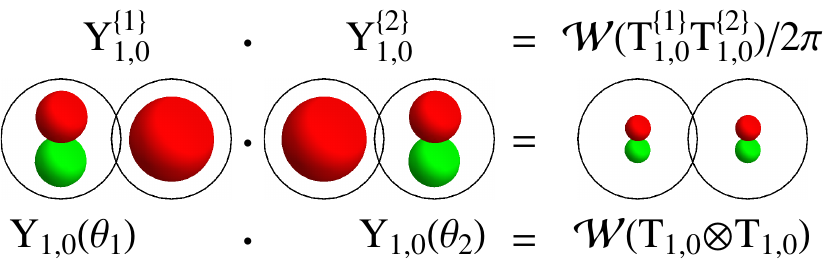}
	\caption{\label{GraphicalNormalization}
		(Color online)  The tensor
		product of the normalized $2\times2$ matrix $\T_{10}$ with itself results in a normalized $4\times4$ matrix 
		$\T_{10} \otimes \T_{10}$
		whose Wigner representation $\Y_{10}(\theta_1) \Y_{10}(\theta_2)$ 
		with $\Y_{10}(\theta_j)=\sqrt{2} R \cos\theta_j$
		is also normalized (lower row).
		While the $4\times4$ matrix $\Ta_{1,0}$ is normalized, the matrix product
		$\Ta_{10} \Tb_{10} = \T_{10} \otimes \T_{10}/2$ is not.  
		Similarly, 
		$\Ya_{10}=1/\sqrt{2\pi} R \cos \theta_1$ and $\Yb_{10}$ are normalized but their 
		pointwise product is not (middle and upper row). 
		Note that both the middle and the upper row is by a factor of $4\pi$ larger than the lower row.
		In summary,
		the norm changes in general
		when spherical functions are multiplied.\footnoteref{\thecolors}
	}
\end{figure}

Using aforementioned techniques,
the Hamiltonian and the deviation density matrix from Eqs.~\eqref{Ham_IIB1}-\eqref{rho_IIB1}
can be transformed into their respective Wigner function
\begin{gather} \label{InitTwoSpinH}
W_{\mathcal{H}} =4 \pi^2 \JC \, \Ya_{10} \Yb_{10}=\pi \JC 2 R^2 \cos{\theta_1} \cos{\theta_2},
\\ \label{InitTwoSpinR}
W_{\rho}(0) = 
( \Ya_{1,-1} -  \Ya_{1,1} ) /\sqrt{2}
=
R \sin{\theta_1} \cos{\phi_1}  /\sqrt{2 \pi}.
\end{gather}
The time derivative of the initial Wigner function 
$ W_{\rho}(0)$ is now given by Eq.~\eqref{TwoSpinTimeDeriv} and
it depends only on the variables $\theta_1, \phi_1$, consequently, the second
Poisson bracket $\{ W_{\rho}(0), W_{\mathcal{H}}    \}^{\{ 2 \}}$ is zero. 
One applies Eq.~\eqref{Equation_Motion}
and obtains up to projections
\begin{align}
\partial W_{\rho}(0) / \partial t 
& = \sqrt{2 \pi}  \{  W_{\rho}(0), W_{\mathcal{H}}    \}^{\{ 1 \}}   
 = \pi \JC 2 R^3   \{ \sin{\theta_1} \cos{\phi_1} , \cos{\theta_1}   \}^{\{ 1 \}} \cos{\theta_2}
\nonumber \\ 
& =   \pi \JC 2 R^3   
[ \sin{\theta_1}   \frac{\partial \cos{\phi_1}}{\partial \phi_1}     
\frac{\partial \cos{\theta_1}}{\partial \theta_1}]  \cos{\theta_2}   /(R \sin{\theta_1}) \nonumber \\
& =  \pi \JC 2 R^2 \cos{\theta_2} \sin{\theta_1}  \sin{\phi_1}  
=\pi \JC  \wigner (2  I_{1y} I_{2z} ).
\end{align}
Here, $\wigner (2  I_{1y} I_{2z} )$ denotes the Wigner transformation of $2  I_{1y} I_{2z} $, 
refer to Table~\ref{CartesianWignerRepr}.
For the graphical representation of this computation refer to Figure~\ref{GraphicalTimeEvolution}(a).
One deduces that the time derivative of $\wigner (2  I_{1y} I_{2z} )$
is up to projections  proportional to
\begin{figure}[tb]
	\centering
	\includegraphics{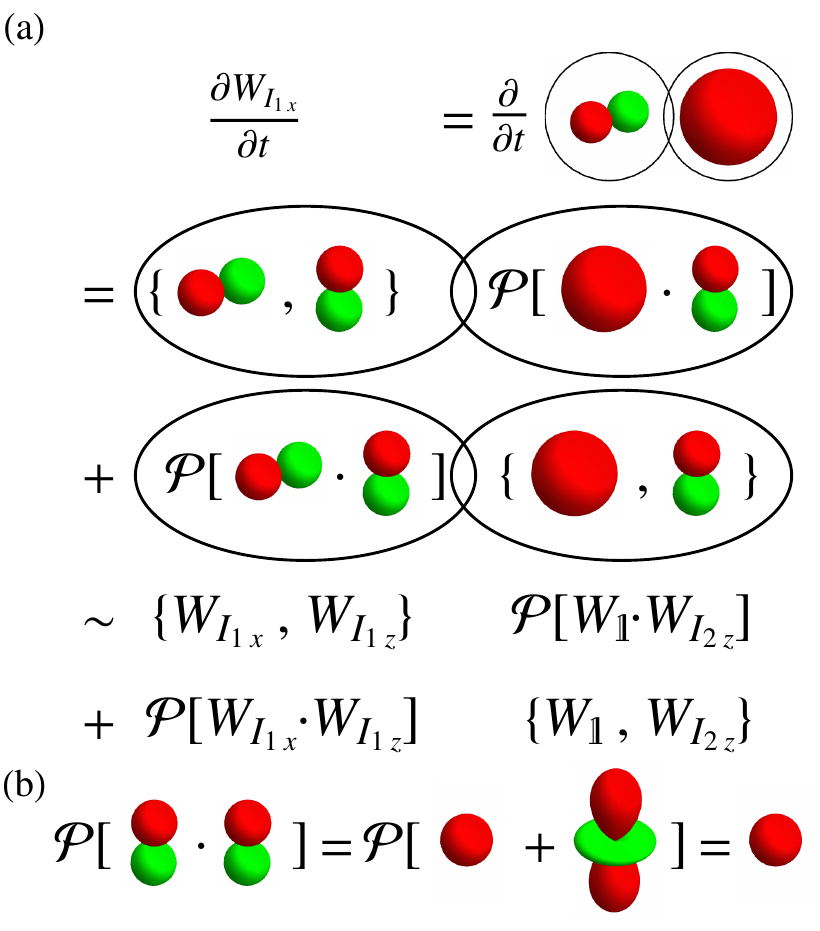}
	\caption{\label{GraphicalTimeEvolution}(Color online) (a) Graphical representation of the equation of motion
		for two spins 1/2 for the Hamiltonian $\mathcal{H}=\pi \JC 2 I_{1z} I_{2z}$ and the deviation density matrix
		$\rho=I_{1x}$, see Eqs.~\eqref{TwoSpinTimeDeriv}-\eqref{ProdOpTimeEvolution}. 
		(b) Graphical representation of the projection onto rank-one and rank-zero
		spherical harmonics for the example of $\Proj (\cos^2{\theta_2} )=1/3$ as
		discussed around Eq.~\eqref{decomp}.\footnoteref{\thecolors}
	}
\end{figure}
\begin{gather*}
\partial^2 W_{\rho}(0) / \partial t^2   \propto  \partial  
(\cos{\theta_2} \sin{\theta_1}  \sin{\phi_1} ) / \partial t 
\\
\propto  \{ \sin{\theta_1}  \sin{\phi_1} , \cos{\theta_1}    \}^{\{ 1 \}}  \cos{\theta_2}  \cos{\theta_2} 
+ \cos{\theta_1}  \sin{\theta_1}  \sin{\phi_1}  \{  \cos{\theta_2}  ,  \cos{\theta_2}   \}^{\{ 2 \}}.
\end{gather*}
Since $ \{ \cos{\theta_2}   , \cos{\theta_2}   \}^{\{ 2 \}} =0$ and $ \{ \sin{\theta_1}  \sin{\phi_1} , \cos{\theta_1}    \}^{\{ 1 \}}
=-\sin{\theta_1}\cos{\phi_1}\sin{\theta_1}/(R \sin{\theta_1})$, we obtain up to projections that
\begin{equation*}
\partial \wigner (2  I_{1y} I_{2z} )/ \partial t= 
- 2 \sqrt{3} R^2 \pi \JC \sin{\theta_1}  \cos{\phi_1}   \cos^2{\theta_2}.
\end{equation*}
It is however important to understand that the term 
\begin{equation}\label{decomp}
\cos^2{\theta_2}=[ \sqrt{4 \pi} \Y_{00}(\theta_2, \phi_2) 
+ 4 \sqrt{ \pi/ 5} \: \Y_{20}(\theta_2, \phi_2) ] /3
\end{equation}
linearly decomposes into
spherical harmonics of rank zero and two, as shown in Figure~\ref{GraphicalTimeEvolution}(b).
After applying the projector $\Proj^{\{ 2 \}}$ from Eq.~\eqref{ProdOpTimeEvolution}, 
only a term proportional to $\Y_{00}=1/\sqrt{4 \pi}$ remains;\footnote{
	The term $\cos^2{\theta_2}$ is proportional to $\wigner(I_{2z})\wigner(I_{2z})$.
	Note that in general $\Proj ^{\{ k \}} \wigner (I_{ka}) \wigner (I_{kb})= \delta_{ab}/ [4(2\pi)^{N}]$
	with $a,b\in\{x,y,z\}$ holds for the pointwise product of Wigner functions, where $N$ denotes the number of spins $1/2$.}
and this leads to
\begin{figure}[tb]
	\centering
	\includegraphics{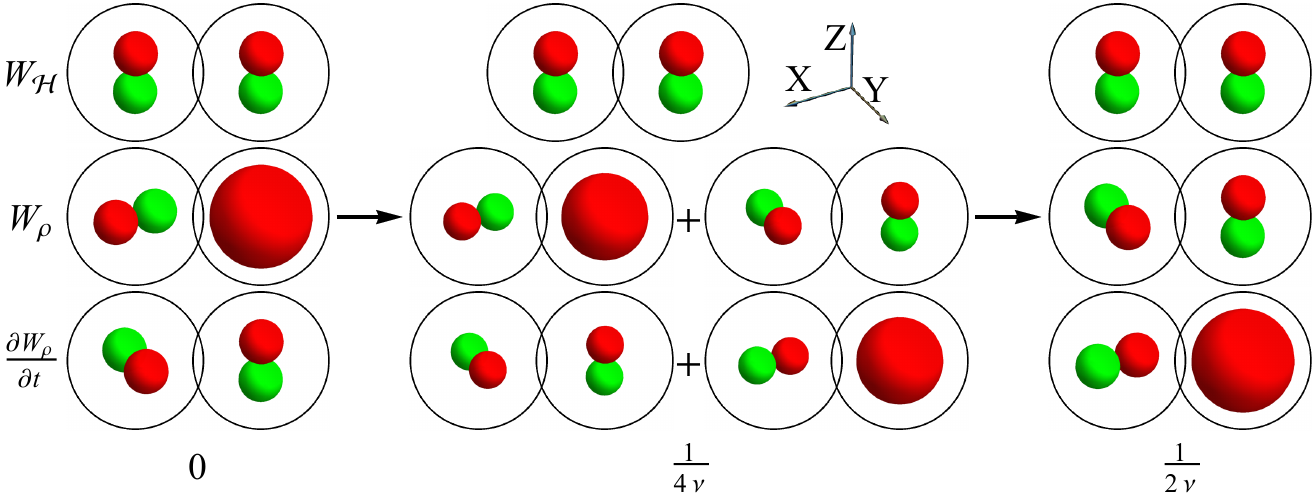}
	\centering
	\caption{\label{Figure2} (Color online) Time evolution of the initial deviation density
		matrix $I_{1x}$ under the bilinear 
		coupling Hamiltonian $\mathcal{H}=\pi \JC 2 I_{1z} I_{2z}$,
		computed with our Wigner formalism.
		PROPS representation of the Wigner 
		functions of the Hamiltonian
		(upper row), the traceless deviation density matrix (middle row), 
		and the time derivative of the deviation density matrix (lower row).
		Note that $\rho(0)=I_{1x}$,  $\rho[1/(4\JC)]=(I_{1x}+2 I_{1y} I_{2z})/ \sqrt{2}$, and 
		$\rho[1/(2\JC)]=2 I_{1y} I_{2z}$.\footnoteref{\thecolors}
	}
\end{figure}
\begin{equation*}
\partial \wigner (2  I_{1y} I_{2z} )/ \partial t= - \tfrac{ 2 R^2 \pi \JC}{ \sqrt{3}} \sin{\theta_1}  \cos{\phi_1}
=\wigner (- \pi \JC I_{1x}).
\end{equation*}
It is now apparent, that the second time derivative of $W_{\rho}(0)$ 
is proportional to $W_{\rho}(0)$, i.e.,
$\partial^2 W_{\rho}(0) / \partial t^2   =-(\pi \JC)^2 W_{\rho}(0)$ and that the time
evolution of $W_{\rho}$ is parametrized by only 
two Wigner functions
\begin{equation*}
W_A:=\wigner (I_{1x} ) =  R \sin{\theta_1}  \cos{\phi_1} / \sqrt{2 \pi},
\quad W_B:=\wigner ( 2 I_{1y} I_{2z} ) =   2 R^2 \sin{\theta_1}  \sin{\phi_1} \cos{\theta_2} .
\end{equation*}
Note the similarity with Eq.~\eqref{ABdefinition}. The solution for the Wigner
function is then given by\footnote{
	The parametrized
	Wigner function $W_{\rho}(t)= a(t) W_A + b(t) W_B$ is substituted into Eq.~\eqref{TwoSpinTimeDeriv},
	and one obtains $\partial[  a(t) W_A + b(t) W_B ] /\partial t = \pi \JC[ a(t) W_B - b(t) W_A]$. 
	This splits into $\partial a(t)/\partial t=-\pi \JC b(t)$
	and $\partial b(t)/\partial t=\allowbreak{} \pi \JC a(t)$ and results in
	$W_{\rho}(t)= \cos( \pi \JC t ) W_A + \sin( \pi \JC t ) W_B $.}
\begin{equation}
W_{\rho}(t)  =  R\, c(t)   \sin{\theta_1}  \cos{\phi_1} / \sqrt{2 \pi}
 + 2 R^2 s(t)   \sin{\theta_1}  \sin{\phi_1}  \cos{\theta_2}, \label{twospinexamplesolution}
\end{equation}
where $c(t):=\cos( \pi \JC t )$ and $s(t):=\sin( \pi \JC t)$.
Figure~\ref{Figure2} illustrates the time evolution of
Wigner functions for
the Hamiltonian $\mathcal{H}= \pi \JC 2 I_{1z} I_{2z}$
(upper row), the traceless deviation density matrix
$\rho(t)=c(t)  I_{1x} + s(t)   2 I_{1y} I_{2z}$ (middle row),
and the corresponding time derivative 
$\partial \rho  / \partial t = -s(t)  I_{1x} + 
c(t)   2 I_{1y} I_{2z}$ (lower row) 
at different times
$t=0$, $\pi \JC t = \pi /4$, and $\pi \JC t = \pi /2$. 
Arbitrary operators are visualized in the PROPS representation by 
decomposing them into sums of product operators
(refer also to \ref{visualizationExplain}).
An alternative representation of the Wigner function in Eq.~\eqref{twospinexamplesolution}
based on a decomposition into non-hermitian operators is given in
Sec.~\ref{twospinnonheritiansection}.

\section{Theory: Wigner formalism for the time evolution of coupled spins}\label{theorysection}

The central parts of the Wigner formalism for coupled spins are now systematically developed.
Most of our theoretical concepts were
already introduced
in Sec.~\ref{Examplessection} using easily understandable examples.
We present now the mathematical details for our Wigner formalism
for coupled spins
which form the main results of this work.
Identifying
the underlying physical principles,
the description of the time evolution of coupled spin systems 
is consequently derived via our theoretical approach.

We want to emphasize that our theoretical approach, 
which relies on the irreducible tensor operators
is applicable to arbitrary density matrices and operators
of coupled spin systems, even though
our technical tools (including Clebsch-Gordan coefficients and
Wigner $6$-$j$ symbols \cite{messiah1962})
might spuriously suggest some
superficial similarity 
to the description of indistinguishable particles using reducible 
representations of the special unitary group of dimension two. 
In particular, the star product specifying the time evolution
is not determined by a simple addition
of angular momenta, not even for a single spin [cf.\ Eq.~\eqref{StarProdSph} below].

First, we recall basic properties of product operators and the tensor product of matrices
(see Sec.~\ref{ProdOpFormalism}); the main properties are summarized in 
Table~\ref{tensordef}. We detail the Wigner transformation of spin operators for single and coupled
spin systems in Sec.~\ref{AppendixWignerRepr}. This then leads 
in Sec.~\ref{singlespinstar} to a
simplified approach
to compute the star product of single-spin-1/2 operators and allows us to 
derive the corresponding
equation of motion. In Sec.~\ref{app_coupled}, the star product is then extended 
to coupled spin-$1/2$ operators and the corresponding equation of motion is determined. 
Finally, we provide an optimized form of
our formalism for two and three coupled spins $1/2$ as well as natural Hamiltonians
for multiple spins $1/2$ (see Sec.~\ref{summaryofresults}).

\begin{table}[tb]
	\centering 
	\caption{Normalized product operators and tensor products for 
		basis operators embedded 
		into an $N$-spin system; see also \ref{EmbeddedOperatorsExplanation}.
		The names in the first column refer to the case when
		all indices $j_k > 0$; but the prefactors are correct even if some 
		$j_k$ are zero.
		\label{tensordef}}
	\begin{tabular}{@{\hspace{1mm}}l@{\hspace{3mm}}l@{\hspace{2mm}}l@{\hspace{1mm}}}
		\\[-2mm]
		\hline\hline
		\\[-2mm]
		Type  &  Product-operator notation & Tensor-product notation 
		\\[1mm] 
		\hline 
		\\[-2mm]
		Linear  &  $\Tj^{\{k\}}_{jm}$   & 		
		$  \underbrace{\Tj_{00}}_{\#1} \otimes \: \cdots \otimes \Tj_{00} \otimes  
		\underbrace{\Tj_{jm}}_{\#k} \otimes \: \Tj_{00}  \otimes \cdots  \otimes \underbrace{\Tj_{00}}_{\#N}$   
		\\[7mm] 
		Bilinear  &  $\sqrt{2J{+}1}^{N}  \Tj^{\{k\}}_{j_k m_k} \Tj^{\{\ell\}}_{j_\ell m_\ell}$   & 		
		$\underbrace{\Tj_{00}}_{\#1} \otimes \: \cdots \otimes
		\underbrace{\Tj_{j_k m_k}}_{\#k} \otimes \:    \cdots 
		\otimes   \underbrace{\Tj_{j_\ell m_\ell}}_{\#\ell} \otimes \: \cdots  \otimes \underbrace{\Tj_{00}}_{\#N}$ 
		\\[7mm] 
		$M$-linear  &  
		$\sqrt{2J{+}1}^{N(M-1)}$
		 & 
		$\bigotimes_{k\in\{1,\ldots,N\}}	A_k,$
		\\ & \hspace{2mm} $\times \prod_{k\in K, \abs{K}=M} \Tj^{\{k\}}_{j_{k} m_{k}}$ 
		   & \hspace{6.5mm} $\text{where }
		   A_k:=\begin{cases}
		   \Tj_{j_{k} m_{k}} & \text{if } k \in K,\\
		   \Tj_{00} & \text{otherwise.}
		   \end{cases}
		     $
		\\[7mm] 
		$N$-linear  &  $\sqrt{2J{+}1}^{N(N-1)} $   & 		
		$\Tj_{j_1 m_1} \otimes \cdots  \otimes \Tj_{j_N m_N}$     
		\\ & \hspace{2mm} $\times \Tj^{\{1\}}_{j_1 m_1} \cdots \Tj^{\{N\}}_{j_N m_N}$ 
		\\[2mm] \hline \hline
	\end{tabular} 
\end{table}

\subsection{Product-operator and tensor-product notation\label{ProdOpFormalism}}
We recapitulate elementary definitions and properties of tensor operators acting
on a coupled $N$-spin system consisting of spin-$J$ particles. For single spins, the tensor components $\Tj_{jm}$
are indexed with rank $j\in\{0,\ldots,2J\}$ and order $m\in\{-j,\ldots,j\}$; the index $J$ can be dropped if $J=1/2$.
Recall the defining relations of tensor operators in Eq.~\eqref{TOpDefALL}, and their matrix elements
	given in the standard basis $|Jm\rangle$ can be specified in terms of
	Clebsch-Gordan coefficients \cite{messiah1962,Brif98,BL81,Fano53}
	\begin{equation}
	[\Tj_{jm}]_{m_1 m_2} := \langle J m_1| \Tj_{jm} |Jm_2\rangle =  \sqrt{\tfrac{2j{+}1}{2J{+}1}} \, C^{J m_1}_{J m_2, j m}
	=(-1)^{J-m_2}\, C^{jm}_{Jm_1J,-m_2},
	\end{equation} 
where $m_1,m_2 \in \{J,\ldots,-J\}$.
The single-spin operator $\Tj_{jm}$ is normalized and
can be embedded as the product operator 
\begin{equation} \label{Eq:embedded}
\Tj^{\{k\}}_{jm} :=  \Tj_{00} \otimes  \cdots \otimes \Tj_{00} \otimes 	
\Tj_{jm} \otimes  \Tj_{00} \otimes \cdots  \otimes \Tj_{00}
\end{equation}
acting on the $k$th spin of
an $N$-spin system 
(recall that $\unity_{2J{+}1}=\sqrt{2J{+}1} \: \Tj_{00}$).
More generally, the embedded form of a single-spin operator $A$ acting
on the $k$th spin is denoted by $A^{\{k\}}:=\Tj_{00} \otimes  \cdots \otimes \Tj_{00} \otimes 	
A \otimes  \Tj_{00} \otimes \cdots  \otimes \Tj_{00}$.
We consider products of single-spin tensor operators, and certain cases are summarized 
in Table~\ref{tensordef} as normalized product operators $A$ with $\tr(A^\dagger A)=1$, 
Table~\ref{tensordef} also shows the complementary tensor-product notation, and
we will
switch between product operators and the tensor-product notation.
Elementary properties of product operators can usually directly be inferred from properties of tensor 
products of operators $A_k$ and $B_k$ with $k\in\{1,\ldots,N\}$:
\begin{gather*}
        (A_1 \cdots A_N) \otimes (B_1 \cdots B_N) = (  A_1 {\otimes} B_1  ) 
	\: \cdots \: ( A_N {\otimes} B_N ),\;
	( A_1  \otimes \cdots  \otimes A_N )^{\dagger} = A_1^{\dagger} \otimes 
	\cdots \otimes A_N^{\dagger},
	\\
	\text{and } \; \tr( A_1  \otimes \cdots  \otimes A_N ) = \tr( A_1) 
	\: \cdots \:  \tr( A_N ).
\end{gather*}
Moreover, we gather the following properties of embedded single-spin product operators:
\begin{lemma}\label{single-spin}
	Embedded single-spin product operators have the following properties:
	\begin{align*}
	& \text{\textup{(a)}}\; \tr[ (  \Tj^{ \{ k_1 \}}_{j_1 m_1} )^{\normalfont\selectfont\dagger}  \Tj^{ \{ k_2 \}}_{j_2 m_2}  ]
	=\delta_{k_1 k_2} \delta_{j_1 j_2} \delta_{m_1 m_2}, \\
	& \text{\textup{(b)}}\; [\Tj^{\{k_1\}}_{j_1 m_1}, \Tj^{\{k_2\}}_{j_2 m_2}] = 
	\tfrac{\delta_{k_1 k_2}}{\sqrt{2J{+}1}^{N{-}1}} ( [\Tj_{j_1 m_1}, \Tj_{j_2 m_2}] )^{\{k_1\}}, \\
	& \text{\textup{(c)}}\;  \Tj^{\{k\}}_{j_1 m_1} \Tj^{\{k\}}_{j_2 m_2}= 
	\tfrac{1}{\sqrt{2J{+}1}^{N{-}1}} (\Tj_{j_1 m_1} \Tj_{j_2 m_2})^{\{k\}}.
	\end{align*}
\end{lemma}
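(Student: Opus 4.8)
The plan is to reduce each of the three claims to elementary facts about tensor products of operators, together with the single-spin normalization and commutation relations that are already recorded in the excerpt. All three statements have the same flavor: the embedded operators $\Tj^{\{k\}}_{jm}$ are built from the identity-normalized $\Tj_{00}$ on all but one (or two) tensor slots, so computing traces, commutators, and products factorizes slot by slot, and the only nontrivial slot reproduces a single-spin computation up to a power of $\sqrt{2J{+}1}$ coming from the $\Tj_{00}$ factors.

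For part (a), I would write $(\Tj^{\{k_1\}}_{j_1m_1})^\dagger \Tj^{\{k_2\}}_{j_2m_2}$ using the rule $(A_1\otimes\cdots\otimes A_N)^\dagger = A_1^\dagger\otimes\cdots\otimes A_N^\dagger$ and then $(A_1\cdots A_N)\otimes(B_1\cdots B_N) = (A_1{\otimes}B_1)\cdots(A_N{\otimes}B_N)$ from the displayed properties just above the lemma. Taking the trace and using $\tr(A_1\otimes\cdots\otimes A_N) = \tr(A_1)\cdots\tr(A_N)$ turns the left side into a product over the $N$ slots. In every slot other than $k_1$ and $k_2$ one gets $\tr(\Tj_{00}^\dagger\Tj_{00}) = \tr(\Tj_{00}\Tj_{00})$; using $\unity_{2J{+}1} = \sqrt{2J{+}1}\,\Tj_{00}$ this equals $1$. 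The case $k_1\ne k_2$ forces a factor $\tr(\Tj_{00}^\dagger\Tj_{j_1m_1}) = \tr(\Tj_{j_1m_1})/\sqrt{2J{+}1}$, which vanishes for $j_1\ge 1$ and, together with the $k_2$-slot factor, makes the whole thing zero unless both $j$'s are zero, in which case orthonormality of tensor operators ($\tr(\Tj_{j_1m_1}^\dagger\Tj_{j_2m_2}) = \delta_{j_1j_2}\delta_{m_1m_2}$, a standard fact consistent with the matrix-element formula in the excerpt) still gives the claimed Kronecker deltas. When $k_1 = k_2 = k$, all slots but $k$ contribute $1$ and the $k$-slot contributes exactly $\tr(\Tj_{j_1m_1}^\dagger\Tj_{j_2m_2}) = \delta_{j_1j_2}\delta_{m_1m_2}$.

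For parts (b) and (c), the key observation is that operators supported on different tensor slots commute, and that $\Tj_{00}$ is (proportional to) the identity in every slot. For (b), if $k_1\ne k_2$ the two embedded operators act on disjoint slots and hence commute, giving the zero predicted by $\delta_{k_1k_2}$. If $k_1 = k_2 = k$, I expand $[\Tj^{\{k\}}_{j_1m_1},\Tj^{\{k\}}_{j_2m_2}]$ using the multiplication rule: in every slot $\ne k$ we get $\Tj_{00}\Tj_{00} = \Tj_{00}/\sqrt{2J{+}1}$ (from $\unity = \sqrt{2J{+}1}\,\Tj_{00}$), contributing $N-1$ such factors, i.e. an overall $(2J{+}1)^{-(N-1)/2}$; in the $k$-slot the two terms of the commutator combine into $[\Tj_{j_1m_1},\Tj_{j_2m_2}]$, and re-absorbing the $\Tj_{00}$'s back into identities on the other slots gives precisely $\tfrac{1}{\sqrt{2J{+}1}^{N-1}}([\Tj_{j_1m_1},\Tj_{j_2m_2}])^{\{k\}}$. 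Part (c) is the same computation without the subtraction of the two orderings: the off-$k$ slots each yield $\Tj_{00}\Tj_{00}$ and the $k$-slot yields $\Tj_{j_1m_1}\Tj_{j_2m_2}$, so $\Tj^{\{k\}}_{j_1m_1}\Tj^{\{k\}}_{j_2m_2} = \tfrac{1}{\sqrt{2J{+}1}^{N-1}}(\Tj_{j_1m_1}\Tj_{j_2m_2})^{\{k\}}$.

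None of the steps is a genuine obstacle; the only thing requiring care is consistent bookkeeping of the $\sqrt{2J{+}1}$ factors generated when converting between $\unity_{2J{+}1}$ and $\Tj_{00}$ (there are $N-1$ ``extra'' slots in (b) and (c), and in (a) the corresponding factors are squared and combine with the traces to give exactly $1$). I would therefore organize the write-up around a single lemma-internal identity — $\Tj_{00}\cdots\Tj_{00}$ with $r$ factors equals $(2J{+}1)^{-(r-1)/2}\Tj_{00}$ after re-embedding — and then read off (a), (b), (c) as three immediate corollaries, checking the two sub-cases $k_1 = k_2$ and $k_1\ne k_2$ in each.
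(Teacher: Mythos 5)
Your proof is correct and follows exactly the route the paper intends: the lemma is stated immediately after the displayed factorization properties of tensor products and is given no further justification there, so your slot-by-slot verification (off-diagonal slots contributing $\tr(\Tj_{00}^2)=1$ in the trace and $\Tj_{00}^2=\Tj_{00}/\sqrt{2J{+}1}$ in the products, the distinguished slot reproducing the single-spin trace, commutator, or product) is precisely the omitted argument, with the $\sqrt{2J{+}1}$ bookkeeping done correctly. The one caveat is the degenerate case $j_1=j_2=0$ with $k_1\neq k_2$ in part (a), where the trace actually equals $1$ rather than the literal $\delta_{k_1k_2}=0$; this is an imprecision of the lemma's statement itself (the embedded operator with $j=0$ does not depend on $k$), not a flaw in your argument, so you should state the restriction to $j\geq 1$ explicitly rather than asserting that the Kronecker deltas "still" come out right there.
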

These properties imply that 
normalized products of single-spin tensor operators give rise to
a basis of the full operator space of $N$ spins (cf.\ Table~\ref{tensordef}):
\begin{lemma} \label{basis}
	The normalized product operators 
	$$\sqrt{2J{+}1}^{N ( N-1)}  \Tj^{\{1\}}_{j_1 m_1} 
	\cdots \Tj^{\{N \}}_{j_N m_N}
	=\Tj_{j_1 m_1} \otimes \cdots  \otimes \Tj_{j_N m_N}$$
	form an orthonormal basis of the full $N$-spin system, and an arbitrary spin operator $A$
	can be decomposed as 
	$$
	A=  \hspace{-4mm}
	\sum_{j_1 m_1 \ldots j_N m_N} \hspace{-4mm}
	a_{j_1 m_1 \ldots j_N m_N } 
	\sqrt{2J{+}1}^{N ( N-1 )}   \Tj^{\{1\}}_{j_1 m_1}     \cdots \Tj^{\{N\}}_{j_N m_N}.
	$$
\end{lemma}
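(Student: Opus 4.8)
The plan is to build the $N$-spin basis out of the single-spin case using the tensor-product structure and then to close the argument with a dimension count. First I would record that the single-spin tensor operators $\{\Tj_{jm}: 0\le j\le 2J,\ -j\le m\le j\}$ form an orthonormal basis of the $(2J{+}1)^2$-dimensional space of operators on a single spin. Orthonormality, $\tr(\Tj_{j_1 m_1}^{\dagger}\Tj_{j_2 m_2})=\delta_{j_1 j_2}\delta_{m_1 m_2}$, is the $N=1$ instance of Lemma~\ref{single-spin}(a); equivalently, inserting the Clebsch--Gordan expression for $[\Tj_{jm}]_{m_1 m_2}$ given above and invoking the orthogonality relations for Clebsch--Gordan coefficients yields it directly. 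Since there are $\sum_{j=0}^{2J}(2j{+}1)=(2J{+}1)^2$ such operators and the space has dimension $(2J{+}1)^2$, they span, hence form a basis.

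Next I would pass to $N$ spins. Using the elementary identities recalled before Lemma~\ref{single-spin}, namely $(A_1\otimes\cdots\otimes A_N)^{\dagger}=A_1^{\dagger}\otimes\cdots\otimes A_N^{\dagger}$ and $\tr(A_1\otimes\cdots\otimes A_N)=\prod_k\tr(A_k)$, one gets
\[
\tr\big[(\Tj_{j_1 m_1}\otimes\cdots\otimes\Tj_{j_N m_N})^{\dagger}(\Tj_{j'_1 m'_1}\otimes\cdots\otimes\Tj_{j'_N m'_N})\big]=\prod_{k=1}^{N}\delta_{j_k j'_k}\,\delta_{m_k m'_k},
\]
so the operators $\Tj_{j_1 m_1}\otimes\cdots\otimes\Tj_{j_N m_N}$ are orthonormal, and there are $\prod_k\big(\sum_{j_k=0}^{2J}(2j_k{+}1)\big)=(2J{+}1)^{2N}$ of them. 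Since the full $N$-spin operator space has dimension $(2J{+}1)^{2N}$, orthonormality together with the correct cardinality forces these operators to be an orthonormal basis.

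To match the product-operator notation I would verify the identity $\sqrt{2J{+}1}^{\,N(N-1)}\,\Tj^{\{1\}}_{j_1 m_1}\cdots\Tj^{\{N\}}_{j_N m_N}=\Tj_{j_1 m_1}\otimes\cdots\otimes\Tj_{j_N m_N}$, which follows either by iterating Lemma~\ref{single-spin}(c), or directly: in the product $\prod_k\Tj^{\{k\}}_{j_k m_k}$ the $\ell$th tensor slot carries $\Tj_{j_\ell m_\ell}$ multiplied by $N-1$ copies of the central operator $\Tj_{00}=\unity_{2J{+}1}/\sqrt{2J{+}1}$, producing the overall prefactor $\sqrt{2J{+}1}^{-N(N-1)}$. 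Expanding an arbitrary $A$ in this basis then gives the stated decomposition with $a_{j_1 m_1 \ldots j_N m_N}=\tr[(\Tj_{j_1 m_1}\otimes\cdots\otimes\Tj_{j_N m_N})^{\dagger}A]$. I expect no genuine obstacle here: this is a completeness-plus-dimension-count argument whose only delicate point is keeping the normalization powers $\sqrt{2J{+}1}^{\,N(N-1)}$ (and the $\sqrt{2J{+}1}^{\,N-1}$ of Lemma~\ref{single-spin}) straight so that both the tensor-product and the product-operator forms come out with unit Hilbert--Schmidt norm; all the substantive content is contained in single-spin orthonormality and multiplicativity of the trace over tensor products.
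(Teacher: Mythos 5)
Your proposal is correct and follows essentially the route the paper intends: the paper states Lemma~\ref{basis} as a direct consequence of the single-spin orthonormality in Eq.~\eqref{TOrtRelat} and Lemma~\ref{single-spin} together with the multiplicativity of the trace over tensor products, which is exactly the argument you spell out. Your explicit dimension count $\sum_{j=0}^{2J}(2j{+}1)=(2J{+}1)^2$ per spin and your bookkeeping of the $\sqrt{2J{+}1}^{\,N(N-1)}$ prefactor via the $N{-}1$ central factors $\Tj_{00}=\unity_{2J+1}/\sqrt{2J{+}1}$ in each slot correctly fill in the details the paper leaves implicit.
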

Given an $M$-linear operator acting on $M \leq N$ spins, all indices $j_k$ in the decomposition of $A$ 
given in Lemma~\ref{basis}
are greater equal to zero for $k\in\{1,\ldots,N\}$. Yet, certain indices $j_k$ have to be zero 
if $M < N$. \ref{EmbeddedOperatorsExplanation} provides a non-technical tutorial
on elementary
properties of product operators.

\subsection{The Wigner formalism for spins \label{AppendixWignerRepr}}
We describe the Wigner representation of spin operators
which are mapped by the Wigner transformation
to spherical functions. This bijective mapping fulfills the 
so-called Stratonovich postulates (and generalizations thereof)
which are discussed in \ref{stratpostulates}
for single spins as well as multiple coupled spins.

\subsubsection{Wigner representation of single spins}
The continuous Wigner representation of an arbitrary operator $A$ acting on a single spin $J$ 
is defined as \cite{VGB89}
\begin{equation}\label{WignerTransform}
\wigner ( A ) := W_{A} (\theta, \phi)= \tr [  \Delta_{J}(\theta, \phi) A ],
\end{equation}
where $\wigner(A)$ denotes the Wigner transform of $A$.
In Eq.~\eqref{WignerTransform}, we have used the kernel
$\Delta_{J}(\theta, \phi)$ for a single spin $J$ which 
maps spin operators onto spherical functions, and it
is given by\footnote{
	We verify that $ \Delta_{J}(\theta, \phi)
	= \sum_{j=0}^{2J}  \sum_{m=-j}^{j}  \Y^*_{j,-m} \:  \Tj_{j,-m} = \Delta_{J}^{\dagger}(\theta, \phi)
	$
	is hermitian by using the Condon-Shortley phase convention and 
	$\Delta_{J}^{\dagger}(\theta, \phi) =  \sum_{j=0}^{2J}  \sum_{m=-j}^{j}  \Y_{jm} \:  \Tj^{\dagger}_{jm}$.}
\begin{equation} \label{SingleKernelDefinition}
\Delta_{J}(\theta, \phi) := \sum_{j=0}^{2J}  \sum_{m=-j}^{j}  \Y^{*}_{jm}  (\theta, \phi)   
\:  \Tj_{jm} = \Delta_{J}^{\dagger}(\theta, \phi).
\end{equation}
The form of the kernel builds on the  work of \cite{stratonovich,VGB89,Brif98,Brif97},
see, in particular, Eqs.~(4.16)-(4.17) in \cite{Brif98}, Eq.~2.14 in \cite{VGB89}, and Eq.~(9) in \cite{StarProd}. 
Here, the 
tensor operators
$ \Tj_{jm}$ for a given spin number $J$ form
an orthonormal set of basis operators for $(2J{+}1)\times (2J{+}1)$ matrices, i.e.,
\begin{equation}\label{TOrtRelat}
\tr (   \Tj_{j_1 m_1}^{\dagger} \Tj_{j_2 m_2}  )
=
\delta_{j_1 j_2} \delta_{m_1 m_2};
\end{equation}
likewise the spherical harmonics $\Y_{jm}  (\theta, \phi)$~\cite{Jac99}
are orthonormal with respect to the scalar product
\begin{equation} 
\int_{\theta=0}^{\pi} \int_{\phi=0}^{2 \pi}  \Y^{*}_{j_1 m_1} (\theta, \phi)   
\Y_{j_2 m_2} (\theta, \phi)  \sin{\theta} \,\mathrm{d}\theta  \, \mathrm{d}\phi 
=
\delta_{j_1 j_2} \delta_{m_1 m_2}. \label{OrtRelat}
\end{equation}

Equations~\eqref{WignerTransform} and  \eqref{TOrtRelat} imply that the 
Wigner representation of a tensor operator $\Tj_{j m}$ is equal to
the spherical harmonic $\Y_{j m} (\theta, \phi)$, i.e.,
\begin{equation} \label{TensorOpSphHarm}
\wigner( \Tj_{j m} ) = W_{\Tj_{j m}} (\theta, \phi)=\Y_{j m} (\theta, \phi).
\end{equation}
Tensor operators can be reconstructed from their Wigner representation by applying 
the inverse Wigner transformation (usually referred to as the Weyl transformation) which is defined as
\begin{equation} 
\wigner^{-1} [   F (\theta, \phi)   ] := 
\int_{\theta=0}^{\pi} \int_{\phi=0}^{2 \pi}  \Delta_{J}(\theta, \phi) F (\theta, \phi) 
\: \sin{\theta} \,\mathrm{d}\theta  \, \mathrm{d}\phi . \label{inverseWigner}
\end{equation}
By substituting  $F=W_A$ in Eq.~\eqref{inverseWigner}, one obtains that
$ A =  \wigner^{-1} [   W_{A} (\theta, \phi)   ]$.
The orthonormality relation of Eq.~\eqref{OrtRelat} implies 
that the inverse Wigner transformation of the spherical harmonics 
${\Y_{jm}(\theta, \phi) }$ are given by the tensor operators $\Tj_{jm}$.

\subsubsection{Wigner representation of coupled spins \label{ProdOpWignerTransform}}
We now generalize the definition of the kernel for a single spin
[see Eq.~\eqref{SingleKernelDefinition}] to multiple spins $J$, but for simplicity 
with identical $J$ for each spin, while a generalization
to systems that are composed of particles of different spin number $J$ is straightforward. 
\begin{result}\label{result1}
For $N$ coupled spins, the kernel is defined as the $N$-fold tensor product 
\begin{equation*}
\Delta^{\{ 1 \ldots N \}}_{J}:= 
\Delta^{\{ 1 \ldots N \}}_{J}(\theta_1,\phi_1, \ldots,  \theta_N,  \phi_N) 
=  \bigotimes_{k=1}^{N} \Delta_{J} (\theta_k,  \phi_k) 
\end{equation*}
of the individual
kernels and involves a set of $2N$ variables 
$(\theta_1,\phi_1, \ldots, \theta_N,  \phi_N)$ describing points on $N$ 
spheres.\footnote{A similar result has been attained in Eq.~(9) of \cite{tilma2016}.}
Using the definition of the  kernel $\Delta_{J} (\theta,  \phi)$ from Eq.~\eqref{SingleKernelDefinition}
and applying the correspondence of Table~\ref{tensordef},
one obtains the explicit form
\begin{align}
 \Delta^{\{ 1 \dots N \}}_{J} 
& =\sum_{ \substack{j_1 \dots j_N \\ m_1 \dots m_N}} 
[\Tj_{j_1 m_1} \otimes \cdots  \otimes \Tj_{j_N m_N} ]\;
\mathcal{Y}
\nonumber \\
&	 = \sum_{ \substack{j_1 \dots j_N \\ m_1 \dots m_N}} 
[\sqrt{2J{+}1}^{N ( N-1 )}  \Tj^{\{1\}}_{j_1 m_1} \cdots \Tj^{\{N\}}_{j_N m_N} ]\;
\mathcal{Y}, \label{MultispinKernelDefinition}
\end{align}
where $\mathcal{Y}:=\prod_{k=1}^N \Y^{*}_{j_k m_k} (\theta_k, \phi_k)$.
Consequently, this defines
the Wigner transformation 
as
\begin{equation}\label{MultiWignerTransform}
\wigner( A  ) :=
W_A(\theta_1,\phi_1, \ldots, \theta_N,  \phi_N)=
\tr (  \Delta^{\{ 1 \dots N \}}_{J} A  ),
\end{equation}
and it satisfies the generalized Stratonovich postulates described in \ref{stratpostulatesmulti}.
\end{result}
We now apply orthonormality properties of
tensor operators 
[see Lemma~\ref{single-spin}(a)] and verify that
the Wigner representation for the linear embedded tensor operator $\Tj^{\{ k \}}_{j m}$ 
is proportional to the spherical harmonic $\Y_{jm}(\theta_k,\phi_k)$,
which depends on the angular variables $\theta_k$ and $\phi_k$. More precisely, we obtain the relation
\begin{equation} 
\wigner ( \Tj^{\{ k \}}_{j m} ) = \Y^{\{k\}}_{jm} :=  
\underbrace{\Y_{00}}_{\#1}  
\cdots\hspace{-2mm} \underbrace{\Y_{00}}_{\#(k-1)}  \underbrace{\Y_{jm}(\theta_k,\phi_k)}_{\#k}
\underbrace{\Y_{00}}_{\#(k+1)} \hspace{-2mm} \cdots   
\underbrace{\Y_{00}}_{\#N}  
\phantom{:}=\Y_{jm}(\theta_k,\phi_k)/\sqrt{4 \pi}^{N-1}. \label{NSpinSpH}
\end{equation}
Similarly as for the Wigner transformation, 
the inverse Wigner transformation 
of a spherical function $F=F(\theta_1,\phi_1, \ldots, \theta_N,  \phi_N)$
is generalized to multiple spins as 
\begin{equation}\label{eq_inv_Wig_multiple}
\wigner^{-1}(F)
:= \hspace{-1mm} \idotsint\displaylimits_{\theta_k,\phi_k\geq 0}^{\theta_k\leq \pi, \phi_k\leq 2\pi} \hspace{-1mm}
\Delta^{\{ 1 \dots N \}}_{J} F 
\prod_{k=1}^{N} \sin{\theta_k}\, \mathrm{d}\phi_k \, \mathrm{d}\theta_k  .
\end{equation}
Setting $F=W_A$ in Eq.~\eqref{eq_inv_Wig_multiple} also verifies that
$ A =  \wigner^{-1}  [   W_{A} (\theta_1,\phi_1 \dots \theta_N,  \phi_N)  ]$ holds.
In particular, the inverse Wigner transformation maps the spherical harmonic $\Y_{j m}^{\{ k \}}$ 
with variables $(\theta_k, \phi_k)$ to the linear tensor operators $ \Tj_{j m}^{\{ k \}}$ acting on the $k$th spin.
Finally, our approach establishes that the Wigner representation of 
products of embedded tensor operators can be written as products of the corresponding 
spherical harmonics involving different variables, i.e.,
\begin{align}
\wigner( \sqrt{2J{+}1}^{N ( N-1 )}   
\: \Tj_{j_1 m_1}^{\{ 1 \}} \cdots\: \Tj_{j_N m_N}^{\{ N \}} ) 
&=  \wigner(  
\:\Tj_{j_1 m_1} \otimes \cdots  \otimes \Tj_{j_N m_N}     ) \label{tensorprodOpTransform} \\
& = \Y_{j_1 m_1}(\theta_1, \phi_1)\: \cdots\: \Y_{j_N m_N}(\theta_N, \phi_N).   \nonumber
\end{align}

\subsubsection{Star product, star commutator, and Moyal equation  \label{subsecStarProdDef}}
In the following, we wish to compute the Wigner representation $W_{AB}$ of 
the product $AB$ of two operators $A$ and $B$ from their respective Wigner representations
$W_A$ and $W_B$. This is accomplished by recalling the defining relation 
\begin{equation}\label{StarProdEquation}
W_{AB}  = W_A  \star  W_B
\end{equation}
of the star product $\star$ of two Wigner functions. The star product mimics
the matrix product of two operators. Note that 
the product $AB$ is restricted to the subspace of tensor operators with rank at most $2J$,
just as for the operators $A$ and $B$.

The time evolution of the density operator $\rho$ is governed by the von-Neumann equation
$i \partial \rho/\partial t = [ \mathcal{H} ,  \rho  ] = 
\mathcal{H}  \rho -  \rho  \mathcal{H}$,
see Eq.~\eqref{NeumannEq}.
This can be mapped to the Wigner representation by applying
Eq.~\eqref{WignerTransform} and exploiting that the symbols $W_{\rho \mathcal{H} }$ 
and $W_{ \mathcal{H} \rho }$ can according to Eq.~\eqref{StarProdEquation} be restated 
in terms of star products. Hence, the equation of motion in the Wigner representation
is given as
\begin{equation} \label{MoyalEqDefinition}
i \frac{  \partial W_{\rho} }{\partial t} =  [ W_{\mathcal{H}} ,  W_{\rho}  ] _{\star} :=
W_{\mathcal{H}} \star  W_{\rho} -  W_{\rho}   \star  W_{\mathcal{H}}.
\end{equation}
This defines the star commutator $[\cdot,\cdot]_{\star}$, which constitutes an 
analogue of the matrix commutator.

\subsection{Star product for a single spin \texorpdfstring{$1/2$}{1/2}} \label{singlespinstar}
Wigner representations and their defining star products are
well studied in the case of  infinite-dimensional quantum-mechanical operators \cite{schroeck2013,SchleichBook}.
Similarly as in the infinite-dimensional case,
the star product
from Eq.~\eqref{StarProdEquation} can be computed
using an integral or differential form (as discussed in Sections~\ref{prior}
and \ref{coreres1}).
The integral form is an integral transformation of the
Wigner functions $W_A(\theta_1,\phi_1)$ and $W_B(\theta_2,\phi_2)$ which is weighted with a so-called trikernel.
We provide an explicit expression for the integral form in Eq.~\eqref{integralstarprod} 
of \ref{IntegralStarProd} (along the lines of \cite{VGB89})
by evaluating the exact star
product and by applying expansion
formulas from Sec.~\ref{deepformulas} below.
This result provides a formal definition of the star product, but is less useful in applications.

In contrast, the differential star product is more convenient for
explicit calculations since only partial derivatives and the pointwise product of
$W_A$ and $W_B$ are required, which is afterwards followed by a projection.

For an arbitrary spin number $J$,
Klimov and Espinoza \cite{StarProd} state the differential star product 
for two so-called Beresin P symbols $P_A$ and $P_B$
as a sum of the pointwise product of two functions combined with partial derivatives.
The order and number of the partial derivatives grow rapidly with increasing $J$,
and a truncation of the resulting P symbol is required. 
Klimov and Espinoza \cite{StarProd} also provide formulas to compute the
star product of Wigner functions. 
Their method is virtually equivalent to transforming the Wigner functions into
P symbols, then after computing the star product of P symbols, the P symbols are transformed back.
In summary, the star product of two spin-$1/2$ Wigner functions $W_A$ and $W_B$ can be computed
by first decomposing $W_A$ and $W_B$ into spherical harmonics and by reweighting the expansion
coefficients one obtains $\tilde{W}_A$ and $\tilde{W}_B$, where $\tilde{W}_A$ and $\tilde{W}_B$ are
proportional to the P symbols $P_A$ and $P_B$. The differential star product of the P symbols
is then applied to $\tilde{W}_A$ and $\tilde{W}_B$, and one  obtains the four summands 
$a \, \tilde{W}_A \tilde{W}_B$, 
$b \, \{ \tilde{W}_A , \tilde{W}_B \}$, 
$c \, ( \partial \tilde{W}_A / \partial \theta) \, ( \partial \tilde{W}_B / \partial \theta)$,
and
$d \, ( \partial \tilde{W}_A / \partial \phi) \, ( \partial \tilde{W}_B / \partial \phi) /\sin^2{\theta}$,
where $a$, $b$, $c$, and $d$ denote suitable  prefactors.
The resulting function $\tilde{W}_{AB}$ is transformed back by reweighting the terms
in its decomposition into
spherical harmonics. Finally,
a result $W_{AB}$ is obtained
that satisfies the defining property $W_{AB} = W_A \star W_B$.

We consider only the case of $J=1/2$ and provide a simplified approach,
which nevertheless
leads to the same star product and the same equation of motion that is given by the Poisson bracket
(see \cite{klimov2002ExactEvolution} and \cite{VGB89}).
The resulting differential star product $W_A \star W_B$ 
[see Result~\ref{result2} below] is simply a sum of the pointwise product
$W_A W_B$ and the Poisson bracket $\{ W_A  , W_B \} $ of the two Wigner functions.
In Sec.~\ref{deepformulas} we provide formulas necessary to evaluate the star product,
and Sec.~\ref{starprodproof} 
contains the details on how the star product is calculated.
This simplified approach allows us then to extend the star product
to multiple, coupled spins as detailed in Sec.~\ref{app_coupled}.

\subsubsection{Matrix products, pointwise products, and Poisson brackets\label{deepformulas}} 
We detail how to expand products of tensor operators (resp.\ spherical harmonics)
into a linear combination of tensor operators (resp.\ spherical harmonics).
A similar expansion is described for the Poisson bracket of spherical harmonics.
The product of two irreducible  
tensor operators can be expanded as
\cite{StarProd,Varshalovich}
\begin{equation} \label{ProdTop}
\Tj_{j_1 m_1} \Tj_{j_2 m_2}
= 
\sum_{L=|j_1-j_2|}^{n} 
{}^{J} Q_{j_1 j_2 L} \,
C^{LM}_{j_1 m_1 j_2 m_2}
\Tj_{L M}.
\end{equation}
Here, the upper bound of the summation 
does not need to exceed $2J$ and is 
given by $n:=\min( j_1{+}j_2, 2J )$
as  $^{J}Q_{j_1 j_2 L}=0$ for 
$L>2J$; note $M=m_1{+}m_2$.\footnote{
	The lower bound in the summation can be enlarged to $\max ( |j_1{-}j_2|, m_1{+}m_2)$ 
	without changing the result.}
Also, $C^{LM}_{j_1 m_1 j_2 m_2}$ are the Clebsch-Gordan coefficients \cite{messiah1962}, 
and the coefficients 
\begin{equation}
^{J}Q_{j_1 j_2 L}
:=
(-1)^{2 J+L} 
\sqrt{(2 j_1{+}1) (2 j_2{+}1)} \:
\begin{Bmatrix}
j_1 & j_2 & L \\
J & J & J
\end{Bmatrix}
\end{equation}
are proportional to Wigner $6$-$j$ symbols \cite{messiah1962} and 
depend only on $j_1$, $j_2$, and $L$,
but are independent of $m_1$, $m_2$, and $M$.
This also conforms with the fact that only 
tensor operators of rank zero and one are allowed for the case of a spin $1/2$ in the
product $AB$ (see Sec.~\ref{subsecStarProdDef}).

In order to determine the Poisson bracket of two spherical functions, we first
recall its definition [refer also to Eq.~\eqref{Equation_Motion}]
\begin{equation}  \label{PBDef}
\{ W_F, W_G \}^{\{ i \}} :=
W_F
\left( \tfrac{\overleftarrow{\partial}}{\partial \phi_i} 
\tfrac{1}{R \sin\theta_i}
\tfrac{\overrightarrow{\partial} }{\partial \theta_i}
-\tfrac{\overleftarrow{\partial} }{\partial \theta_i} 
\tfrac{1}{R \sin\theta_i}     
\tfrac{\overrightarrow{\partial} }{\partial \phi_i}   \right)
W_G,
\end{equation} 
where the  arrows $\leftarrow$ and $\rightarrow$ 
indicate whether the derivatives act to the left or the right, respectively.
Moreover, the normalization factor is set to  $R=\sqrt{3/(8 \pi)}$.
Based on \cite{fuzzy}, the Poisson bracket
of two spherical harmonics can be expanded as
\begin{equation}  \label{YPB}
\{ \Y_{j_1 m_1} , \Y_{j_2 m_2} \}
=
\sum_{L=|j_1-j_2|}^{j_1+j_2}
U_{j_1 j_2 L}\,
C^{LM}_{j_1 m_1 j_2 m_2}
\Y_{LM}.
\end{equation} 
The product of two spherical harmonics decomposes into a linear combination 
as 
(see Sec.~12.9 of \cite{arfken2005mathematical})
\begin{equation} \label{Yprod}
\Y_{j_1 m_1} \Y_{j_2 m_2}
=
\sum_{L=|j_1-j_2|}^{j_1+j_2}
Z_{j_1 j_2 L}\,
C^{LM}_{j_1 m_1 j_2 m_2}
\Y_{LM}.
\end{equation} 
Here, the coefficients $Z_{j_1 j_2 L}$ and $U_{j_1 j_2 L}$  
depend only on $j_1$, $j_2$, and $L$ and are given by
\begin{align} 
Z_{j_1 j_2 L}
:= &
\, \sqrt{  \tfrac{(2 j_1{+}1)(2 j_2{+}1)}{4 \pi(2 L{+}1)}     }
C_{ j_1 0 j_2 0}^{L 0},\\
U_{j_1 j_2 L}  := &  
- \tfrac{i}{2 R} [
1 -   (-1  )^{L-j_1-j_2} 
]
\sqrt{j_1(j_1 {+} 1) L (L {+} 1)}
 \times \sqrt{\tfrac{(2 j_1 {+} 1)(2j_2 {+} 1)}{4 \pi (2L {+} 1) }}
C_{ j_1 1\, j_2\, 0}^{L 1}.
\end{align} 

Although the Poisson bracket $\{\cdot,\cdot\}$ will
not be completely analogous to the star commutator $[\cdot,\cdot]_\star$ for arbitrary $J$, there is a 
strong relation between these two operations. 
Recalling that the star commutator in the Wigner representation corresponds
to the commutator of matrix representations (see Sec.~\ref{subsecStarProdDef}),
we can in analogy compare the Poisson bracket from Eq.~\eqref{YPB} with the usual commutator of tensor operators.
Using Eq.~\eqref{ProdTop}, the commutator
of tensor operators can be brought into a similar form ($M=m_1+m_2$)
\begin{equation}
[  \Tj_{j_1 m_1} , \Tj_{j_2 m_2} ]
=\hspace{-2mm}
\sum_{L=|j_1-j_2|}^n \hspace{-1mm}
^{J}Q'_{j_1 j_2 L}\,
C^{LM}_{j_1 m_1 j_2 m_2}
\Tj_{L M} \label{comm_decomp}
\end{equation}
by applying $^{J}Q'_{j_1 j_2 L}:=[  1-(  -1 )^{j_1+j_2 -L}] ^{J}Q_{j_1 j_2 L}$ and the
symmetry properties 
$C^{LM}_{j_1 m_1 j_2 m_2}=(  -1 )^{j_1+j_2 -L} C^{LM}_{j_2 m_2 j_1 m_1}$
of the Clebsch-Gordan coefficients.
We compare Eq.~\eqref{YPB} with Eq.~\eqref{comm_decomp} and note that
the coefficients $Q'^{(J)}_{j_1 j_2 L}$  and $ U_{j_1 j_2 L}$ 
will in general differ. However, their nonzero values 
within the range $j_1,j_2,L \leq 2J$
appear 
at coinciding values of 
$j_1$,  $j_2$, and $L$,
highlighting the 
close relation of the Poisson bracket and 
the star commutator.
Finally, we provide a particular case where this equivalence is strict up to a prefactor:
\begin{subequations}
	\label{CommutatorPBCorrespondance}
	\begin{align}
	\{ \Y_{10} , \Y_{jm} \} & =  i \sqrt{2} \: m \: \Y_{jm}, 
	& \{ \Y_{1, \pm1} , \Y_{jm} \} & =  \mp i \sqrt{\left( j \mp m \right)\left( j \pm m + 1  \right)} \:    \Y_{j, m \pm1 },
	\label{CommutatorPBCorrespondanceAB}
	\\ 
	[ ^J\Jspace I_z , \Tj_{jm} ] & =  m \: \Tj_{jm},  
	& [ ^J\Jspace I_{\pm} , \Tj_{jm} ] & = \sqrt{\left( j \mp m \right)\left( j \pm m + 1  \right)} \:   
	\Tj_{j,  m \pm1}.\label{CommutatorPBCorrespondanceCD}
	\end{align}
\end{subequations}
The Equations~\eqref{CommutatorPBCorrespondanceCD} 
show how the basic definition of spherical tensor operators relies on 
commutators, c.f.\ Eq.~\eqref{TOpDefALL}.
By comparing them to the Equations~\eqref{CommutatorPBCorrespondanceAB} it is clear that the defining relation
is also satisfied by the Poisson bracket of spherical harmonics 
up to the prefactor $i$ (and an additional prefactor implied by $\mp N_J \wigner(^J\Jspace I_{\pm}) = \Y_{1,\pm1}$ 
and $N_J \sqrt{2} \wigner(^J\Jspace I_{z}) = \Y_{1,0}$).\footnote{
	The prefactor $N_J$ is implied by the formula
	$\Tj_{1, \pm 1}=\mp ^J\Jspace I_{\pm} N_J=\mp ^J\Jspace I_{\pm}/\sqrt{\tr( ^J\Jspace I_{\pm} \, ^J\Jspace I_{\mp}})$
	where $^J\Jspace I_{\pm} = {}^J \Jspace I_x \pm i {}^J \Jspace I_y$. The trace
	is given by 
	$\tr(^J\Jspace I_{\pm} \, ^J\Jspace I_{\mp}) =  \tr[  ({}^J \Jspace I_x \pm i {}^J \Jspace I_y)
	({}^J \Jspace I_x \mp i {}^J \Jspace I_y)   ]
	= \tr[^J\Jspace (I^2)] - \tr[(^J\Jspace I_z)^2]$,
	where  $^J\Jspace (I^2) = (^J \Jspace I_x)^2 +  (^J \Jspace I_y)^2 + (^J \Jspace I_z)^2$,
	$\tr[ ^J\Jspace (I^2)]=J(J{+}1)(2J{+}1)$, and
	$\tr[ (^J\Jspace I_z)^2] = \sum_{m=-J}^J m^2 =J(J{+}1)(2J{+}1)/3$. 
	It follows that $N_J=1/\sqrt{\tr(^J\Jspace \, I_{\pm} ^J\Jspace I_{\mp})}
	=1/\sqrt{2J(J{+}1)(2J{+}1)/3}$.
    }
The particular cases of Eq.~\eqref{CommutatorPBCorrespondance} are also
considered in Equation~(5.13) of \cite{VGB89}.

\newcounter{globprefactor}
\setcounter{globprefactor}{\value{footnote}}

\subsubsection{Evaluation of the star product\label{starprodproof}}

In this section, we detail the explicit form of the differential star product 
for a single spin $1/2$ while ensuring that its form
conforms with Sec.~\ref{subsecStarProdDef}.
We build on the work in  
\cite{StarProd,VGB89}
and provide a simplified approach. The differential star product
is given as the sum of the pointwise product and the Poisson bracket of two spherical functions,
followed by the projection onto rank-one and rank-zero spherical harmonics,
i.e., by truncating spherical harmonics with rank greater than one.
Two distinct symbols are used: the exact  star product $\star$ is obtained from
the prestar product $\prestar$  after
truncating certain spherical harmonics.

\begin{result}\label{result2}
	Given  the Wigner functions $W_{F}(\theta,\phi)$ and $W_{G}(\theta,\phi)$ of 
	the operators $F$ and $G$ acting on  a single spin $1/2$,
	the prestar product (i.e., the product that results in the star product after truncation) is defined as
	\begin{align}
	W_{F}(\theta,\phi) \prestar W_{G}(\theta,\phi) &:=
	\sqrt{2 \pi} W_{F} W_{G}
	{-}
	\tfrac{i}{2} \{ W_{F}  , W_{G} \} \label{prestar_single}
	\intertext{using the Poisson bracket $\{\cdot,\cdot\}$ from Eq.~\eqref{PBDef}. 
		Note that the factor $\sqrt{2 \pi}=1/W_{\unity}$ is the inverse of the identity Wigner function. 
		The corresponding star product}
	W_{F}(\theta,\phi) \star W_{G}(\theta,\phi) &:=  \Proj[  W_{F}(\theta,\phi) \prestar W_{G}(\theta,\phi)]  \label{StarStatement}
	\end{align}
	is obtained by projecting onto spherical functions 
	of rank zero or one,
	e.g., by applying the projection operator
	\begin{equation}
	\label{projectionoperatordef} 
	\Proj \, \Y_{j m} := (1 + \mathcal{L}^2/12 - \mathcal{L}^4/24) \Y_{j m} = 
	\begin{cases}
	\Y_{j m} \quad \textup{for} \quad j < 2\\
	0 \quad \textup{for} \quad j = 2
	\end{cases}
	\end{equation}
	which uses the
	angular momentum operator $\mathcal{L}$ with
	eigenvalues $\mathcal{L}^2 \, \Y_{j m} = j(j{+}1) \Y_{j m}$.\footnote{The projector
			$\Proj f(\theta, \phi) =  \sum_{j=0}^1  \sum_{m=-j}^j  \Y_{j m} (\theta, \phi)  
			\int_{\theta=0}^{\pi} \int_{\phi=0}^{2 \pi}  f (\theta, \phi)  
			\Y_{j m}^*(\theta, \phi) 
			\: \sin{\theta} \,\mathrm{d}\theta  \, \mathrm{d}\phi$ can be applied to 
			an arbitrary
			spherical function,
			but Equation~\eqref{StarStatement} is fulfilled by the differential
			operator in Equation~\eqref{projectionoperatordef}.
		}
\end{result}
We will now verify that this definition satisfies the defining property
$W_{FG}(\theta,\phi) = W_{F}(\theta,\phi) \star W_{G}(\theta,\phi)$
of a star product [see Eq.~\eqref{StarProdEquation}]. We start by expanding the operators
$F  = \sum_{j=0}^{1}  \sum_{m=-j}^{j} f_{j m}     \T_{j m}$ and
$G  = \sum_{j=0}^{1}  \sum_{m=-j}^{j} g_{j m}     \T_{j m}$ into
tensor operators $\T_{j m}$. One directly 
obtains that
\begin{equation} \label{Operatorprodeval}
FG   
= \sum_{j_1,j_2=0}^{1}  \sum_{m_1=-j_1}^{j_1}  \sum_{m_2=-j_2}^{j_2}   
f_{j_1 m_1}  g_{j_2 m_2}   \:   \T_{j_1 m_1} \T_{j_2 m_2}.
\end{equation}
This summation involves products $\T_{j_1 m_1}  \T_{j_2 m_2}$ of tensor operators
which can be rewritten following Eq.~\eqref{ProdTop} as 
\begin{equation} \label{ProdTopspinhlaf}
\T_{j_1 m_1} \T_{j_2 m_2}
=
\sum_{L=|j_1-j_2|}^{n} \,
Q^{(1/2)}_{j_1 j_2 L}
\,
C^{LM}_{j_1 m_1 j_2 m_2}
\T_{L M},
\end{equation}
where $n$ can be limited to $n=\min( j_1{+}j_2, 2J )$ and $M=m_1{+}m_2$.
In order to compare Eqs.~\eqref{Operatorprodeval} and \eqref{ProdTopspinhlaf} with their
counterparts in the Wigner space, we also compute the star product
$W_{F}(\theta,\phi) \star  W_{G}(\theta,\phi)$. Recall from Eq.~\eqref{TensorOpSphHarm}
that the Wigner representations of $F$ and $G$ are given by 
$W_{F}  = \sum_{j=0}^{1}  \sum_{m=-j}^{j} f_{j m}   \Y_{j m}$ and
$W_{G}  = \sum_{j=0}^{1}  \sum_{m=-j}^{j} g_{j m}    \Y_{j m}$.
The prestar product (i.e., the product that results in the star product after truncation)
evaluates to
\begin{equation}
W_{F}(\theta,\phi) \prestar  W_{G}(\theta,\phi) =  
\sum_{j_1, j_2=0}^{1}  
\sum_{m_1=-j_1}^{j_1}    \sum_{m_2=-j_2}^{j_2}
f_{j_1 m_1}  g_{j_2 m_2} 
\,\Y_{j_1 m_1} \prestar  \Y_{j_2 m_2}, \label{StarProdEval}
\end{equation} 
where the explicit formula of Eq.~\eqref{prestar_single} results in
\begin{equation}
\Y_{j_1 m_1} \prestar  \Y_{j_2 m_2} =
\sqrt{2\pi}\, \Y_{j_1 m_1} \Y_{j_2 m_2}
-
\tfrac{i}{2}\, \{ \Y_{j_1 m_1} , \Y_{j_2 m_2} \} 
=
\sum_{L=|j_1-j_2|}^{j_1+j_2}
{\Lambda_{j_1 j_2 L}}
C^{LM}_{j_1 m_1 j_2 m_2}
\Y_{LM}.
\label{StarProdSph}
\end{equation} 
Here, we have applied the formulas in Eqs.~\eqref{YPB} and  \eqref{Yprod}
and use the notation $\Lambda_{j_1 j_2 L}:=\sqrt{2\pi} Z_{j_1 j_2 L}
-
(i/2)  U_{j_1 j_2 L}$. The corresponding star product $\star$ is obtained
if we substitute the upper summation bound in Eq.~\eqref{StarProdSph}
with $n=\min( j_1{+}j_2, 2J )$, which is the same bound as in Eq.~\eqref{ProdTopspinhlaf}.
We are now ready to compare the tensor operators in 
Eqs.~\eqref{Operatorprodeval} and \eqref{ProdTopspinhlaf}
with their respective complements in the Wigner space
in Eqs.~\eqref{StarProdEval} and \eqref{StarProdSph}.
Consequently, we have to compare the explicit values of the coefficients
$Q^{(1/2)}_{j_1 j_2 L}$ and $\Lambda_{j_1 j_2 L}$ and we obtain that
\begin{align*}
& Q^{(1/2)}_{0 0 0} = Q^{(1/2)}_{0 1 1} = Q^{(1/2)}_{1 0 1}
=\Lambda_{0 0 0}  =\Lambda_{0 1 1}   = \Lambda_{1 0 1} = \tfrac{1}{\sqrt{2}},
\\
& Q^{(1/2)}_{1 1 0} =\Lambda_{1 1 0}  = -{\tfrac{\sqrt{3}}{\sqrt{2}}},\; 
Q^{(1/2)}_{1 1 1} =\Lambda_{1 1 1}  = -1,
\end{align*}
and all other values are zero. This verifies that 
$\Y_{j_1 m_1} \star \Y_{j_2 m_2}=
\Proj (  \Y_{j_1 m_1} \prestar \Y_{j_2 m_2}) = \wigner (   \T_{j_1 m_1} \T_{j_2 m_2})$,
and one respectively concludes that $W_{FG}(\theta,\phi) = W_{F}(\theta,\phi) \star W_{G}(\theta,\phi)$.
The preceding discussion is summarized as 
\begin{theorem} \label{StarProdLemma}
	The explicit form of the star product $\star$ in Eq.~\eqref{StarStatement}
	observes its defining property from Eq.~\eqref{StarProdEquation},
	i.e.\  $W_{FG}(\theta,\phi) = W_{F}(\theta,\phi) \star W_{G}(\theta,\phi)$.
\end{theorem}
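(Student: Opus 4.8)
The plan is to reduce the identity $W_{FG}=W_F\star W_G$ to a finite scalar check by exploiting that the matrix product $FG$, the pointwise product of spherical functions, the Poisson bracket $\{\cdot,\cdot\}$, and the projector $\Proj$ are all (bi)linear in their arguments. Writing $F=\sum_{j m}f_{j m}\T_{j m}$ and $G=\sum_{j m}g_{j m}\T_{j m}$ --- where only ranks $j\in\{0,1\}$ occur since $J=1/2$ --- it suffices, exactly as in Eqs.~\eqref{Operatorprodeval} and \eqref{StarProdEval}, to prove $\wigner(\T_{j_1 m_1}\T_{j_2 m_2})=\Y_{j_1 m_1}\star\Y_{j_2 m_2}$ for each basis pair and then sum against the coefficients $f_{j_1 m_1}g_{j_2 m_2}$.

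First I would evaluate the operator side. By Eq.~\eqref{ProdTopspinhlaf}, $\T_{j_1 m_1}\T_{j_2 m_2}$ expands over tensor operators $\T_{LM}$ with $M=m_1+m_2$, Clebsch--Gordan weights $C^{LM}_{j_1 m_1 j_2 m_2}$, and scalar coefficients $Q^{(1/2)}_{j_1 j_2 L}$ built from Wigner $6$-$j$ symbols; the sum truncates at $L=\min(j_1+j_2,2J)\le 1$ because $Q^{(1/2)}_{j_1 j_2 L}=0$ whenever $L>2J$ (the $6$-$j$ symbol with a row of $J=1/2$'s violates the triangle condition already for $L=2$). Applying $\wigner$ and Eq.~\eqref{TensorOpSphHarm} then replaces each $\T_{LM}$ by $\Y_{LM}$.

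Next I would evaluate the Wigner side from the definition of the prestar product, Eq.~\eqref{prestar_single}: inserting the product expansion Eq.~\eqref{Yprod} (coefficients $Z_{j_1 j_2 L}$) and the Poisson-bracket expansion Eq.~\eqref{YPB} (coefficients $U_{j_1 j_2 L}$) gives $\Y_{j_1 m_1}\prestar\Y_{j_2 m_2}=\sum_{L=|j_1-j_2|}^{j_1+j_2}\Lambda_{j_1 j_2 L}\,C^{LM}_{j_1 m_1 j_2 m_2}\,\Y_{LM}$ with $\Lambda_{j_1 j_2 L}=\sqrt{2\pi}\,Z_{j_1 j_2 L}-(i/2)\,U_{j_1 j_2 L}$. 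Applying $\Proj$ from Eq.~\eqref{projectionoperatordef} kills the $L=2$ term (which only arises for $j_1=j_2=1$), exactly mirroring the truncation already present on the operator side. Both sides are now Clebsch--Gordan-weighted sums of the same $\Y_{LM}$'s, so the claim reduces to the scalar identities $Q^{(1/2)}_{j_1 j_2 L}=\Lambda_{j_1 j_2 L}$ over the admissible triples $(j_1,j_2,L)\in\{(0,0,0),(0,1,1),(1,0,1),(1,1,0),(1,1,1)\}$. Inserting the explicit $6$-$j$ and Clebsch--Gordan values (and $R=\sqrt{3/(8\pi)}$ inside $U$) yields $\tfrac{1}{\sqrt2}$ in the first three cases, $-\tfrac{\sqrt3}{\sqrt2}$ for $(1,1,0)$, and $-1$ for $(1,1,1)$, with all other coefficients vanishing on both sides; hence the two sides agree. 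Bilinearity then upgrades this to $W_F\star W_G=\wigner(FG)=W_{FG}$.

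I expect the main obstacle to be purely normalization bookkeeping: checking that the prefactor $\sqrt{2\pi}=1/W_{\unity}$ in front of the pointwise product, the factor $1/(R\sin\theta)$ hidden inside the Poisson bracket, and the $4\pi$-type factors inside $Z_{j_1 j_2 L}$ and $U_{j_1 j_2 L}$ conspire so that $\Lambda_{j_1 j_2 L}$ precisely equals the $6$-$j$-symbol expression $Q^{(1/2)}_{j_1 j_2 L}$, and that $\Proj$ --- though a priori stated only for rank-$\le 2$ input --- is the correct object to reproduce the operator-level cutoff $L\le 2J$. Once these conventions are pinned down, the theorem follows at once from the five-way coefficient comparison.
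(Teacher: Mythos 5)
Your proposal is correct and follows essentially the same route as the paper's own argument: expand $F$ and $G$ in tensor operators, reduce by bilinearity to the basis products, compare the coefficients $Q^{(1/2)}_{j_1 j_2 L}$ from Eq.~\eqref{ProdTopspinhlaf} with $\Lambda_{j_1 j_2 L}=\sqrt{2\pi}\,Z_{j_1 j_2 L}-(i/2)U_{j_1 j_2 L}$ from Eqs.~\eqref{Yprod} and \eqref{YPB}, and note that the projector reproduces the $L\le 2J$ cutoff. The five nonzero coefficient values you list agree exactly with those computed in the paper, so nothing is missing.
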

The explicit values for the prestar product $\prestar$  
(i.e., the product that results in the star product after truncation)
for Wigner functions
forming a basis for a single spin $1/2$ are presented in Table~\ref{Table_StarProds}. 
The corresponding star product $\star$ is obtained by truncating the underlined parts 
in Table~\ref{Table_StarProds}.

\begin{table}[tb]
	\centering 
	\caption{ 
		Prestar product $\prestar$ (i.e., the product that results in the star product after truncation)
		of spherical harmonics for a single spin $1/2$. The corresponding
		star product $\star$ is obtained by discarding 
		the underlined components $\Y_{jm}$ with $j>1$.
		\label{Table_StarProds}}
	\begin{tabular}{@{\hspace{2mm}}l
			@{\hspace{5mm}}l@{\hspace{5mm}}l@{\hspace{5mm}}l@{\hspace{5mm}}l@{\hspace{2mm}}}
		\\[-2mm]
		\hline\hline
		\\[-2mm]
		$ \downarrow \prestar  \rightarrow$         &    $\Y_{00}$   &     $\Y_{1,-1}$    &    $\Y_{10}$     
		&  $\Y_{11}$ 
		\\[1mm] 
		\hline 
		\\[-2mm]
		$\Y_{00}$     
		& $\tfrac{1}{\sqrt{2}}\Y_{00}$     
		&   $\tfrac{1}{\sqrt{2}}\Y_{1,-1}$    
		&   $\tfrac{1}{\sqrt{2}}\Y_{10}$  
		&    $\tfrac{1}{\sqrt{2}}\Y_{11}$
		\\[3mm] 
		$\Y_{1,-1}$       
		&   $\tfrac{1}{\sqrt{2}}\Y_{1,-1}$      
		&   $0\underline{{+}\sqrt{\tfrac{3}{5}}\Y_{2,-2}}$ 
		&     $\tfrac{1}{\sqrt{2}}\Y_{1,-1} \underline{{+} \sqrt{\tfrac{3}{10}}\Y_{2,-1}}$
		& $-\tfrac{1}{\sqrt{2}}\Y_{00}{+}\tfrac{1}{\sqrt{2}}\Y_{10}$
		\\ &&&& \hspace{11.2mm} $\underline{{+}\tfrac{1}{\sqrt{10}}\Y_{20}}$ 
		\\[3mm] 
		$\Y_{10}$         
		&  $\tfrac{1}{\sqrt{2}}\Y_{10}$    
		&   $-\tfrac{1}{\sqrt{2}}\Y_{1,-1}\underline{{+}\sqrt{\tfrac{3}{10}}\Y_{2,-1}}$ 
		&  $\tfrac{1}{\sqrt{2}}\Y_{00}\underline{{+}\sqrt{\tfrac{2}{5}}\Y_{20}}$  
		&    $\tfrac{1}{\sqrt{2}}\Y_{11}\underline{{+}\sqrt{\tfrac{3}{10}}\Y_{21}}$
		\\[3mm] 
		$\Y_{11}$    
		&  $\frac{1}{\sqrt{2}}\Y_{11}$  
		&  $ -\tfrac{1}{\sqrt{2}}\Y_{00}{-}\tfrac{1}{\sqrt{2}}\Y_{10}$
		&   $-\tfrac{1}{\sqrt{2}}\Y_{11}\underline{{+}\sqrt{\tfrac{3}{10}}\Y_{21}}$  
		&   $0\underline{{+}\sqrt{\tfrac{3}{5}}\Y_{22}}$ 
		\\ && \hspace{11.2mm} $\underline{{+}\tfrac{1}{\sqrt{10}}\Y_{20}}$   
		\\[3mm] \hline \hline
	\end{tabular} 
\end{table}

\subsubsection{Equation of motion based on the star product}
We can now apply Result~\ref{result2} to the differential equation 
given in Eq.~\eqref{MoyalEqDefinition} in order to provide its explicit form
for a single spin $1/2$. This results in the time evolution \cite{VGB89,klimov2002ExactEvolution}
\begin{equation}\label{NeumannWignerEvolution}
\partial W_{A}/\partial t = 
\{ W_{A}  , W_{\mathcal{H}}  \}
\end{equation}
in the Wigner space which
is governed by the Poisson bracket.
A detailed visualization of the whole structure of our formalism leading to 
Eq.~\eqref{NeumannWignerEvolution} is given
in Figure~\ref{GraphicalStarProduct} of
\ref{AppendixStarProd}.
It can be inferred from Table~\ref{Table_StarProds} 
(as $\Y_{jm}$ with $j>1$ in the decomposition are 
symmetric with respect to the order of multiplication) that 
$\{ W_A,W_B\}= i ( W_A {\star} W_B - W_B {\star} W_A  ) =
i ( W_A {\prestar} W_B - W_B {\prestar} W_A  )$ holds for
a single spin $1/2$. This means that a
truncation is not required to compute the time evolution in this 
particular case. In the case of Hamiltonians that contain only
$^J\Jspace I_x$, $^J\Jspace I_y$ and $^J\Jspace I_z$ in the form
$^J \mathcal{H}=c_x \, ^J\Jspace I_x + c_y \, ^J \Jspace I_y + c_z \,^J \Jspace I_z$
Eq.~\eqref{NeumannWignerEvolution}
holds for arbitrary spin $J$ [up to a global prefactor $N_J$
as implied by Eq.~\eqref{CommutatorPBCorrespondance}],\footnoteref{\theglobprefactor}
and agrees with the results of  \cite{VGB89,klimov2002ExactEvolution}.
Consequently, the time 
differential 
for an arbitrary spin $J$ evolving under a linear Hamiltonian
is given as
\begin{equation} \label{evolutionArbitraryJ}
\partial W_{^J\Jspace A}/\partial t = 
N_J \{ W_{^J\Jspace A}  , W_{^J \mathcal{H}}  \}, 
\end{equation}
where $N_J$ is a global prefactor\footnoteref{\theglobprefactor}
and $^J\Jspace A$ denotes an arbitrary spin-$J$ operator.

\subsection{Star product for multiple coupled spins with spin number
\texorpdfstring{$J=1/2$}{J=1/2} \label{app_coupled}}
We extend the star product from Result~\ref{result2} to multiple coupled spins.
To this end, we introduce a projection operator 
$\FullProj:=\prod_{k=1}^N   
\Projk$
which restricts resulting spherical harmonics 
to rank zero and one\footnote{
	In general, an arbitrary, multivariate spherical function
	$f=f(\theta_1, \phi_1, \ldots, \theta_N, \phi_N )$
	is projected using
$\Projk   f = \sum_{j_k=0}^1  \sum_{m_k=-j_k}^{j_k}
\Y_{j_k m_k}(\theta_k, \phi_k) 
\int_{\theta_k=0}^{\pi} \int_{\phi_k=0}^{2 \pi} 
f\;
\Y^*_{j_k m_k}(\theta_k, \phi_k) 
\sin{\theta_k}\, \mathrm{d}\phi_k \, \mathrm{d}\theta_k$.}
and which can be via Equation~\eqref{projectionoperatordef}
written as
\begin{equation}
\label{projectiondefinition}
	\Projk := (1 + (\mathcal{L}^{\{k\}})^2/12 - (\mathcal{L}^{\{k\}})^4/24).
\end{equation}
The angular momentum operator $\mathcal{L}$
acts as $(\mathcal{L}^{\{k\}})^2 \, \Y_{j m}(\theta_k, \phi_k)$
and has the
eigenvalues $j(j{+}1) \Y_{j m}(\theta_k, \phi_k)$.
This projection will be used to truncate superfluous terms in the following definition of the star product:
\begin{result}\label{result3}
	The prestar product  (the product that results in the star product after truncation)
	of two Wigner functions $W_A$ and $W_B$ 
	corresponding to operators $A$ and $B$ in a 
	system of $N$ coupled spins $1/2$  is defined 
	as 
	\begin{equation*}
	W_A \prestar W_B :=    W_A  ( \prod_{k=1}^{N}  \prestar^{ \{k \} } ) W_B,
	\end{equation*}
	where the individual prestar operators are given by  $\prestar^{ \{k \}} :=  \sqrt{2 \pi} 
	- i\{ \cdot  , \cdot  \}^{\{ k \}}/2  $ (cf.\ Result~\ref{result2},
	Eq.~\eqref{prestar_single}) and 
	$\{ \cdot , \cdot  \}^{\{ k \}}$ denotes the Poisson bracket taken with respect to the
	variables $\theta_k$ and $\phi_k$, see Eq.~\eqref{PBDef}. The star product 
	\begin{equation}
	W_A \star W_B :=   \FullProj ( W_A \prestar W_B ) \label{multispinstarproddefinition} 
	\phantom{:}=
	\FullProj (  W_A   [   \prod_{k=1}^{N}  (  \sqrt{2 \pi}
	- \tfrac{i}{2}   \{ \cdot  , \cdot  \}^{\{ k \}}  )  ]  W_B  ).
	\end{equation}
	is obtained by applying 
	the projection operator $ \FullProj$.
\end{result}
The star product 
for coupled spins in Result~\ref{result3}
allows us to establish the form of the Wigner representation for multispin product operators
$\T^{\{1\}}_{j_1 m_1} \cdots \T^{\{N\}}_{j_N m_N}$, which consists of matrix products
of single-spin operators, cf.\ Table~\ref{tensordef}. In the Wigner representation,
matrix products are substituted by star products:
\begin{lemma} \label{ProductOperatorTransform}
	The Wigner representation of product 
	operators $\T^{\{1\}}_{j_1 m_1} \cdots \T^{\{N\}}_{j_N m_N}$ 
	is given by the prestar products  (i.e., the product that results in the star product after truncation)
	of the Wigner representations
	$ \wigner (  \T^{\{k\}}_{j_k m_k} )$
	of the individual single-spin operators, i.e.,
	\begin{equation}\label{Eq_ProductOperatorTransform}
	\wigner (   \T^{\{1\}}_{j_1 m_1}  \cdots\, \T^{\{N\}}_{j_N m_N}  ) = 
	\wigner (  \T^{\{1\}}_{j_1 m_1} ) \star \cdots \star \wigner (  \T^{\{N\}}_{j_N m_N} ) 
	= \wigner (  \T^{\{1\}}_{j_1 m_1} ) \prestar \cdots \prestar \wigner (  \T^{\{N\}}_{j_N m_N} ).
	\end{equation}
\end{lemma}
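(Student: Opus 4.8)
The plan is to compute both sides of \eqref{Eq_ProductOperatorTransform} explicitly and compare. The left-hand side follows from earlier results: by Lemma~\ref{basis} with $J=1/2$ one has $\T^{\{1\}}_{j_1 m_1}\cdots\T^{\{N\}}_{j_N m_N}=\sqrt{2}^{-N(N-1)}\,\T_{j_1 m_1}\otimes\cdots\otimes\T_{j_N m_N}$, so linearity of $\wigner$ together with \eqref{tensorprodOpTransform} gives
\[
\wigner(\T^{\{1\}}_{j_1 m_1}\cdots\T^{\{N\}}_{j_N m_N})=\sqrt{2}^{-N(N-1)}\prod_{k=1}^{N}\Y_{j_k m_k}(\theta_k,\phi_k).
\]
It therefore suffices to show that the iterated prestar product on the right of \eqref{Eq_ProductOperatorTransform} evaluates to the same function and that inserting the projector $\FullProj$ of \eqref{multispinstarproddefinition} changes nothing; the latter also yields the middle equality, i.e.\ that $\prestar$ and $\star$ agree for these particular arguments.

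Write $f_k:=\wigner(\T^{\{k\}}_{j_k m_k})$, which by \eqref{NSpinSpH} equals $\Y_{j_k m_k}(\theta_k,\phi_k)/\sqrt{4\pi}^{N-1}$ and hence depends only on the variables $(\theta_k,\phi_k)$. The central point is that, after expanding the defining bidifferential operator $\prod_{\ell=1}^{N}\bigl(\sqrt{2\pi}-\tfrac{i}{2}\{\cdot,\cdot\}^{\{\ell\}}\bigr)$ of Result~\ref{result3} into a sum over subsets of $\{1,\dots,N\}$, every term containing at least one Poisson-bracket factor annihilates the pair $(f_1,f_2)$: the bracket $\{\cdot,\cdot\}^{\{\ell\}}$ requires a $\theta_\ell$- or $\phi_\ell$-derivative of \emph{both} of its arguments, whereas for each $\ell$ at least one of $f_1,f_2$ is constant in $(\theta_\ell,\phi_\ell)$. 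Thus $f_1\prestar f_2=(\sqrt{2\pi})^{N}f_1 f_2$, again a product of functions supported on disjoint blocks of variables. Iterating over the $N-1$ prestar operations — at each step a partial product $\prod_{k\in S}f_k$ is still constant in $(\theta_\ell,\phi_\ell)$ for $\ell\notin S$, so only the pointwise term survives — one obtains
\[
f_1\prestar\cdots\prestar f_N=(\sqrt{2\pi})^{N(N-1)}\prod_{k=1}^{N}f_k=\frac{(2\pi)^{N(N-1)/2}}{(4\pi)^{N(N-1)/2}}\prod_{k=1}^{N}\Y_{j_k m_k}(\theta_k,\phi_k)=\sqrt{2}^{-N(N-1)}\prod_{k=1}^{N}\Y_{j_k m_k}(\theta_k,\phi_k),
\]
which matches the left-hand side. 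Finally, since we work with spins $1/2$ every factor $\Y_{j_k m_k}$ has rank $j_k\le 1$ (and every intermediate product likewise has rank at most one in each variable block), so $\FullProj=\prod_k\Projk$ — and the projector applied at each $\star$-step — acts as the identity; hence $f_1\star\cdots\star f_N=f_1\prestar\cdots\prestar f_N$, which closes both equalities. Order-independence of the iterated products causes no difficulty here, as on disjoint-support arguments $\prestar$ reduces to pointwise multiplication up to scalar constants.

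I expect no genuine obstacle, only bookkeeping. The one substantive step is the vanishing of all Poisson-bracket contributions, which is exactly where the disjointness of the phase-space blocks belonging to different spins is used. The remaining care lies in tracking the $\sqrt{2\pi}$ factor produced at each of the $N-1$ prestar steps together with the $\sqrt{4\pi}$ factors carried by the $\Y_{00}$ components of each $f_k$, and checking that their combination reproduces precisely the $\sqrt{2}^{-N(N-1)}$ normalization coming from Lemma~\ref{basis}.
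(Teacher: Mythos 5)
Your proposal is correct and follows essentially the same route as the paper's own proof: the key observation in both is that every Poisson bracket $\{\cdot,\cdot\}^{\{\ell\}}$ annihilates arguments supported on disjoint variable blocks, so the iterated prestar product collapses to $(\sqrt{2\pi})^{N(N-1)}$ times the pointwise product, which is then matched against Eq.~\eqref{tensorprodOpTransform}. Your normalization bookkeeping and the remark that the projector $\FullProj$ acts trivially (so $\star$ and $\prestar$ coincide here) are both accurate and merely spell out steps the paper leaves implicit.
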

\begin{proof}
	From Eq.~\eqref{NSpinSpH}, we know that $\wigner (  \T^{\{k\}}_{j_k m_k} ) = \Y^{\{k\}}_{j_k m_k}$.
	All the Poisson brackets in the star product vanish as their arguments 
	operate on different spins. Therefore, the right hand side of Eq.~\eqref{Eq_ProductOperatorTransform}
	is equal to [c.f. Eq.~\eqref{multispinstarproddefinition}]
	\begin{align*}
	& \Y^{\{1\}}_{j_1 m_1} \prestar  \cdots \prestar  \Y^{\{N\}}_{j_N m_N} =
	\sqrt{2 \pi}^{N ( N-1 ) } \: \Y^{\{1\}}_{j_1 m_1}  \cdots\, \Y^{\{N\}}_{j_N m_N} \\
	& =
	\Y_{j_1 m_1}(\theta_1, \phi_1)\, \cdots\, \Y_{j_N m_N}(\theta_N, \phi_N)  / \sqrt{2}^{N ( N-1 ) }.
	\end{align*}
	Equation~\eqref{Eq_ProductOperatorTransform}  is now a consequence of Eq.~\eqref{tensorprodOpTransform}.
\end{proof}
As a consequence of Lemma~\ref{ProductOperatorTransform} and the linearity of 
the star product, the Wigner representation of an arbitrary product operator
$A_1 A_2 \cdots A_N$ can be simplified as
\begin{equation*}
\wigner (A_1 A_2 \cdots A_N) = \sqrt{2 \pi}^{N ( N-1 ) } \: \wigner(A_1) \wigner(A_2) \cdots \wigner(A_N),
\end{equation*}
where each linear operator is a linear combination of tensor operators acting on spin $k$
$A_k = \sum_{j_k,m_k} c_{j_km_k} \T^{\{k\}}_{j_k m_k}$.
For example, the Wigner representation of
Cartesian product operators is obtained by substituting $A_k$
with $I_{k\alpha_k}$ for $\alpha_k \in \{ x,y,z \}$.

The product of single-spin operators is computed
via Lemma~\ref{single-spin}(c)
as $\T^{\{ k \}}_{j_1 m_1} \T^{\{ k \}}_{j_2 m_2} = 
(\T_{j_1 m_1} \T_{j_2 m_2})^{\{ k\}} / \sqrt{2}^{N-1}$. 
Also, the star product of Wigner functions can be concisely
stated by applying
the notation for embedded Wigner functions 
$W_{A}^{\{k\}} :=  
\Y_{00}  \cdots \Y_{00}\allowbreak{} W_{A}(\theta_k,\phi_k) \allowbreak{}
\Y_{00}
\cdots   
\Y_{00}
\allowbreak{} =W_A (\theta_k,\phi_k)/\sqrt{4 \pi}^{N-1}$.
This results in the following
\begin{lemma} \label{EmbSingleSpinStarProduct}
	The star product of the two 
	Wigner functions
	$\wigner (  \T^{\{k\}}_{j_1 m_1} )=\Y^{\{ k \}}_{j_1 m_1}$ and 
	$\wigner (  \T^{\{k\}}_{j_2 m_2} )=\Y^{\{ k \}}_{j_2 m_2} $ is given by
	\begin{equation} 
	\Y^{\{ k \}}_{j_1 m_1} \star \Y^{\{ k \}}_{j_2 m_2}  = 
	W_{(\T_{j_1 m_1}\T_{j_2 m_2})} (\theta_k, \phi_k) / \sqrt{8 \pi}^{N-1}
	 =
	W_{(\T_{j_1 m_1}\T_{j_2 m_2})}^{\{ k\}} / \sqrt{2}^{N-1}.
	\label{EmbSinleSpinWIgnerProduct}
	\end{equation}
\end{lemma}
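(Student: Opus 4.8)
The plan is to evaluate $\Y^{\{k\}}_{j_1 m_1} \star \Y^{\{k\}}_{j_2 m_2}$ directly from the definition of the multi-spin star product in Result~\ref{result3}, exploiting that both factors are embedded operators supported on a single spin. Written out over the $N$ spheres, $\Y^{\{k\}}_{j_i m_i}$ is the product of the constant $\Y_{00}=1/\sqrt{4\pi}$ in every coordinate pair $(\theta_\ell,\phi_\ell)$ with $\ell\neq k$ and of $\Y_{j_i m_i}(\theta_k,\phi_k)$ in the $k$th pair. Since the individual prestar operators $\prestar^{\{\ell\}} = \sqrt{2\pi} - \tfrac{i}{2}\{\cdot,\cdot\}^{\{\ell\}}$ act on pairwise disjoint coordinate pairs and commute, and since $W_A$ and $W_B$ here factor as products over $\ell$, I would first argue---by the same mechanism as in the proof of Lemma~\ref{ProductOperatorTransform}---that the prestar product factorizes,
\[
\Y^{\{k\}}_{j_1 m_1} \prestar \Y^{\{k\}}_{j_2 m_2}
= \big(\Y_{j_1 m_1}\prestar\Y_{j_2 m_2}\big)\;\prod_{\ell\neq k}\big(\Y_{00}\prestar^{\{\ell\}}\Y_{00}\big),
\]
where the first factor is the single-spin prestar product of Result~\ref{result2} taken in the variables $(\theta_k,\phi_k)$.

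Next I would evaluate the $N-1$ trivial factors: because the Poisson bracket of two constants vanishes, $\Y_{00}\prestar^{\{\ell\}}\Y_{00} = \sqrt{2\pi}\,\Y_{00}^2 = \sqrt{2\pi}/(4\pi) = 1/\sqrt{8\pi}$. Then I would apply the projector $\FullProj = \prod_{\ell=1}^N \Proj^{\{\ell\}}$: each factor with $\ell\neq k$ is constant in $(\theta_\ell,\phi_\ell)$, hence of rank zero, and is left unchanged (immediate from $\Proj\,\Y_{00}=\Y_{00}$); on the $k$th factor, $\Proj^{\{k\}}$ acts exactly as the single-spin projector $\Proj$ of Eq.~\eqref{projectionoperatordef} in $(\theta_k,\phi_k)$, so by Result~\ref{result2} and Theorem~\ref{StarProdLemma} it sends $\Y_{j_1 m_1}\prestar\Y_{j_2 m_2}$ to $\Y_{j_1 m_1}\star\Y_{j_2 m_2} = \wigner(\T_{j_1 m_1}\T_{j_2 m_2}) = W_{(\T_{j_1 m_1}\T_{j_2 m_2})}(\theta_k,\phi_k)$. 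Collecting the factors yields $\Y^{\{k\}}_{j_1 m_1}\star\Y^{\{k\}}_{j_2 m_2} = W_{(\T_{j_1 m_1}\T_{j_2 m_2})}(\theta_k,\phi_k)/\sqrt{8\pi}^{N-1}$, which is the first claimed identity; the second then follows from the definition $W_A^{\{k\}} = W_A(\theta_k,\phi_k)/\sqrt{4\pi}^{N-1}$ together with $\sqrt{8\pi}=\sqrt{2}\,\sqrt{4\pi}$.

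The only delicate point is the bookkeeping: tracking the $\Y_{00}=1/\sqrt{4\pi}$ normalizations so that the $N-1$ constant factors collapse to precisely $\sqrt{8\pi}^{-(N-1)}$, and checking that $\Proj^{\{\ell\}}$ is genuinely inert on functions independent of $(\theta_\ell,\phi_\ell)$. As a consistency check---and an alternative route that avoids the explicit differential form---I would also verify the claim via Lemma~\ref{single-spin}(c): since $\T^{\{k\}}_{j_1 m_1}\T^{\{k\}}_{j_2 m_2} = (\T_{j_1 m_1}\T_{j_2 m_2})^{\{k\}}/\sqrt{2}^{N-1}$, applying $\wigner$ together with the linear extension of Eq.~\eqref{NSpinSpH} to the single-spin operator $\T_{j_1 m_1}\T_{j_2 m_2}$ reproduces the same two expressions, provided the defining property $W_{AB}=W_A\star W_B$ is invoked for the multi-spin star product.
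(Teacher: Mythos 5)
Your proposal is correct and follows essentially the same route as the paper: unfold the multi-spin star product of Result~\ref{result3}, observe that the $N-1$ prestar factors on spins $\ell\neq k$ reduce to the trivial multiplication by $\sqrt{2\pi}$ (yielding, together with the $\Y_{00}=1/\sqrt{4\pi}$ embedding constants, the prefactor $1/\sqrt{8\pi}^{N-1}$), and invoke Theorem~\ref{StarProdLemma} to identify the remaining single-spin factor with $W_{(\T_{j_1 m_1}\T_{j_2 m_2})}(\theta_k,\phi_k)$. Your normalization bookkeeping matches the paper's, and the closing consistency check via Lemma~\ref{single-spin}(c) is a harmless addition not present in the paper's proof.
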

\begin{proof}
	We set $F:=\T_{j_1 m_1}$ and $G:=\T_{j_2 m_2}$, and Lemma~\ref{StarProdLemma} verifies 
	that $W_{FG}(\theta,\phi) = W_{F}(\theta,\phi) \star W_{G}(\theta,\phi)$.
	Applying the definition of the multispin star product form Eq.~\eqref{multispinstarproddefinition}
	of Result~\ref{result3}	to $\Y^{\{ k \}}_{j_1 m_1} \star \Y^{\{ k \}}_{j_2 m_2}$ results in
	\begin{equation*}
	\sqrt{2 \pi}^{N-1}  
	\Projk
	[ \tfrac{ \Y_{j_1 m_1}(\theta_k, \phi_k) }{ \sqrt{4 \pi}^{N-1} }  
	\prestar^{\{k \}} \tfrac{ \Y_{j_2 m_2}(\theta_k, \phi_k)}{\sqrt{4 \pi}^{N-1}}  ].
	\end{equation*}
	The formula
	$  \Projk   [  \Y_{j_1 m_1}(\theta_k, \phi_k) \prestar^{\{k\}} \Y_{j_2 m_2}(\theta_k, \phi_k)  ]\allowbreak{}
	=W_{FG}(\theta_k, \phi_k)$ from Thm.~\ref{StarProdLemma} or
	Eq.~\eqref{StarStatement}
	concludes the proof.
\end{proof}
After these preparations, we can prove that the star product given 
in Result~\ref{result3} actually satisfies its defining
property from Eq.~\eqref{StarProdEquation}:
\begin{theorem} \label{NSpinStarProduct}
	In a system of $N$ interacting spins $1/2$, the Wigner representations
	$W_A$ and $W_B$ of two operators $A$ and $B$ satisfy the equation
	$
	\mathcal{W}(  AB ) = W_A \star W_B
	$.
\end{theorem}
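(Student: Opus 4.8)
The plan is to reduce the identity to basis operators and then exploit that, for a product operator, both sides factorize over the $N$ spins, so that the single-spin result of Theorem~\ref{StarProdLemma} can be applied sphere by sphere. I would first observe that both $\wigner(AB)$ and $W_A\star W_B$ are bilinear in the pair $(A,B)$: $\wigner$ is linear and $(A,B)\mapsto AB$ is bilinear, while $\star=\FullProj\circ\prestar$ is bilinear because the pointwise product, the Poisson brackets occurring in each factor $\prestar^{\{k\}}=\sqrt{2\pi}-\tfrac{i}{2}\{\cdot,\cdot\}^{\{k\}}$, and $\FullProj$ are all linear operations. By Lemma~\ref{basis} it therefore suffices to verify $\wigner(AB)=W_A\star W_B$ for basis operators $A=\T_{j_1 m_1}\otimes\cdots\otimes\T_{j_N m_N}$ and $B=\T_{j_1' m_1'}\otimes\cdots\otimes\T_{j_N' m_N'}$ with $j_k,j_k'\in\{0,1\}$.

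For the left-hand side, since $A$ and $B$ are tensor products, the mixed-product property of tensor products from Sec.~\ref{ProdOpFormalism} (equivalently, that operators acting on different spins commute) gives $AB=\bigotimes_{k=1}^N(\T_{j_k m_k}\T_{j_k' m_k'})$. Expanding each single-spin factor $\T_{j_k m_k}\T_{j_k' m_k'}$ into tensor operators of rank at most one via Eq.~\eqref{ProdTop}, applying $\wigner$, and using its linearity together with Eq.~\eqref{tensorprodOpTransform} yields $\wigner(AB)=\prod_{k=1}^N W_{(\T_{j_k m_k}\T_{j_k' m_k'})}(\theta_k,\phi_k)$; by the single-spin Theorem~\ref{StarProdLemma} each factor equals $\Y_{j_k m_k}(\theta_k,\phi_k)\star\Y_{j_k' m_k'}(\theta_k,\phi_k)$, the single-spin star product.

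For the right-hand side, Eq.~\eqref{tensorprodOpTransform} gives $W_A=\prod_k\Y_{j_k m_k}(\theta_k,\phi_k)$ and $W_B=\prod_k\Y_{j_k' m_k'}(\theta_k,\phi_k)$, each a product of functions depending on disjoint pairs of variables. Since the bidifferential operator $\prestar^{\{k\}}$ acts only on $(\theta_k,\phi_k)$, expanding the composition $\prod_k\prestar^{\{k\}}$ and using that $\partial_{\theta_k},\partial_{\phi_k}$ annihilate all but the $k$-th factor shows that the prestar product factorizes, $W_A\prestar W_B=\prod_k[\Y_{j_k m_k}(\theta_k,\phi_k)\prestar^{\{k\}}\Y_{j_k' m_k'}(\theta_k,\phi_k)]$; similarly $\FullProj=\prod_k\Projk$ acts factorwise, so by Eq.~\eqref{StarStatement} $W_A\star W_B=\prod_k\Projk[\Y_{j_k m_k}\prestar^{\{k\}}\Y_{j_k' m_k'}]=\prod_k[\Y_{j_k m_k}(\theta_k,\phi_k)\star\Y_{j_k' m_k'}(\theta_k,\phi_k)]$. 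This coincides with the expression found for $\wigner(AB)$, which proves the theorem. One could alternatively start from Lemmas~\ref{ProductOperatorTransform} and~\ref{EmbSingleSpinStarProduct}, but the direct factorization above is cleaner and avoids invoking associativity of $\star$.

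The only real obstacle is the factorization step: one must check carefully that inside $\prestar^{\{k\}}$ the Poisson bracket $\{\cdot,\cdot\}^{\{k\}}$, when sandwiched between the products $W_A$ and $W_B$, touches only the $k$-th factor of each, and that the operators $\prestar^{\{k\}}$ for distinct $k$ commute because they act on disjoint variable sets; the same factorwise behaviour must be noted for $\FullProj$. Everything else is bookkeeping — and if one prefers to work with the product-operator rather than the tensor-product normalization of the basis, the matching powers of $\sqrt{2}^{N-1}$ appearing on the two sides are supplied by Table~\ref{tensordef} and Lemma~\ref{single-spin}(c).
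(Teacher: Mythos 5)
Your proof is correct and follows essentially the same route as the paper's: reduce to basis product operators by bilinearity, factorize both sides over the $N$ spheres, and apply the single-spin Theorem~\ref{StarProdLemma} sphere by sphere. The only difference is organizational — you expand the bidifferential operator $\prod_k \prestar^{\{k\}}$ directly on factorized functions and work in tensor-product normalization, whereas the paper writes $W_A$ and $W_B$ as iterated prestar products via Lemma~\ref{ProductOperatorTransform}, regroups them using the star-commutativity of spherical harmonics on different spheres, and then invokes Lemma~\ref{EmbSingleSpinStarProduct}, carrying the explicit powers of $\sqrt{2}^{N-1}$.
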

\begin{proof}
	We introduce the abbreviations for the multiple indexes $\vec{j}:=(j_1,\ldots,j_N)$,  $\vec{m}:=(m_1,\ldots,m_N)$, 
	$\vec{j}':=(j_1',\ldots,j_N')$, as well as  $\vec{m}':=(m_1',\ldots,m_N')$.
	The product $AB$ can be expanded as
	\begin{align*}
	\sum_{\vec{j},\vec{m},\vec{j}',\vec{m}'} \hspace{-2mm} a_{\vec{j},\vec{m}} b_{\vec{j}',\vec{m}'}
	2^{N(N-1)} \,  \T^{\{1\}}_{j_1 m_1} \T^{\{1\}}_{j'_1 m'_1}     
	\cdots \T^{\{N\}}_{j_N m_N} \T^{\{N\}}_{j'_N m'_N},
	\end{align*}
	and the matrix 
	product 
	can be 
	independently  evaluated on each individual spin.
	And each product $\T^{\{k\}}_{j_k m_k} \T^{\{k\}}_{j'_k m'_k}$ 
	can be written as
	$(\T_{j_k m_k} \T_{j_k' m_k'})^{\{k\}} / \sqrt{2}^{N-1}$, cf.\ Lemma~\ref{single-spin}(c), and its
	Wigner transformation is given by $W_{(\T_{j_k' m_k'}\T_{j_k' m_k'})}^{\{ k\}} / \sqrt{2}^{N-1}$.
	On the other hand, we get from Lemma~ \ref{ProductOperatorTransform} that
	\begin{align*}
	W_A & = \sum_{\vec{j},\vec{m}} 
	a_{\vec{j},\vec{m}} \sqrt{2}^{N(N-1)}
	\Y^{\{ 1 \}}_{j_1 m_1} \prestar  \cdots \prestar \Y^{\{ N \}}_{j_N m_N},
	\\
	W_B & = \sum_{\vec{j}',\vec{m}'} 
	b_{\vec{j}',\vec{m}'} \sqrt{2}^{N(N-1)}
	\Y^{\{ 1 \}}_{j_1 m_1} \prestar   \cdots \prestar \Y^{\{ N \}}_{j_N m_N},
	\end{align*}
	and the 
	star product is given by
	$W_A \star W_B =  \sum_{\vec{j},\vec{m},\vec{j}',\vec{m}'}  
	a_{\vec{j},\vec{m}} b_{\vec{j}',\vec{m}'} 
	2^{N(N-1)} C$,
	where 
	\begin{align*}
	C & =
	\FullProj ( \Y^{\{ 1 \}}_{j_1 m_1} \prestar \cdots \prestar \Y^{\{ N \}}_{j_N m_N} \prestar  \Y^{\{ 1 \}}_{j'_1 m'_1} 
	\prestar 
	\cdots \prestar \Y^{\{ N \}}_{j'_N m'_N} ) \\
	& = [\Proj^{\{1\}} (  \Y^{\{ 1 \}}_{j_1 m_1} {\prestar} \Y^{\{ 1 \}}_{j'_1 m'_1} )]  \prestar  
	\cdots \prestar [ \Proj^{\{N\}} (   \Y^{\{ N \}}_{j_N m_N}  {\prestar}     \Y^{\{ N \}}_{j'_N m'_N} )  ] \\
	& =W_{(\T_{j_1 m_1}\T_{j_1' m_1'})}^{\{ 1\}} / \sqrt{2}^{N-1}  \prestar  
	\cdots \prestar W_{(\T_{j_N m_N}\T_{j_N' m_N'})}^{\{ N\}} / \sqrt{2}^{N-1}.
	\end{align*}
	Note that the second equality holds since 
	two spherical harmonics $\Y^{\{ k \}}_{j_k m_k}$ and $\Y^{\{ \ell \}}_{j_\ell m_\ell}$  
	do star-commute under the assumption that
	$k \neq \ell$, i.e.\ $[\Y^{\{ k \}}_{j_k m_k}, \Y^{\{ \ell \}}_{j_\ell m_\ell}]_{\prestar}=0$.
	The third equality follows from Lemma~\ref{EmbSingleSpinStarProduct}  which shows that
	$\Y^{\{ k \}}_{j_k m_k} \star \Y^{\{ k \}}_{j_k' m_k'}  =
	W_{(\T_{j_k m_k}\T_{j_k' m_k'})}^{\{ k\}} / \sqrt{2}^{N-1}$.
	The proof is now a consequence of 
	Lemma~\ref{ProductOperatorTransform} which verifies that
	$W_{(\T_{j_1 m_1}\T_{j_1' m_1'})}^{\{ 1\}} \prestar \cdots \prestar 
	W_{(\T_{j_N m_N}\T_{j_N' m_N'})}^{\{ N\}} =
	\wigner [(\T_{j_1 m_1}\T_{j_1' m_1'})^{\{ 1\}}  \cdots  (\T_{j_N m_N}\T_{j_N' m_N'})^{\{ N\}} ]$.
\end{proof}

After verifying the correctness of the star product from Result~\ref{result3},
we highlight how the star product governs the time evolution of
an arbitrary number $N$ of coupled spins $1/2$.
We introduce the notations
$a:=\sqrt{2 \pi}$ and $b_k:=- \tfrac{i}{2}   \{ \cdot  , \cdot  \}^{\{ k \}}  $
and 
start by rewriting the star product [see
Eq.~\eqref{multispinstarproddefinition} in Result~\ref{result3}] into a more convenient form
\begin{equation} \label{ArbitraryNStarProduct}
\prestar = \prod_{k=1}^N (a+b_k) = 
\sum_{\ell=0}^N a^{N-\ell} \, [ \sum_{\genfrac{}{}{0pt}{}{k_1, k_2 ,\ldots, k_\ell}{k_\mu \neq k_\nu \text{ if } \mu \neq \nu}}
b_{k_1} b_{k_2} \cdots\,  b_{k_\ell}   ],
\end{equation}
where $k_\mu \in \{1 ,\ldots, N \}$. 
The first four terms in the sum 
of Eq.~\eqref{ArbitraryNStarProduct} are given by 
\begin{align*}
\prestar  =& a^N + a^{N-1} (b_1 + b_2 +\cdots + b_N) + a^{N-2} (b_1 b_2 + \cdots + b_{N-1} b_N)  \\
+&  a^{N-3} (b_1 b_2 b_3 +  \cdots + b_{N-2} b_{N-1} b_{N}) + \cdots,
\end{align*}
and there are in total $\sum_{\ell=0}^N \binom{N}{\ell} =  2^N$ terms.
The star product is then obtained by applying the projector $\FullProj$
from Eq.~\eqref{projectiondefinition}.
Result~\ref{result3} 
now determines the equation of motion 
via the star commutator from  Eq.~\eqref{MoyalEqDefinition}
while the terms with even indices $\ell$ in
Eq.~\eqref{ArbitraryNStarProduct} cancel each other out.
\begin{result} \label{result4}
The equation of motion in a system of $N$ coupled spins $1/2$
is given by
\begin{equation} 
\label{fullequationofmotion}
i\frac{  \partial W_{\rho}  }{\partial t} =  2 (W_\unity)^{-1} \, \FullProj [ W_{\mathcal{H}}
\sum_{\genfrac{}{}{0pt}{}{\ell=1}{\ell \text{ odd}}}^N c^{\ell} 
  \sum_{\genfrac{}{}{0pt}{}{k_1, k_2 ,\ldots, k_\ell}{k_\mu \neq k_\nu \text{ if } \mu \neq \nu}} 
p_{k_1} p_{k_2} \cdots\,  p_{k_\ell}   \,   W_{\rho}  ] ,
\end{equation}
where $c:=-i/\sqrt{8\pi}$, $W_\unity =1/ \sqrt{2\pi}^{N}$ is the Wigner transform of the identity operator (see Table~\ref{CartesianWignerRepr}), $\FullProj$ denotes the projection from Eq.~\eqref{projectiondefinition},
and $p_{k_\mu}:= \{ \cdot  , \cdot  \}^{\{ k_\mu \}}$ is the Poisson bracket from
Eq.~\eqref{PBDef}. The first two terms in the expansion are 
\begin{align*} 
i\frac{  \partial W_{\rho} }{\partial t}  &= 2 (W_\unity)^{-1} \, \FullProj 
[ c \,   W_{\mathcal{H}}  (p_1 + p_2 +\cdots + p_N)     W_{\rho}\\ 
&+c^3  \, W_{\mathcal{H}}  (p_1 p_2 p_3 +  \cdots + p_{N-2} p_{N-1} p_{N})   W_{\rho} + \cdots ].
\end{align*}
\end{result}
The first term in this expansion is given as a sum $W_{\mathcal{H}}  (p_1 + p_2 +\cdots + p_N)     W_{\rho}$
of Poisson brackets,
which corresponds to a classical evolution of a phase-space
probability distribution $W_{\rho}$.
This truncated version of the expansion could be used to study
the evolution of spin-$1/2$ systems in a semi-classical approximation.
And the first-order approximation $- (W_\unity)^{-1}/\sqrt{2 \pi} \, \FullProj 
[ \,   W_{\mathcal{H}}  (p_1 + p_2 +\cdots + p_N)     W_{\rho}]$
to the time derivative 
corresponds to the classical equation of motion, and the number of terms
(i.e.\  the number of  Poisson brackets $p_k$)
scales linearly with the number $N$ of degrees of freedom.
The complete, exact equation of motion of a spin-$1/2$ system is then established by introducing
quantum corrections as a power series of odd powers in $c$, similar as in
the infinite-dimensional case.
The number of these quantum corrections grows exponentially for increasing number
of coupled spins.
Consequently,
the equation of motion is a sum of those terms that contain odd number of products of Poisson brackets.
The contribution of each term $p_{k_1} p_{k_2} \cdots\, p_{k_\ell}$ shrinks exponentially for 
increasing $N$ as the number of Poisson 
brackets grows.

\subsection{Results for multiple coupled spins \texorpdfstring{$1/2$}{1/2} \label{summaryofresults}}
The Wigner formalism for an arbitrary number of coupled spins $1/2$ is completely determined
by the previous sections:
the star product and the equation of motion 
are given in Results~\ref{result3} and \ref{result4},
respectively.
In the following, these results are summarized and simplified for the special cases of two
and three coupled spins $1/2$, as these cases are  important for applications.

\subsubsection{Two coupled spins\label{TwoSpinSummary}}
The star product from Result~\ref{result3} is now detailed 
in a convenient formula
for the case of
two coupled spins $1/2$:
\begin{corollary}\label{twospinstarprodcorollary}
In case of two coupled spins $1/2$,  we obtain 
the prestar product as
\begin{align} \label{twospinstarproductequation}
\prestar &=
(  \sqrt{2 \pi}
{-} \tfrac{i}{2}   \{ \cdot  , \cdot  \}^{\{ 1 \}}  )
(  \sqrt{2 \pi}
{-} \tfrac{i}{2}   \{ \cdot  , \cdot  \}^{\{ 2 \}}  ) 
\\ &= 
2 \pi - i \sqrt{\tfrac{\pi}{2}} (  \{ \cdot  , \cdot  \}^{\{ 1 \}}  
{+}  \{ \cdot  , \cdot \}^{\{ 2 \}}  ) -  \{ \cdot  , \cdot  \}^{\{ 1 \}}   \{\cdot  , \cdot  \}^{\{ 2 \}}/4.
\nonumber
\end{align}
The star product $W_A (\theta_1, \phi_1, \theta_2, \phi_2 ) \star W_B (\theta_1, \phi_1, \theta_2, \phi_2 )$ 
of two Wigner functions can be consequently computed as 
$ \Proj ^{\{ 1,2 \}}  
[W_A (\theta_1, \phi_1, \theta_2, \phi_2 ) \prestar W_B (\theta_1, \phi_1, \theta_2, \phi_2 ) ]$,
where the corresponding projections $\Proj ^{\{ 1,2 \}} = \Proj ^{\{ 2 \}}\Proj ^{\{ 1 \}}$  act on two spheres
by projecting onto rank-one and rank-zero spherical harmonics;
refer to the definition of $\Proj ^{\{ k \}}$ in Eq.~\eqref{projectiondefinition}.
\end{corollary}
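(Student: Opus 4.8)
The plan is to derive the corollary as the $N=2$ specialization of Result~\ref{result3}, so the argument reduces essentially to expanding a product of two commuting bidifferential operators. First I would write down the prestar product of Result~\ref{result3} with $N=2$, i.e.\ $\prestar = \prestar^{\{1\}}\,\prestar^{\{2\}}$ with $\prestar^{\{k\}} := \sqrt{2\pi} - \tfrac{i}{2}\{\cdot,\cdot\}^{\{k\}}$ and $\{\cdot,\cdot\}^{\{k\}}$ the Poisson bracket from Eq.~\eqref{PBDef}; this is exactly the first line of Eq.~\eqref{twospinstarproductequation}.

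Next I would expand the product term by term. Since $\{\cdot,\cdot\}^{\{1\}}$ differentiates only with respect to $\theta_1,\phi_1$ and $\{\cdot,\cdot\}^{\{2\}}$ only with respect to $\theta_2,\phi_2$, the two factors act nontrivially on disjoint pairs of angular variables and therefore commute as operators (this is the same observation underlying Eq.~\eqref{ArbitraryNStarProduct}), so the cross terms combine without ordering ambiguity. Collecting the four summands yields the constant $2\pi$, the two linear terms $-\tfrac{i}{2}\sqrt{2\pi}\bigl(\{\cdot,\cdot\}^{\{1\}} + \{\cdot,\cdot\}^{\{2\}}\bigr)$, and the bilinear term $\bigl(-\tfrac{i}{2}\bigr)^2\{\cdot,\cdot\}^{\{1\}}\{\cdot,\cdot\}^{\{2\}} = -\tfrac14\{\cdot,\cdot\}^{\{1\}}\{\cdot,\cdot\}^{\{2\}}$. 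Using $\tfrac12\sqrt{2\pi}=\sqrt{\pi/2}$ rewrites the linear coefficient into the form displayed in the second line of Eq.~\eqref{twospinstarproductequation}, which completes that part.

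For the star product I would simply invoke the definition $W_A\star W_B := \FullProj\,(W_A \prestar W_B)$ from Eq.~\eqref{multispinstarproddefinition} with $\FullProj = \Proj^{\{1\}}\Proj^{\{2\}}$; writing $\FullProj$ as $\Proj^{\{1,2\}}:=\Proj^{\{2\}}\Proj^{\{1\}}$ gives the claimed expression $W_A\star W_B = \Proj^{\{1,2\}}\bigl[W_A\prestar W_B\bigr]$, where each $\Proj^{\{k\}}$ truncates rank-two (and higher) spherical harmonics in the variables $\theta_k,\phi_k$ as in Eq.~\eqref{projectiondefinition}. The defining property $\wigner(AB)=W_A\star W_B$ is already guaranteed by Theorem~\ref{NSpinStarProduct} at $N=2$, so nothing further needs verifying. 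The only point deserving a line of comment is that $\Proj^{\{1\}}$ and $\Proj^{\{2\}}$ commute with each other and with the Poisson bracket of the other spin, which again holds because each operator acts only on its own pair of variables; there is no genuine obstacle here, the corollary being a direct unpacking of Result~\ref{result3}.
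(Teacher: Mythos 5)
Your proposal is correct and follows exactly the paper's route: Corollary~\ref{twospinstarprodcorollary} is obtained in the text simply by setting $N=2$ in Result~\ref{result3}, expanding the product of the two bidifferential factors (which act on disjoint variable pairs and hence combine without ordering issues), and rewriting $\tfrac{1}{2}\sqrt{2\pi}=\sqrt{\pi/2}$, with the projection $\Proj^{\{1,2\}}=\Proj^{\{2\}}\Proj^{\{1\}}$ applied as in Eq.~\eqref{multispinstarproddefinition}. Nothing further is needed, since the defining property is already covered by Theorem~\ref{NSpinStarProduct}.
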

Table~\ref{tensordef} implies that tensor operators acting on single spins
are expressed as $\Ta_{j_1 m_1}=\T_{j_1 m_1} \otimes \T_{0,0}$ 
and $\Tb_{j_2 m_2}= \T_{0,0}  \otimes \T_{j_2 m_2}$, and their  
Wigner transformations
from Result~\ref{result1}
are $\wigner ( \T_{j m}^{\{ 1 \}}  ) = \Y_{j m}^{\{ 1 \}}=\Y_{j m}(\theta_1, \phi_1)/ \sqrt{4 \pi}$
and $\wigner ( \T_{j m}^{\{ 2 \}}  ) = \Y_{j m}^{\{ 2 \}}=\Y_{j m}(\theta_2, \phi_2)/ \sqrt{4 \pi}$.
Similarly, one obtains the form $\wigner ( 2 \T_{j_1 m_1}^{\{ 1 \}}  \T_{j_2 m_2}^{\{ 2 \}}   ) 
= 
\Y_{j_1 m_1}(\theta_1, \phi_1) \Y_{j_2 m_2}(\theta_2, \phi_2)$ of the Wigner representation 
for bilinear operators, cf.\ Result~\ref{result1}.
The star commutator 
\begin{equation*}
[W_A, W_B]_\star = W_A  \star  W_B -W_B  \star  W_A = 
- i \sqrt{2 \pi}   \Proj ^{\{ 1,2 \}}   (  \{ W_A  , W_B  \}^{\{ 1 \}}  +  \{ W_A  , W_B  \}^{\{ 2 \}}  ) 
\end{equation*}
is given by the antisymmetric part of the star product from Result~\ref{result3}, which in the case of two spins $1/2$ results in the 
truncated Poisson bracket over both spheres.
The time evolution of the density matrix $\rho$ under the Hamiltonian $\mathcal{H}$ is proportional 
to the
star commutator (see Result~\ref{result4}):
\begin{corollary}
\label{twospinEQMcorollary}
The equation of motion for two coupled spins $1/2$  is given by
\begin{equation} \label{TwoSpinTimeEvolution}
\tfrac{\partial W_{\rho}}{\partial t} = 
\sqrt{2 \pi}   \Proj ^{\{ 1,2 \}}   
(  \{ W_{\rho}  , W_{\mathcal{H}}  \}^{\{ 1 \}}  {+}  \{ W_{\rho}   ,  W_{\mathcal{H}}  \}^{\{ 2 \}}  ) .
\end{equation}
\end{corollary}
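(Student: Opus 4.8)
The plan is to obtain Eq.~\eqref{TwoSpinTimeEvolution} by feeding the explicit two-spin star product of Corollary~\ref{twospinstarprodcorollary} into the Moyal equation~\eqref{MoyalEqDefinition}; equivalently, it is the specialization of Result~\ref{result4} to $N=2$, and I would carry out both routes and check that they agree.

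First I would recall that Eq.~\eqref{MoyalEqDefinition} reads $i\,\partial W_{\rho}/\partial t = [W_{\mathcal{H}},W_{\rho}]_{\star}=W_{\mathcal{H}}\star W_{\rho}-W_{\rho}\star W_{\mathcal{H}}$, and that by Corollary~\ref{twospinstarprodcorollary} the star product factors as $\star=\Proj^{\{1,2\}}\circ\prestar$ with
\[
\prestar=2\pi-i\sqrt{\tfrac{\pi}{2}}\bigl(\{\cdot,\cdot\}^{\{1\}}+\{\cdot,\cdot\}^{\{2\}}\bigr)-\tfrac{1}{4}\{\cdot,\cdot\}^{\{1\}}\{\cdot,\cdot\}^{\{2\}}.
\]
Since $\Proj^{\{1,2\}}$ acts linearly on its argument it may be pulled outside the commutator, so it is enough to antisymmetrize $\prestar$ in its two slots. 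The pointwise term $2\pi\,W_{\mathcal{H}}W_{\rho}$ is symmetric under $W_{\mathcal{H}}\leftrightarrow W_{\rho}$ and cancels. Each single-sphere Poisson bracket $\{\cdot,\cdot\}^{\{k\}}$ is antisymmetric under the exchange [see Eq.~\eqref{PBDef}], so the two first-order terms survive and double. For the mixed term I would note that $\{\cdot,\cdot\}^{\{1\}}$ and $\{\cdot,\cdot\}^{\{2\}}$ commute as operators, since they differentiate with respect to the disjoint variable sets $(\theta_1,\phi_1)$ and $(\theta_2,\phi_2)$, and that swapping the two arguments of $\{\cdot,\cdot\}^{\{1\}}\{\cdot,\cdot\}^{\{2\}}$ reverses the left/right roles of the derivatives on \emph{each} sphere and hence contributes a factor $(-1)(-1)=+1$; so this term is symmetric and drops out of the commutator too.

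Collecting the surviving contribution and using $2\sqrt{\pi/2}=\sqrt{2\pi}$, this gives
\[
[W_{\mathcal{H}},W_{\rho}]_{\star}=-i\sqrt{2\pi}\,\Proj^{\{1,2\}}\bigl(\{W_{\mathcal{H}},W_{\rho}\}^{\{1\}}+\{W_{\mathcal{H}},W_{\rho}\}^{\{2\}}\bigr),
\]
which is the star commutator quoted just before the statement. Substituting into Eq.~\eqref{MoyalEqDefinition}, dividing by $i$, and using the antisymmetry $\{W_{\mathcal{H}},W_{\rho}\}^{\{k\}}=-\{W_{\rho},W_{\mathcal{H}}\}^{\{k\}}$ then yields Eq.~\eqref{TwoSpinTimeEvolution}. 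As a cross-check I would verify that the same formula falls out of Result~\ref{result4}: for $N=2$ the only odd value of $\ell$ is $\ell=1$, so the sum collapses to $2(W_{\unity})^{-1}c\,\Proj^{\{1,2\}}[W_{\mathcal{H}}(p_{1}+p_{2})W_{\rho}]$ with $c=-i/\sqrt{8\pi}$ and $W_{\unity}=1/(2\pi)$; the prefactor $2(W_{\unity})^{-1}c=4\pi\cdot(-i/\sqrt{8\pi})=-i\sqrt{2\pi}$ reproduces the coefficient above, and $W_{\mathcal{H}}\,p_{k}\,W_{\rho}=\{W_{\mathcal{H}},W_{\rho}\}^{\{k\}}$.

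The only genuinely non-routine step is the symmetry bookkeeping for the mixed second-order term $\{\cdot,\cdot\}^{\{1\}}\{\cdot,\cdot\}^{\{2\}}$ — one must interpret the exchange of arguments correctly on the bidifferential operator and observe that the two antisymmetry signs multiply to $+1$ — together with the (easy) remark that the linear projector $\Proj^{\{1,2\}}$ commutes with the antisymmetrization. The remainder is direct substitution, so I expect the write-up to be brief.
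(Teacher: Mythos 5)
Your derivation is correct and matches the paper's own route: the paper likewise obtains Eq.~\eqref{TwoSpinTimeEvolution} by taking the antisymmetric part of the two-spin prestar product (the pointwise term and the symmetric double Poisson bracket cancel, the single-sphere brackets double) and substituting the resulting star commutator into Eq.~\eqref{MoyalEqDefinition}, this being the $N=2$ specialization of Result~\ref{result4}. Your explicit symmetry bookkeeping for the mixed term and the prefactor cross-check against Result~\ref{result4} are both accurate.
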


\subsubsection{Three coupled spins} \label{threespinsummary}
For three coupled spins, we also obtain the star product by applying
Result~\ref{result3}:
\begin{corollary}
\label{threespinstarprodcorollary}
The prestar product for three coupled spins $1/2$ simplifies to
\begin{equation}   \label{threespinstarproductequation}
\prestar= (  \sqrt{2 \pi}
{-} \tfrac{i}{2}   \{ \cdot  , \cdot  \}^{\{ 1 \}}  )
(  \sqrt{2 \pi}
{-} \tfrac{i}{2}   \{ \cdot  , \cdot  \}^{\{ 2 \}}  ) (  \sqrt{2 \pi}
{-} \tfrac{i}{2}   \{ \cdot  , \cdot  \}^{\{ 3 \}}  ),
\end{equation}
and the star product $W_A \star W_B =   \Proj ^{\{ 1,2,3 \}} ( W_A \prestar W_B )$
is obtained by applying the projection $\Proj ^{\{ 1,2,3 \}}=\Proj ^{\{ 3 \}}   \Proj ^{\{ 2 \}}  \Proj ^{\{ 1 \}}$;
refer to the definition of $\Proj ^{\{ k \}}$ in Eq.~\eqref{projectiondefinition}.
\end{corollary}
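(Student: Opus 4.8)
The plan is to obtain Corollary~\ref{threespinstarprodcorollary} as a direct specialization of Result~\ref{result3} to the case $N=3$, so that essentially no new work is required beyond unwinding the general definitions. First I would recall from Result~\ref{result3} that for $N$ coupled spins $1/2$ the prestar product is $W_A \prestar W_B = W_A \, (\prod_{k=1}^{N} \prestar^{\{k\}}) \, W_B$ with the single-spin factors $\prestar^{\{k\}} = \sqrt{2 \pi} - \tfrac{i}{2}\{\cdot,\cdot\}^{\{k\}}$, the Poisson bracket $\{\cdot,\cdot\}^{\{k\}}$ being taken with respect to the variables $(\theta_k,\phi_k)$ as in Eq.~\eqref{PBDef}. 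Setting $N=3$ and writing out the three factors gives precisely Eq.~\eqref{threespinstarproductequation}. If desired, one can also expand this product into a polynomial in the three Poisson brackets, collecting the $2^3=8$ terms by their degree $\ell\in\{0,1,2,3\}$, exactly as was done for two spins in Eq.~\eqref{twospinstarproductequation}; since $\{\cdot,\cdot\}^{\{1\}}$, $\{\cdot,\cdot\}^{\{2\}}$, $\{\cdot,\cdot\}^{\{3\}}$ act on pairwise-disjoint sets of variables, they commute, and the ordering within each term is immaterial.

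For the star product I would invoke the definition $W_A \star W_B = \FullProj(W_A \prestar W_B)$ from Result~\ref{result3} together with $\FullProj = \prod_{k=1}^{N} \Projk$. Each $\Projk$ is, by Eq.~\eqref{projectiondefinition}, a differential operator in the single pair of variables $(\theta_k,\phi_k)$ only; hence for $N=3$ the three projectors $\Proj^{\{1\}}$, $\Proj^{\{2\}}$, $\Proj^{\{3\}}$ commute and their composition can be written unambiguously as $\Proj^{\{1,2,3\}} := \Proj^{\{3\}}\Proj^{\{2\}}\Proj^{\{1\}}$. This yields $W_A \star W_B = \Proj^{\{1,2,3\}}(W_A \prestar W_B)$, which is the assertion of the corollary. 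Its correctness as a genuine star product, i.e.\ that $\wigner(AB)=W_A\star W_B$, is not re-proved here but inherited directly from Theorem~\ref{NSpinStarProduct}, which already establishes this for arbitrary $N$.

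There is essentially no obstacle; the only points that deserve an explicit word are (i) that the single-spin prestar factors $\prestar^{\{k\}}$ genuinely commute pairwise, so that the product in Eq.~\eqref{threespinstarproductequation} is well defined independently of order, which follows because derivatives with respect to $(\theta_1,\phi_1)$, $(\theta_2,\phi_2)$, $(\theta_3,\phi_3)$ act on independent variables, and (ii) the analogous commutativity of the projectors $\Projk$, used to justify the compact notation $\Proj^{\{1,2,3\}}$; both are immediate. If one wishes to make the corollary fully self-contained, one could additionally record the expanded form, whose $\ell=0$ term is $(2\pi)^{3/2}$, whose $\ell=1$ term is $-i\pi(\{\cdot,\cdot\}^{\{1\}}+\{\cdot,\cdot\}^{\{2\}}+\{\cdot,\cdot\}^{\{3\}})$, whose $\ell=2$ term is $-\tfrac{\sqrt{2\pi}}{4}(\{\cdot,\cdot\}^{\{1\}}\{\cdot,\cdot\}^{\{2\}}+\{\cdot,\cdot\}^{\{1\}}\{\cdot,\cdot\}^{\{3\}}+\{\cdot,\cdot\}^{\{2\}}\{\cdot,\cdot\}^{\{3\}})$, and whose $\ell=3$ term is $\tfrac{i}{8}\{\cdot,\cdot\}^{\{1\}}\{\cdot,\cdot\}^{\{2\}}\{\cdot,\cdot\}^{\{3\}}$; but this is routine arithmetic and not needed for the statement as phrased.
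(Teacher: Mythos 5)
Your proposal is correct and proceeds exactly as the paper does: Corollary~\ref{threespinstarprodcorollary} is stated there without separate proof as the direct $N=3$ specialization of Result~\ref{result3}, with correctness inherited from Theorem~\ref{NSpinStarProduct}. Your additional remarks on the pairwise commutativity of the $\prestar^{\{k\}}$ and $\Projk$, and the expanded coefficients $(2\pi)^{3/2}$, $-i\pi$, $-\sqrt{2\pi}/4$, $i/8$, are all accurate.
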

Normalized linear tensor operators are given as
$\Ta_{j_1 m_1}=\T_{j_1 m_1} \otimes \T_{00}  
\otimes \T_{00}$,  $\Tb_{j_2 m_2}= \T_{00}  \otimes \T_{j_2 m_2} \otimes \T_{00}$
and  $\T ^{\{ 3 \}}_{j_3 m_3}= \T_{00} \otimes \T_{00} \otimes \T_{j_3 m_3}$.
Their Wigner representation from Result~\ref{result1} is 
$\wigner ( \T_{j m}^{\{ 1 \}}  ) \allowbreak{}=
\Y_{j m}^{\{ 1 \}}\allowbreak{}= \allowbreak{} \Y_{j m}(\theta_1, \phi_1)/ 4 \pi$.
In the bilinear case, the Wigner functions 
have the form
\begin{align*}
\wigner ( \sqrt{2}^3  \Ta_{j_1 m_1}  \Tb_{j_2 m_2} ) & = 
\Y_{j_1 m_1}(\theta_1, \phi_1) \Y_{j_2 m_2}(\theta_2, \phi_2)/\sqrt{4\pi},\\
\wigner ( \sqrt{2}^3  \Tb_{j_2 m_2}  \T ^{\{ 3 \}}_{j_3 m_3} ) &=
\Y_{j_2 m_2}(\theta_2, \phi_2) \Y_{j_3 m_3}(\theta_3, \phi_3)/\sqrt{4\pi},\\
\wigner ( \sqrt{2}^3  \Ta_{j_1 m_1}  \T ^{\{ 3 \}}_{j_3 m_3} ) &=
\Y_{j_1 m_1}(\theta_1, \phi_1) \Y_{j_3 m_3}(\theta_3, \phi_3)/\sqrt{4\pi}.
\end{align*}
The correctly normalized trilinear operator
$\sqrt{2}^6   \Ta_{j_1 m_1}  \allowbreak{}  \Tb_{j_2 m_2}  \allowbreak{} \T ^{\{ 3 \}}_{j_3 m_3}$ 
results in the Wigner function
$\Y_{j_1 m_1}(\theta_1, \phi_1)  
\allowbreak{}\Y_{j_2 m_2}(\theta_2, \phi_2)  \allowbreak{} \Y_{j_3 m_3}(\theta_3, \phi_3)$.
The time evolution is determined by the star commutator
\begin{equation} 
[W_A, W_B]_\star =
-  2 \pi i \Proj ^{\{ 1,2,3 \}} \sum_{k=1}^3  \{ W_A  , W_B  \}^{\{ k \}}
+ \tfrac{i}{4}  \Proj ^{\{ 1,2,3 \}} 
W_A (\{ . , .  \}^{\{ 1 \}}  \{. , . \}^{\{ 2 \}}  \{ . , . \}^{\{ 3 \}}) W_B,   \label{threespinstarcommutator} 
\end{equation}
i.e., the antisymmetric part of the star product from Result~\ref{result3}. Using Result~\ref{result4},
we obtain the equation of motion:
\begin{corollary}
\label{threespinEQMcorollary}
The equation of motion for three coupled spins $1/2$ is determined as
\begin{equation} 
\tfrac{\partial W_{\rho}}{\partial t} = 
  2 \pi \Proj ^{\{ 1,2,3 \}} \sum_{k=1}^3  \{ W_{\rho}  , W_{\mathcal{H}}  \}^{\{ k \}}
- \tfrac{1}{4}  \Proj ^{\{ 1,2,3 \}} 
W_{\rho} (\{ . , .  \}^{\{ 1 \}}  \{. , . \}^{\{ 2 \}}  \{ . , . \}^{\{ 3 \}}) W_{\mathcal{H}}.   
\end{equation}
Here, the triple Poisson bracket $p_1 p_2 p_3$  in Result~\ref{result4}
is the first quantum correction (which vanishes except when acting on
trilinear Wigner functions) and leads to the explicit form
\begin{align*}
&     \{ \cdot  , \cdot  \}^{\{ 1 \}}  \{ \cdot  , \cdot  \}^{\{ 2 \}}  \{ \cdot  , \cdot  \}^{\{ 3 \}}  
=  \tfrac{1}{R^3 \sin \theta_1 \sin \theta_2 \sin \theta_3} 
\\
\times (\, +&\overleftarrow{\partial}^3_{\phi_1,\phi_2,\phi_3}
\overrightarrow{\partial}^3_{\theta_1,\theta_2,\theta_3}
- 
\overleftarrow{\partial}^3_{\phi_2,\phi_3,\theta_1}
\overrightarrow{\partial}^3_{\phi_1,\theta_2,\theta_3} 
- 
\overleftarrow{\partial}^3_{\phi_1,\phi_3,\theta_2}
\overrightarrow{\partial}^3_{\phi_2,\theta_1,\theta_3}
- 
\overleftarrow{\partial}^3_{\phi_1,\phi_2,\theta_3}
\overrightarrow{\partial}^3_{\phi_3,\theta_1,\theta_2}
\\
+ &
\overleftarrow{\partial}^3_{\phi_1,\theta_2,\theta_3}
\overrightarrow{\partial}^3_{\phi_2,\phi_3,\theta_1}
+ 
\overleftarrow{\partial}^3_{\phi_2,\theta_1,\theta_3}
\overrightarrow{\partial}^3_{\phi_1,\phi_3,\theta_2}
+ 
\overleftarrow{\partial}^3_{\phi_3,\theta_1,\theta_2}
\overrightarrow{\partial}^3_{\phi_1,\phi_2,\theta_3} 
-
\overleftarrow{\partial}^3_{\theta_1,\theta_2,\theta_3}
\overrightarrow{\partial}^3_{\phi_1,\phi_2,\phi_3}
\, ),
\end{align*}
where the notation 
${\partial}^3_{\theta_1,\theta_2,\theta_3}= \partial^3 / 
(\partial \theta_1  \partial \theta_2 \partial \theta_3)$ is used
and  the direction of an arrow signifies 
whether the derivative is taken with respect to 
the function on the left or right.
\end{corollary}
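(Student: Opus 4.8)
The plan is to specialize Result~\ref{result4} (equivalently, Result~\ref{result3} together with the Moyal equation, Eq.~\eqref{MoyalEqDefinition}) to $N=3$ and then to make the resulting triple Poisson bracket explicit as a bidifferential operator. First I would write, as in Corollary~\ref{threespinstarprodcorollary}, $\prestar=\prod_{k=1}^{3}(a+b_k)$ with $a=\sqrt{2\pi}$ and $b_k=-\tfrac{i}{2}\{\cdot,\cdot\}^{\{k\}}$, so that $\prestar=a^{3}+a^{2}(b_1+b_2+b_3)+a(b_1b_2+b_1b_3+b_2b_3)+b_1b_2b_3$. Inserting this into $i\,\partial W_\rho/\partial t=\Proj^{\{1,2,3\}}(W_{\mathcal H}\prestar W_\rho-W_\rho\prestar W_{\mathcal H})$ and using linearity of $\Proj^{\{1,2,3\}}$, the key observation is that each $b_k$ is an \emph{antisymmetric} bidifferential operator (by antisymmetry of the Poisson bracket in Eq.~\eqref{PBDef}) while the pointwise product is symmetric; moreover the $b_k$ act on pairwise disjoint pairs of variables, so they commute and the transpose of $b_{k_1}\cdots b_{k_\ell}$ equals $(-1)^{\ell}b_{k_1}\cdots b_{k_\ell}$. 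Hence the $\ell=0$ and $\ell=2$ contributions cancel in the star commutator, and only the $\ell=1$ and $\ell=3$ terms survive, each appearing with a factor of two.

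Next I would carry the constants through. For $\ell=1$ one obtains $2a^{2}\sum_{k}(-\tfrac{i}{2})\{W_{\mathcal H},W_\rho\}^{\{k\}}=-i\,2\pi\sum_{k}\{W_{\mathcal H},W_\rho\}^{\{k\}}$; for $\ell=3$, where the only index set is $\{1,2,3\}$, one obtains $2b_1b_2b_3=2\,(-\tfrac{i}{2})^{3}\{\cdot,\cdot\}^{\{1\}}\{\cdot,\cdot\}^{\{2\}}\{\cdot,\cdot\}^{\{3\}}=\tfrac{i}{4}\{\cdot,\cdot\}^{\{1\}}\{\cdot,\cdot\}^{\{2\}}\{\cdot,\cdot\}^{\{3\}}$ acting as $W_{\mathcal H}(\cdots)W_\rho$. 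Dividing by $i$, rewriting $\{W_{\mathcal H},W_\rho\}^{\{k\}}=-\{W_\rho,W_{\mathcal H}\}^{\{k\}}$, and using the transpose identity above with $\ell=3$ to replace $W_{\mathcal H}(\{\cdot,\cdot\}^{\{1\}}\{\cdot,\cdot\}^{\{2\}}\{\cdot,\cdot\}^{\{3\}})W_\rho$ by $-W_\rho(\{\cdot,\cdot\}^{\{1\}}\{\cdot,\cdot\}^{\{2\}}\{\cdot,\cdot\}^{\{3\}})W_{\mathcal H}$ reproduces exactly the claimed equation of motion. Equivalently, the statement is just the antisymmetric part of Eq.~\eqref{threespinstarcommutator} substituted into Eq.~\eqref{MoyalEqDefinition}.

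It then remains to make the operator $\{\cdot,\cdot\}^{\{1\}}\{\cdot,\cdot\}^{\{2\}}\{\cdot,\cdot\}^{\{3\}}$ explicit. Since the scalar $1/(R\sin\theta_k)$ in Eq.~\eqref{PBDef} depends only on $\theta_k$---so it commutes with $\partial_{\phi_k}$ and with every derivative occurring in the other two brackets---I would pull all three factors to the front as $1/(R^{3}\sin\theta_1\sin\theta_2\sin\theta_3)$ and then multiply out the three binomials $\bigl(\overleftarrow{\partial}_{\phi_k}\overrightarrow{\partial}_{\theta_k}-\overleftarrow{\partial}_{\theta_k}\overrightarrow{\partial}_{\phi_k}\bigr)$, $k=1,2,3$. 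This produces $2^{3}=8$ monomials, each a product of three left derivatives times three right derivatives and carrying the sign $(-1)^{r}$ where $r$ is the number of binomials in which the second alternative was chosen; sorting the derivative indices in each monomial and matching them against the eight tabulated terms finishes the proof. I expect no genuine analytic obstacle; the only point needing care is keeping the two distinct sources of signs apart---the antisymmetry of each individual Poisson bracket and the transpose sign $(-1)^{\ell}$ of a product of $\ell$ of them---but both trace back to the single antisymmetry statement. A small additional remark worth recording is that $\{\cdot,\cdot\}^{\{1\}}\{\cdot,\cdot\}^{\{2\}}\{\cdot,\cdot\}^{\{3\}}$ annihilates any Wigner function that is not trilinear, which is immediate because each $\{\cdot,\cdot\}^{\{k\}}$ differentiates its left and right argument once in $(\theta_k,\phi_k)$, so a factor independent of spin $k$ is sent to zero.
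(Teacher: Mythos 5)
Your proposal is correct and follows essentially the same route as the paper: specialize the expansion $\prestar=\prod_k(a+b_k)$ of Result~\ref{result3} to $N=3$, use the antisymmetry of each $b_k$ (so even-$\ell$ terms cancel and odd-$\ell$ terms double in the star commutator, exactly as in Result~\ref{result4} and Eq.~\eqref{threespinstarcommutator}), track the constants $a^2=2\pi$ and $2(-i/2)^3=i/4$, and expand the three binomial bidifferential operators into the eight signed monomials. All signs and prefactors in your derivation check out against the stated corollary, and your closing remark on the triple bracket annihilating non-trilinear functions matches the parenthetical in the statement.
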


\subsubsection{Geometrical interpretation of the scalar product 
	of vector operators
	in the Wigner representation \label{VectorOperator}}
Let us consider the following two vector operators in a system of two coupled spins $1/2$ as
$\mathrm{I}_k=(I_{kx},I_{ky},I_{kz})$ 
for $k\in\{1,2\}$. The scalar product of these two operators
yields
\begin{equation}\label{vector}
\mathrm{I}_1 \cdot \mathrm{I}_2 = I_{1x} I_{2x} + I_{1y} I_{2y} + I_{1z} I_{2z}
=\sum_{m=-1}^1 \T_{1m} \otimes \T_{1m}^\dagger/2,
\end{equation}
where the
second equality is given by a decomposition into tensor operators.
Equation~\eqref{vector} can be generalized to
arbitrary $J$.\footnoteref{\theglobprefactor}
Many important coupling Hamiltonians of two angular momenta
can be described in this form  including
the scalar coupling and the spin-orbit coupling.
The Wigner representation directly follows as
\begin{equation} \label{VecOpWF}
\wigner(\mathrm{I}_1 \cdot \mathrm{I}_2) 
=\sum_{m=-1}^1 \Y_{1m}(\theta_1,\phi_1) \Y_{1m}^*(\theta_2,\phi_2)/2.
\end{equation}	
Given the unit vectors $\vec{r}_k$ in $\mathbb{R}^3$ which are parametrized 
in spherical coordinates as
\begin{equation*}
\vec{r}_k :=
(x_k,
y_k,
z_k)^T
=
(
\sin\theta_k \cos\phi_k,
\sin\theta_k \sin\phi_k,
\cos\theta_k)^T,
\end{equation*}
their scalar product is given by $\vec{r}_1 \cdot \vec{r}_2=\cos\gamma$,
where $\gamma$ denotes the angle between the two unit vectors $\vec{r}_1$ and $\vec{r}_2$.
Consequently the addition theorem of spherical harmonics \cite{arfken2005mathematical} results in
\begin{equation} \label{additionTH}
P_j(\cos\gamma)
= \frac{4\pi}{2j+1}  \sum_{m=-j}^j \Y_{jm}(\theta_1,\phi_1) \Y_{jm}^*(\theta_2,\phi_2),
\end{equation}
where $P_j(\alpha)$ is the Legendre polynomial of degree $j$.
Thus, one can rewrite the Wigner function in Eq.~\eqref{VecOpWF} in terms of the angle $\gamma$
as $\wigner(\mathrm{I}_1 \cdot \mathrm{I}_2) = R^2 \cos \gamma $, with $R:=\sqrt{{3}/(8 \pi)}$.

The arguments of the Wigner function of two coupled spins
$W=W(\theta_1,\phi_1,\theta_2,\phi_2)$ 
can also be given in terms of the unit vectors $\vec{r}_1$ and $\vec{r}_2$ 
as $W=W(\vec{r}_1,\vec{r}_2)$, consequently Eq.~\eqref{VecOpWF}
becomes $\wigner(\mathrm{I}_1 \cdot \mathrm{I}_2) = R^2 \, \vec{r}_1 \cdot \vec{r}_2$
by applying Eq.~\eqref{additionTH}.
Expanding this expression results in
\begin{equation*}
\wigner(\mathrm{I}_1 \cdot \mathrm{I}_2)  = R^2 \, (x_1x_2+y_1y_2+z_1z_2) 
= R^2 [ \cos\theta_1\cos\theta_2  + \sin\theta_1\sin\theta_2\cos (\phi_1 {-} \phi_2)  ].
\end{equation*}
\begin{figure}[b]
	\centering
	\includegraphics{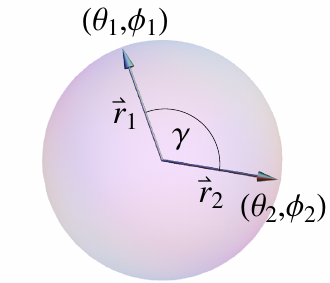}
	\caption{\label{ScalarProdc}(Color online) 
		The Wigner function $W(\theta_1,\phi_1,\theta_2,\phi_2)$ of two spins
		is determined by their arguments which define two points on the surface of the unit sphere. 
		These points correspond to the unit vectors $\vec{r}_1$ and $\vec{r}_2$.
	}
\end{figure}

In general, the Wigner function of two spins is a complex number $W(\theta_1,\phi_1,\theta_2,\phi_2)$
which depends on 
the arguments $\theta_1$, $\phi_1$, $\theta_2$, and $\phi_2$. These arguments define two points on
a sphere, see Fig.~\ref{ScalarProdc}. The Wigner function
$\tilde{W}(\vec{r}_1,\vec{r}_2):=\wigner(\mathrm{I}_1 \cdot \mathrm{I}_2)$
is now completely determined by the angle $\gamma$ between the two
vectors $\vec{r}_1$ and $\vec{r}_2$. The value of the Wigner function is
given by 
$\tilde{W}(\vec{r}'_1,\vec{r}'_2) = 0$ for the particular choices of 
$\vec{r}'_1=(0,0,1)$ and $\vec{r}'_2 = (1,0,0)$.
And similarly for 
$\vec{r}''_1=({1}/{\sqrt{2}},{1}/{\sqrt{2}},0)$ and $\vec{r}''_2 = (0,{1}/{\sqrt{2}},{1}/{\sqrt{2}})$,
one obtains $\tilde{W}(\vec{r}''_1,\vec{r}''_2) = R^2/2$.

\subsubsection{Spins evolving under a natural Hamiltonian} \label{naturalhamiltonian}

Let us finally consider the case where 
an arbitrary 
number $N$ of coupled spins $1/2$ evolve under a Hamiltonian 
\begin{equation*}
\mathcal{H}=\sum_{k=1}^N \sum_{j, m} a_{j,m,k} \T^{\{k\}}_{j m}  + 
\sum_{k_1 \neq k_2}^N \sum_{\substack{j_1,j_2\\m_1, m_2}} b^{j_2,m_2,k_2}_{j_1,m_1,k_1} 
\T^{\{k_1\}}_{j_1 m_1} \T^{\{k_2\}}_{j_2 m_2}
\end{equation*}
which
contains only linear and bilinear interactions,
i.e., natural interactions of physical systems.
Refer also to Eq.~\eqref{vector} in Sec.~\ref{VectorOperator} for the form of the coupling Hamiltonian.
\begin{corollary} \label{result5}
For natural Hamiltonians consisting only of linear and bilinear terms, 
the time evolution 
of a system of $N$ interacting spins $1/2$ is given by
\begin{equation} \label{NaturalHamTimeEvolution}
\partial W_\rho / \partial t =
\sqrt{2 \pi}^{N-1} \Proj ^{\{ 1 \dots N \}} \sum_{k=1}^N  \{ W_\rho  , W_\mathcal{H}  \}^{\{ k \}}, 
\end{equation}
where $W_\rho$  denotes the Wigner function of an arbitrary $N$-spin density matrix $\rho$
and $\FullProj$ is the projection from Eq.~\eqref{projectiondefinition}.
\end{corollary}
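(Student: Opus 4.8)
The plan is to derive Eq.~\eqref{NaturalHamTimeEvolution} directly from the general equation of motion in Result~\ref{result4} by showing that, for a natural Hamiltonian, every quantum-correction term (those with $\ell \geq 3$ in Eq.~\eqref{fullequationofmotion}) drops out, leaving only the $\ell = 1$ Poisson-bracket term. First I would use the bilinearity of the star commutator in $W_{\mathcal{H}}$ and $W_\rho$ to reduce to the case where $W_{\mathcal{H}}$ is the Wigner transform of a single linear operator $\T^{\{k\}}_{jm}$ or a single bilinear operator $\T^{\{k_1\}}_{j_1 m_1}\T^{\{k_2\}}_{j_2 m_2}$; by Result~\ref{result1} (and Table~\ref{CartesianWignerRepr}) such a $W_{\mathcal{H}}$ depends nontrivially on at most two of the coordinate pairs $(\theta_k,\phi_k)$.

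Next I would examine a generic summand $W_{\mathcal{H}}\, p_{k_1} p_{k_2} \cdots p_{k_\ell}\, W_\rho$ from Eq.~\eqref{fullequationofmotion}, with $k_1,\ldots,k_\ell$ pairwise distinct and $p_{k_\mu} = \{\cdot,\cdot\}^{\{k_\mu\}}$ the Poisson-bracket operator of Eq.~\eqref{PBDef}. Expanding the product of these bidirectional operators and using that the variables of different spins are disjoint while each factor $1/(R\sin\theta_{k_\nu})$ depends only on $\theta_{k_\nu}$, one checks that each factor $p_{k_\mu}$ contributes exactly one derivative---with respect to $\theta_{k_\mu}$ or $\phi_{k_\mu}$---acting on the left-hand function $W_{\mathcal{H}}$ and one acting on $W_\rho$. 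Hence the summand vanishes unless $W_{\mathcal{H}}$ can be differentiated with respect to a coordinate of \emph{every} spin $k_1,\ldots,k_\ell$, i.e.\ unless $\{k_1,\ldots,k_\ell\}$ is contained in the (at most two-element) set of spins on which $W_{\mathcal{H}}$ depends. This forces $\ell \leq 2$; since Result~\ref{result4} already retains only odd $\ell$, the sole surviving contribution is $\ell = 1$.

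It then remains to read off the $\ell = 1$ term and fix the prefactor: with $c = -i/\sqrt{8\pi}$ and $(W_\unity)^{-1} = \sqrt{2\pi}^{\,N}$ one has $2(W_\unity)^{-1} c = -i\sqrt{2\pi}^{\,N-1}$, and $W_{\mathcal{H}}(p_1 + \cdots + p_N)W_\rho = \sum_{k=1}^N \{W_{\mathcal{H}}, W_\rho\}^{\{k\}}$; dividing by $i$, using the antisymmetry $\{W_{\mathcal{H}}, W_\rho\}^{\{k\}} = -\{W_\rho, W_{\mathcal{H}}\}^{\{k\}}$, and re-summing over all linear and bilinear pieces to restore the full $W_{\mathcal{H}}$ by linearity gives exactly Eq.~\eqref{NaturalHamTimeEvolution}, with $\FullProj$ the projection from Eq.~\eqref{projectiondefinition}. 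The hard part will be the bookkeeping in the expansion of the composite operator $p_{k_1}\cdots p_{k_\ell}$: one must verify carefully that no left-directed derivative originating from some $p_{k_\mu}$ can land on a $1/(R\sin\theta_{k_\nu})$ factor or on $W_\rho$ (and symmetrically for right-directed derivatives), which is precisely what makes rigorous the statement ``each $p_{k_\mu}$ places exactly one derivative on $W_{\mathcal{H}}$''; once that is established, the rest is a short computation.
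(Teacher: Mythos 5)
Your proposal is correct and follows the same route the paper intends: Corollary~\ref{result5} is read off from Result~\ref{result4} by noting that every term $W_{\mathcal{H}}\,p_{k_1}\cdots p_{k_\ell}W_\rho$ with $\ell\geq 3$ places a left-directed derivative on $W_{\mathcal{H}}$ for each of the $\ell$ distinct spins, hence vanishes when $W_{\mathcal{H}}$ depends on at most two spins' coordinates (the paper makes this point explicitly for the triple bracket in Corollary~\ref{threespinEQMcorollary}), and your prefactor bookkeeping $2(W_\unity)^{-1}c=-i\sqrt{2\pi}^{\,N-1}$ together with the antisymmetry of the Poisson bracket reproduces Eq.~\eqref{NaturalHamTimeEvolution} exactly.
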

Exact time evolution of spin-$1/2$ Wigner functions under natural Hamiltonians
is therefore given by the sum of Poisson brackets, i.e., the classical 
equation of motion for phase-space probability distributions. The only
non-classical term is the projection $\FullProj$
from Eq.~\eqref{projectiondefinition}.

\section{Advanced examples \label{advancedexamples}}
In this section, we consider two advanced examples to convey our approach
of using sums of product operators and 
directly determining the time evolution of quantum systems in Wigner space.
We analyze the case of 
two coupled spins evolving under the CNOT gate (see Sec.~\ref{Ex_CNOT}).
Finally, we present an example 
for the time evolution of three coupled spins $1/2$ (see Sec.~\ref{Ex_three}).

\subsection{CNOT gate\label{Ex_CNOT}}
We continue our discussion of Wigner functions for 
two coupled spins $1/2$ from Sec.~\ref{Ex_two}
and consider the evolution of pure states under
the controlled NOT (CNOT) gate \cite{NC00}. Section~\ref{CNOT_evol} starts
with the computation of the time evolution in the Wigner frame.
In Sec.~\ref{CNOT_entangling}, 
we analyze the 	creation of entanglement using Wigner functions
and their pictorial representations.

\begin{figure}[tb]
	\centering
	\includegraphics{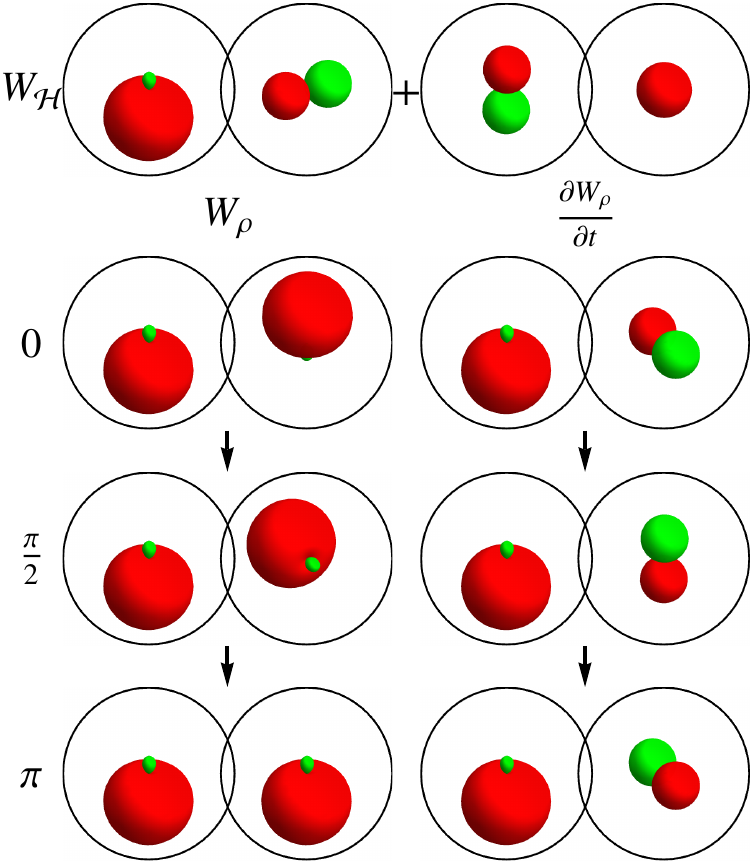}
	\caption{\label{Figure4}(Color online) Evolution of the density matrix of a pure 
		quantum state $\ket{\psi(0)}=\ket{\beta \alpha}$
		under the CNOT gate, implemented
		by the Hamiltonian 
		$\mathcal{H}= \w [I_{1\beta} I_{2x} + I_{1z}/2 ]$.
		PROPS representations of the Wigner functions
		$W_{\rho}$ and $\partial W_{\rho} / \partial t $
		are shown for the times $t=0$, $\w t = \pi/2$, and $\w t = \pi$.
		The control spin is set to $\ket{\beta}$, 
		and the second spin flips, i.e., 
		$\ket{\psi(\pi/ \w)}=\ket{\beta \beta}$.\footnoteref{\thecolors}
	}
\end{figure}

\begin{figure}[tb]
	\centering
	\includegraphics{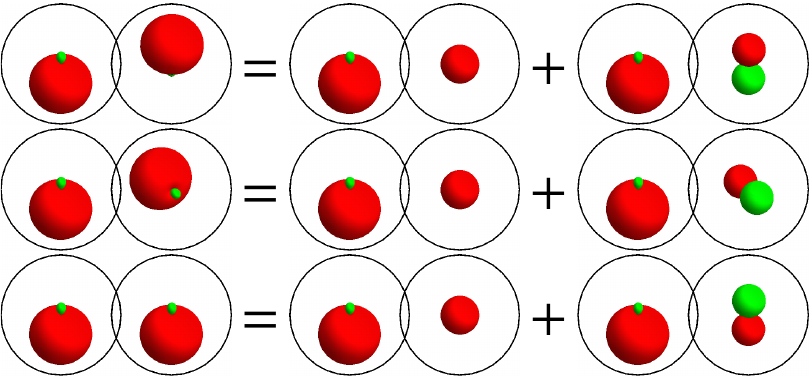}
	\caption{\label{Figure5}(Color online) The Wigner functions $W_\rho(t)$ from Fig.~\ref{Figure4} 
		are decomposed for $\w t \in \{0,\pi/2,\pi\}$ into 
		$1/(4\pi)$ and $\cos (\w t)\, (\lambda \cos\theta_2) - \sin (\w t)\, (\lambda \sin\theta_2 \sin\phi_2)$
		via
		Eqs.~\eqref{CNOTDensityOp}-\eqref{CNOTWignerF}.\footnoteref{\thecolors}
	}
\end{figure}

\subsubsection{Evolution under the CNOT gate\label{CNOT_evol}}
In the following, we consider the time evolution of pure spin-1/2
states. Let us first introduce the notation $|\alpha \rangle := (1,0)^T$ and 
$|\beta \rangle := (0,1)^T$ 
(cf.\ p.~308 in \cite{bransden2000}, p.~126 in \cite{rae2008}, or
p.~3 in \cite{carrington1967}), which is very similar
to the notation $|0 \rangle$ and $|1 \rangle$ often used in quantum mechanics and
quantum information theory \cite{NC00}, but avoids confusion with
different conventions in the literature relating $|0 \rangle$
to either the excited or ground state.
A pure initial state $\ket{\psi(0)}:=\ket{\beta}\otimes\ket{\alpha}\equiv \ket{\beta \alpha}$
is evolving under the effective Hamiltonian
\begin{equation} \label{CNOTHam}
\mathcal{H}= \w ( I_{1\beta} I_{2x} + I_{1z}/2 )= \w[ (\unity_4/2{-}I_{1z}) I_{2x} + I_{1z}/2 ],
\end{equation} 	
where 
$I_{1\beta}:=I_{\beta} \otimes \unity_2$ and $I_\beta := \unity_2/2-I_z =\ket{\beta}\bra{\beta} $
projects onto the pure state $\ket{\beta}$; likewise 
$I_\alpha := \unity_2/2+I_z =\ket{\alpha}\bra{\alpha}$.
Exponentiation
of $-it\mathcal{H}$ 
leads to the unitary
\begin{equation} \label{unitary}
U_t = \exp{(-i\mathcal{H}t)}
=
\xi(t)
\begin{pmatrix}
1 & 0 & 0 & 0\\
0 & 1 & 0 & 0\\
0 & 0 & \tfrac{1{+}e^{i\w t}}{2} & \tfrac{1{-}e^{i\w t}}{2}\\
0 & 0 & \tfrac{1{-}e^{i\w t}}{2} & \tfrac{1{+}e^{i\w t}}{2}
\end{pmatrix}
\end{equation}
of determinant one with $\xi(t)=\exp(-i \w t/4)$, and  $U_{T}$ with $T=\pi/\w$ is the CNOT gate.
The initial state $\ket{\psi(0)}$ evolves into
\begin{equation} \label{psisolution}
\ket{\psi(t)}=U_t\ket{\psi(0)}=\tfrac{\xi(t)}{2}[ (1{+}e^{i\w t}) \ket{\beta\alpha}  {+}   (1{-}e^{i\w t}) \ket{\beta\beta}  ],
\end{equation}
where $\ket{\psi(T)} \propto \ket{\beta\beta}$. In preparation to switch to Wigner functions,
Eq.~\eqref{psisolution} is rewritten in its density-matrix form 
\begin{gather} 
\rho(t)  = \ket{\psi(t)}\bra{\psi(t)} = \rho_A\, \rho_B(t) \text{ where} \label{CNOTdensityop} \\
\rho_A:=I_{1\beta} \text{ and } \rho_B(t):=\unity_4/2 + \cos(\w t)I_{2z} - \sin(\w t)I_{2y}. \label{CNOTDensityOp}
\end{gather}
Recalling the respective Wigner functions from Table~\ref{CartesianWignerRepr}, one obtains 
for $\rho_A$, $\rho_B(t)$, and $\mathcal{H}$ 
the Wigner functions
\begin{subequations}
	\label{CNOTWignerF}
	\begin{align}
	&W_{\rho_A}=\tfrac{1}{4\pi} - \lambda \cos\theta_1 = \wigner(I_{1\beta}),  \\
	&W_{\rho_B}(t)=\tfrac{1}{4\pi} + \cos(\w t)\, \lambda \cos\theta_2 - \sin(\w t)\, \lambda\sin\theta_2 \sin\phi_2,  \\
	&W_{\mathcal{H}}= \w [ 2 \pi ( \tfrac{1}{4\pi} {-} \lambda \cos\theta_1 )\lambda \sin\theta_2 \cos\phi_2  
	+ \lambda \cos\theta_1/2 ], 
	\end{align}
\end{subequations}
where $\lambda=\sqrt{3}/(4\pi)=R/\sqrt{2\pi}$.
The product of  $W_{\rho_A}$ and $W_{\rho_B}(t)$ 
yields the overall Wigner function $W_\rho(t)=2\pi W_{\rho_A} W_{\rho_B}(t)$.
Its time evolution is shown in Fig.~\ref{Figure4} where only one of the two spherical functions
varies in time, reflecting the product form of $W_\rho(t)$.

The explicit form of the time evolution can also be derived from Eq.~\eqref{TwoSpinTimeDeriv}, hence
$\partial W_{\rho} / \partial t = \sqrt{2 \pi}   \Proj ^{\{ 1,2 \}}   (  P_1  + P_2  )$,
where the Poisson brackets can be computed as
$P_1 = 2 \pi W_{\rho_B}(t) \{ W_{\rho_A}  , W_{\mathcal{H}}  \}^{\{ 1 \}}$ and
$P_2 = 2 \pi W_{\rho_A} \{  W_{\rho_B}(t)   ,  W_{\mathcal{H}}  \}^{\{ 2 \}}$.
As the Wigner function $W_{\rho_A}$ depends only on the variable $\theta_1$ and
$W_{\mathcal{H}}$ does not depend on the variable $\phi_1$,
it is straightforward to deduce that $P_1=0$. This implies that $W_{\rho_A}$ is time independent.
The other Poisson bracket $P_2$ can be written as
\begin{align}
P_2 = \w & [\sqrt{2\pi} R  \wigner(I_{1\beta}) ]^2  [ \cos{(\w t)}\, \{\cos\theta_2 , 
\sin\theta_2 \cos\phi_2 \}^{\{ 2 \}} \nonumber \\ 
&- \sin{(\w t)}\, \{\sin\theta_2 \sin\phi_2 , \sin\theta_2 \cos\phi_2 \}^{\{ 2 \}} ].\label{CNOTPoisson} 
\end{align}
Applying the definition of Eq.~\eqref{Equation_Motion}, the Poisson brackets in Eq.~\eqref{CNOTPoisson} 
are computed as
\begin{equation*}
\{\cos\theta_2 , \sin\theta_2 \cos\phi_2 \}^{\{ 2 \}}=-\sin\theta_2 \sin\phi_2 /R,\quad
\{\sin\theta_2 \sin\phi_2 , \sin\theta_2 \cos\phi_2 \}^{\{ 2 \}}=\cos\theta_2/R.
\end{equation*}
The idempotency $(I_{k\beta})^2=I_{k\beta}$ implies 
$2 \pi \Proj ^{\{ 1 \}} \wigner(I_{1\beta})^2=\wigner(I_{1\beta})$ where 
$\Proj ^{\{ 1 \}}$ projects onto rank-one and rank-zero spherical harmonics,
i.e., the term from Eq.~\eqref{CNOTPoisson} results in
$[\sqrt{2\pi} R  \wigner(I_{1\beta}) ]^2=R^2  \wigner(I_{1\beta})$.
Finally, the equation of motion
based on Eq.~\eqref{TwoSpinTimeDeriv} is
\begin{equation*}
\partial W_{\rho} / \partial t =  2 \pi \w \lambda \wigner(I_{1\beta}) 
 [- \cos{(\w t)} \sin\theta_2\sin\phi_2
-\sin{(\w t)} \cos\theta_2  ],
\end{equation*}
which conforms with the explicitly known Wigner function
from Eq.~\eqref{CNOTWignerF} as $\partial W_{\rho} / \partial t = 2 \pi  W_{\rho_A} \partial W_{\rho_B}(t) / \partial t$.

Similarly, one could start with $\tilde{\rho}_A=I_{1\alpha}$  and one would
obtain for $t=0$ that
$\tilde{P}_2 \propto   \wigner(I_{1\beta})\wigner(I_{1\alpha})
\propto (1-\sqrt{3}\cos\theta_1)(1+\sqrt{3}\cos\theta_1)$ and the result
$1-3\cos^2\theta_1$ is proportional to $\Y_{20}$, which is projected by $\Proj ^{\{ 1 \}}$ to zero.
Consequently, the quantum state $\tilde{\rho}(t)$ would be constant, reflecting the nature of the CNOT gate.

Figure~\ref{Figure4} visualizes the time evolution: starting from 
$\ket{\psi(0)}=\ket{\beta \alpha}$ one has the control state $\ket{\beta}$, 
and the state of the  second spin flips from $\ket{\alpha}$ to $\ket{\beta}$,
resulting in $\ket{\psi(T)} \propto \ket{\beta \beta}$. The Wigner function of the
density matrix $I_{1\beta}$ of the pure state $\ket{\beta}$ is 
proportional to $1 - \sqrt{3} \cos\theta$ and is depicted in Fig.~\ref{Figure4}
as a big positive lobe in red (i.e.\ dark gray) lying below a small negative lobe
in green (i.e.\ light gray), refer to the spherical function in the left circle of $W_\rho$.
The spherical function in the right circle of $W_\rho$,
starts with a  big positive lobe lying over a small negative lobe, and this object
is rotated. At time $T/2$, one observes for the second spin 
an equal superposition of $\ket{\alpha}$ and $\ket{\beta}$.
The form of the Wigner function $W_\rho(t)$ during the time evolution
is further highlighted in Fig.~\ref{Figure5} by decomposing it into 
a time-independent part $1/(4\pi)$ and 
a time-dependent part
$\cos(\omega t)\, (\lambda \cos\theta_2) - \sin(\omega t)\, (\lambda \sin\theta_2 \sin\phi_2)$.
The time-dependent part is simply a rotation
of $I_{2z}$ around the $x$ axis.

\subsubsection{Entanglement creation with the CNOT gate\label{CNOT_entangling}}
In order to highlight the generation of entanglement, the time evolution 
under the Hamiltonian of Eq.~\eqref{CNOTHam}
from 
Sec.~\ref{CNOT_evol} is applied to the initial state $\ket{\gamma(0)}=(\ket{\alpha \alpha}+\ket{\beta \alpha})/ \sqrt{2}$.
The notation $|\alpha \rangle$ and
$|\beta \rangle$ for spin-1/2 eigenstates  was introduced in Sec.~\ref{CNOT_evol}.
This results in the time-dependent state $\ket{\gamma(t)}=
U_t \ket{\gamma(0)}=
( \xi(t) \ket{\alpha \alpha} + \ket{\psi(t)})/\sqrt{2}$, cf.\ Eqs.~\eqref{unitary}-\eqref{psisolution}.
In particular for $t=T$ with $T=\pi/ \w$,
one obtains (up to a phase) 
a maximally entangled Bell state
$\ket{\phi^+}=(\ket{\alpha \alpha}+\ket{\beta \beta})/ \sqrt{2}$.

\begin{figure}[tb]
	\centering
	\includegraphics{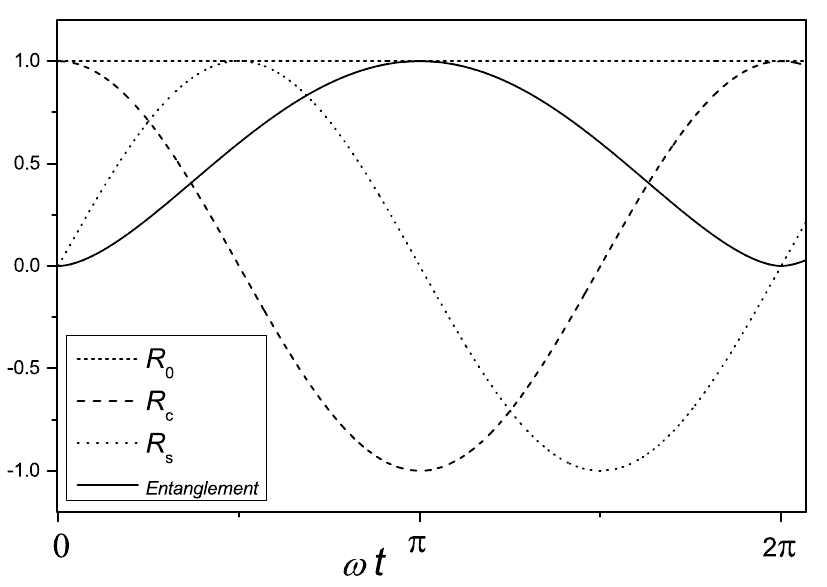}
	\caption{\label{Figure7} Entanglement and contributions of
		the density operator
		$\sigma(t)=\ket{\gamma(t)}\bra{\gamma(t)}=R_0 + \cos(\w t)R_c + \sin(\w t)R_s$ 
		[see Eq.~\eqref{RCRS}]
		during the evolution under the Hamiltonian of Eq.~\eqref{CNOTHam}.
		The von-Neumann entropy of the partial trace is used as entanglement measure \cite{NC00}.
	}
\end{figure}

Equivalently, the time evolution can be described
on the density operator 
$\sigma(t)=\ket{\gamma(t)}\bra{\gamma(t)}=I_{1\alpha}I_{2\alpha}+\rho(t)+A(t)+A^{\dagger}(t)$, 
where $\rho(t)$ is given in Eq.~\eqref{CNOTdensityop} and 
$A(t)=\bar{\xi}(t)\ket{\psi(t)}\bra{\alpha \alpha}$. The density operator can also
be rewritten as $\sigma(t)=R_0+\cos(\w t)R_c+\sin(\w t)R_s$\footnote{
	One can establish that $\sigma(t)$ satisfies the von-Neumann equation~\eqref{NeumannEq}
	by verifying the commutators $[\mathcal{H},R_0]=0$, $[\mathcal{H},R_c]/i=R_s$, and $[\mathcal{H},R_s]/i=-R_c$
	with the Hamiltonian $\mathcal{H}$ from Eq.~\eqref{CNOTHam}.}, where 
\begin{align}
	R_0 &=[+I_{1x}I_{2x}{-}I_{1y}I_{2y} {+} I_{1x}I_{2\alpha} {+} I_{1\alpha}I_{2\alpha} {+} \tfrac{1}{2} I_{1\beta}]/2,
	\nonumber \\
	R_c &=[-I_{1x}I_{2x}{+}I_{1y}I_{2y} {+} I_{1x}I_{2\alpha} {+} I_{1\beta}I_{2z}]/2, \label{RCRS} \\
	R_s &=[I_{1y}(I_{2\alpha}{-}I_{2x}) - (I_{1x}{+}I_{1\beta})I_{2y}]/2.   \nonumber 
\end{align}
The evolution of these parts
is shown in Fig.~\ref{Figure7}
together with the entanglement of the density operator
as functions of time. Also, we obtain the decomposition
\begin{align} 
	\sigma(0)&=R_0+R_c=(\tfrac{1}{2}\unity_4{+}I_{1x})I_{2\alpha}, \label{fnsst} \\   
	\sigma(T)&= R_0-R_c=\tfrac{1}{4} \unity_4 + I_{1x}I_{2x}-I_{1y}I_{2y} + I_{1z}I_{2z}.
	\nonumber
\end{align}
We switch now to the Wigner functions
\begin{align}
W_{\sigma}(0)&= 2\pi (\tfrac{1}{4\pi} +W_{1x}) W_{2\alpha}, \nonumber \\
W_{\sigma}(T)&=2\pi (\tfrac{1}{8\pi} + W_{1x}W_{2x}-W_{1y}W_{2y} + W_{1z}W_{2z}), \label{Wigner_ent}\\
W_\mathcal{H}&=2\pi W_{1\beta} W_{2x} + \tfrac{1}{2} W_{1z}, \nonumber
\end{align}
for the density operators of Eq.~\eqref{fnsst}
and the Hamiltonian $\mathcal{H}$ of Eq.~\eqref{CNOTHam}
by applying Table~\ref{CartesianWignerRepr},
where $W_{ka}$ denotes the Wigner function of $I_{ka}$.
Figure~\ref{Figure8} depicts the Wigner functions $W_{\sigma}(0)$, $W_{\sigma}(T)$,
and $W_{R_s}$ in  their PROPS representations.
The Wigner function 
$W_{\sigma}(t)= W_{R_0} + \cos(\w t) W_{R_c} +\sin(\w t)W_{R_s}$
satisfies the equation of motion in Eq.~\eqref{TwoSpinTimeDeriv}.\footnote{
	This can be demonstrated for $\w W_{R_s} = \partial  W_{\sigma}(0)  /\partial t$
	[and likewise for $W_{\sigma}(t)$]
	by calculating the Poisson brackets 
	$
	\{ W_{\sigma}(0) , W_\mathcal{H} \}^{\{ 1 \}}
	= 
	\w [ -(2\pi)^2 \{ W_{1x},  W_{1\beta} \}^{\{ 1 \}}  W_{2\alpha}  W_{2x}   
	+\pi  \{ W_{1x},  W_{1z} \}^{\{ 1 \}} W_{2\alpha}]$ and
	$\{ W_{\sigma}(0) , W_\mathcal{H} \}^{\{ 2 \}}
	=  
	\w [ (2\pi)^2\allowbreak{} \{ W_{2\alpha},  W_{2x} \}^{\{ 2 \}}   
	[W_{1x}+(4\pi)^{-1}]  W_{1\beta} ]$.
	Afterwards, they are substituted back into the equation of motion;
	note the projection formulas $\Proj ^{\{ 2 \}} W_{2\alpha}  W_{2x}=W_{2x}/(4\pi)$ and
	$\Proj ^{\{ 1 \}} W_{1x}  W_{1\beta}=W_{1x}/(4\pi)$ as well as
	$\{ W_{1x},  W_{1\beta} \}^{\{ 1 \}}=-W_{1y}/\sqrt{2\pi}$ and
	$\{ W_{2\alpha},  W_{2x} \}^{\{ 2 \}} = -W_{2y}/\sqrt{2\pi}$.}

As for $\ket{\phi^+}$, the maximal entangled pure states
\begin{align*}
\ket{\phi^\pm}=(\ket{\alpha \alpha} \pm \ket{\beta \beta})/ \sqrt{2} 
\, \text{ and } \,
\ket{\psi^\pm}=(\ket{\alpha \beta} \pm \ket{\beta \alpha})/ \sqrt{2}
\end{align*}
of a system of two spins $1/2$ have the density matrices
\begin{align*}
\ket{\phi^+}\bra{\phi^+} & =\, \tfrac{1}{4} \unity_4 
+ I_{1x}I_{2x} - I_{1y}I_{2y} + I_{1z}I_{2z}, &
\ket{\phi^-}\bra{\phi^-} & =\,  \tfrac{1}{4} \unity_4 
- I_{1x}I_{2x} + I_{1y}I_{2y} + I_{1z}I_{2z}, \\
\ket{\psi^+}\bra{\psi^+} & =\,  \tfrac{1}{4} \unity_4 
+ I_{1x}I_{2x} + I_{1y}I_{2y} - I_{1z}I_{2z}, &
\ket{\psi^-}\bra{\psi^-} & =\,  \tfrac{1}{4} \unity_4 
- I_{1x}I_{2x} - I_{1y}I_{2y} - I_{1z}I_{2z},
\end{align*}
whose Wigner functions can be computed as in Eq.~\eqref{Wigner_ent}.

\begin{figure}[tb]
	\centering
	\includegraphics{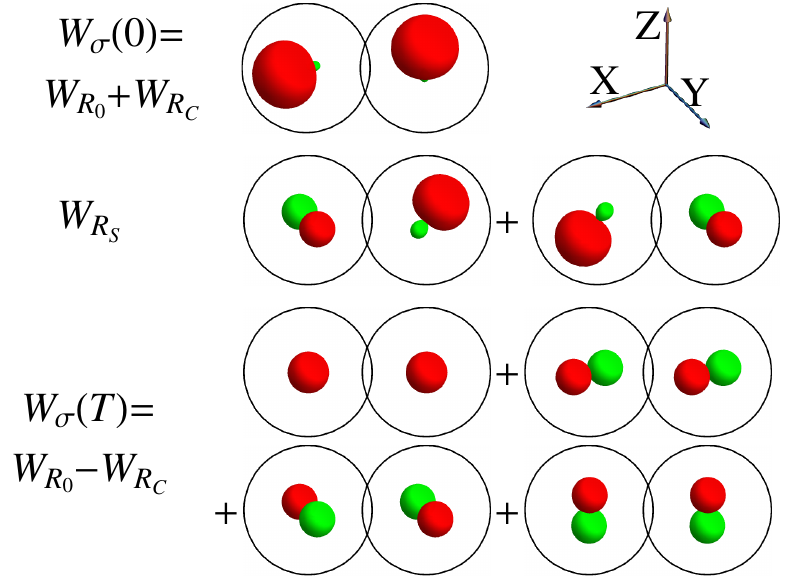}
	\caption{\label{Figure8} 
		(Color online) Illustration of the Wigner functions $W_{\sigma}(0)$, $W_{R_s}$, and $W_{\sigma}(T)$
		with $T=\pi/ \w$ using the PROPS representation.
		The generation of the maximally entangled Bell state
		$\ket{\phi^+}=(\ket{\alpha \alpha}+\ket{\beta \beta})/ \sqrt{2}$
		at time $T$ is reflected by the higher number of terms for
		$W_{\sigma}(T)$.\footnoteref{\thecolors}
	}
\end{figure}

As detailed in Sec.~\ref{VectorOperator} above, the Wigner transform of an operator of the form 
$I_{1x}I_{2x} + I_{1y}I_{2y} + I_{1z}I_{2z}$ results in the scalar product of two vectors
$W(\vec{r}_1,\vec{r}_2)=R^2 \, \vec{r}_1 \cdot \vec{r}_2$, 
providing a geometrical interpretation. Here, the argument of the Wigner function is given
by the unit vectors $\vec{r}_1$ and $\vec{r}_2$ in $\mathbb{R}^3$,
corresponding to the angles $\theta_1,\phi_1$ and $\theta_2,\phi_2$. 
As a result of the addition theorem of spherical harmonics,
the Wigner function is given by the scalar product $\vec{r}_1 \cdot \vec{r}_2$ of the two vectors.
The Wigner function of the maximally entangled state $\ket{\psi^-}\bra{\psi^-}$ is consequently given as 
\begin{equation*}
\wigner(\ket{\psi^-}\bra{\psi^-}) = 1/(8 \pi) - R^2 \, \vec{r}_1 \cdot \vec{r}_2 ,
\end{equation*}
it is thus entirely described by the angle between the two argument vectors. Similarly, the 
maximally entangled state $\ket{\phi^+}\bra{\phi^+}$ has the Wigner function 
$1/(8 \pi) + R^2 \, \vec{r}'_1 \cdot \vec{r}_2$, where the $y$ entry of $\vec{r}'_1$ 
is negated, i.e., $\vec{r}'_1$ 
has the entries
$[\vec{r}_1]_x$, $-[\vec{r}_1]_y$, and $[\vec{r}_1]_z$. 
Therefore, all Wigner functions
of maximally entangled pure states for two spins $1/2$ can be described 
using the scalar product of their argument vectors
after negating certain entries.

\subsection{Evolution of three coupled spins\label{Ex_three}}

Let us now also discuss an example for the case of three coupled spins.
The system starts from the traceless deviation density matrix
$\rho(0) =  I_{2x}$ and evolves
under the Hamiltonian $\mathcal{H}= \pi \JC  (2I_{1z} I_{2z} + 2I_{2z} I_{3z})$
which couples both the first and second spin as well as the second and third spin
with the same coupling strength $\JC$. This results 
in anti-phase and double anti-phase operators \cite{KeelerUnderstanding}.
The corresponding solution of the von-Neumann equation is given by
\begin{equation*}
\rho(t)= \sin (2\pi \JC t) [2I_{1z}I_{2y}{+}2I_{2y}I_{3z}]/2  +   [\cos (2\pi \JC t) {-}1]4 I_{1z}I_{2x}I_{3z}/2 
+[\cos (2\pi \JC t) {+}1] I_{2x}/2.
\end{equation*}
The detectable NMR signal corresponding to $I_{2x}$
is proportional to $[\cos (2\pi \JC t) +1]$, and the corresponding spectrum
has the well-known form of a triplet (see, e.g., Figure 18.9 in \cite{levittspind})
whose lines are separated by $\JC$ and whose relative intensities are given by
$1:2:1$.

The relevant Wigner functions  are given by
$W_{\mathcal{H}} =\pi \JC 2 R^2 (\cos\theta_1+\cos\theta_3) \cos\theta_2 /\sqrt{2\pi}$
and $W_{\rho}(t)=W_0 + \sin (2\pi \JC t) W_s + \cos (2\pi \JC t) W_c$, where

\begin{gather*}
W_0 =\tfrac{2R^3}{3}(1{-}3\cos\theta_1\cos\theta_3) \sin\theta_2\cos\phi_2, \;
W_s =\tfrac{R^2}{\sqrt{2\pi}}(\cos\theta_1{+}\cos\theta_3) \sin\theta_2\sin\phi_2,  \\
\text{and }\; W_c =\tfrac{2R^3}{3}(1{+}3\cos\theta_1\cos\theta_3) \sin\theta_2\cos\phi_2.
\end{gather*}
Their form can be inferred from Table~\ref{CartesianWignerRepr} and also the product structure 
of the Wigner functions $\wigner(I_{1a}I_{2b}I_{3c})=(2\pi)^N \wigner(I_{1 a}) \wigner(I_{2 b}) \wigner(I_{3 c})$
for $a,b,c\in\{x,y,z\}$. 
The evolution of these parts is shown in Figure \ref{Figure10}. 
For this particular case, the equation of motion is given by 
(see Corollary~\ref{result5})
\begin{equation*}
\partial W_\rho / \partial t =
2 \pi \Proj ^{\{ 1,2,3 \}} \sum_{k=1}^3  \{ W_\rho  , W_\mathcal{H}  \}^{\{ k \}}.     
\end{equation*}
We verify that $W_{\rho}(t)$ satisfies the equation of motion by checking
that the conditions $\partial W_0 / \partial t=0$, $\partial W_c / \partial t \propto W_s$,
and $\partial W_s / \partial t \propto W_c$ hold. In the first case,
the Poisson brackets $\{ W_0  , W_\mathcal{H}  \}^{\{ 1 \}}$ and 
$\{ W_0  , W_\mathcal{H}  \}^{\{ 3 \}}$ vanish as they 
are respectively proportional to $\{ \cos\theta_1  , \cos\theta_1  \}^{\{ 1 \}}$
and $\{ \cos\theta_3  , \cos\theta_1  \}^{\{ 3 \}}$. The Poisson 
bracket $\{ W_0  , W_\mathcal{H}  \}^{\{ 2 \}}$ is nonzero, however
its projection by $\Proj ^{\{ 1,2,3 \}}$ is zero. Similarly, one can calculate all Poisson brackets and
projections to complete the verification of 
the equation of motion for this example.

\begin{figure}[tb]
	\centering
	\includegraphics{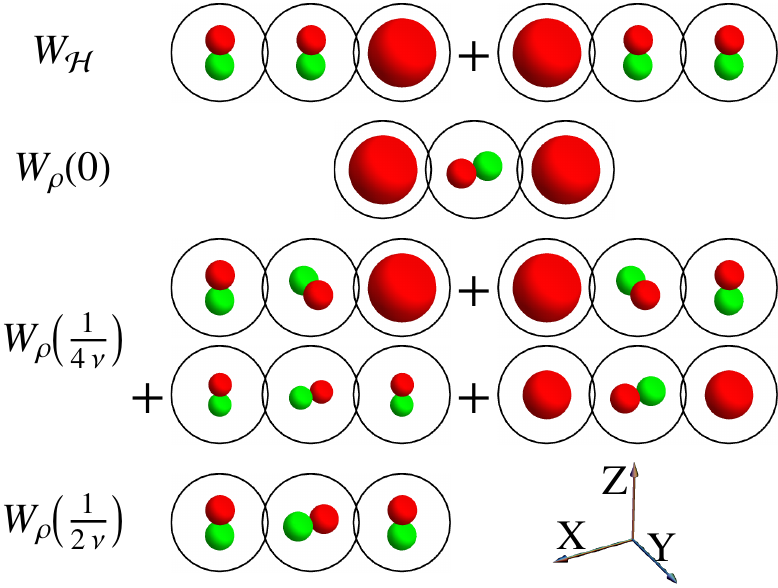}
	\caption{\label{Figure10} 
		(Color online) Visualization of the Wigner functions for the time evolution
		of three coupled spins: the Hamiltonian
		$\mathcal{H}= \pi \JC  (2I_{1z} I_{2z} + 2 I_{2z} I_{3z})$ acts on the starting 
		deviation density matrix
		$\rho(0)=I_{2x}$; $\rho[1/(4\JC)]=I_{1z}I_{2y} + I_{2y}I_{3z}-2I_{1z}I_{2x}I_{3z}+I_{2x}/2$ and 
		$\rho[1/(2\JC)]=-4 I_{1z}I_{2x}I_{3z}$.\footnoteref{\thecolors}
	}
\end{figure}

\section{Discussion and connections \label{auxmaterial}}

In this section, we complement the description of the Wigner formalism for coupled 
spins from Sec.~\ref{theorysection}
and discuss connections to alternative or related characterizations.
This will allow for simpler interpretations of our formalism and will link
to notions which might provide further avenues to our work.
First, we draw
important connections between 
the Poisson bracket and the canonical angular momentum (see Sec.~\ref{connangmom}).
We continue in Sec.~\ref{InfDimConnection} by relating the Wigner formalism of
finite- and infinite-dimensional quantum systems.
Certain Wigner functions are interpreted in terms of quaternions (see Sec.~\ref{Quaternions}).
Finally, the evolution of non-hermitian states is considered in Sec.~\ref{nonhermitianstatessection}.

\subsection{Poisson bracket and the canonical angular momentum \label{connangmom}} 
We detail how the Wigner formalism for coupled spins relates to the angular momentum
of infinite-dimensional quantum systems described by the 
canonical angular momentum operator $\mathcal{L}=r \times p$.
The eigenfunctions of the canonical angular momentum operator $\mathcal{L}$ are spherical harmonics.
The corresponding pure eigenstates are represented by spherical functions $\psi(\theta, \phi)$ and 
evolve according to the time-dependent Schrödinger
equation 
\begin{equation} \label{Schroedinger}
\partial \psi(\theta, \phi) / \partial t = -i \mathcal{H} \psi(\theta, \phi).
\end{equation}
Each component $\mathcal{L}_\alpha$ of the vector operator  corresponding to the Hamiltonian 
$\mathcal{H} = \sum_{\alpha\in\{x,y,z\}}  \w_\alpha I_\alpha $
generates a rotation 
$\mathcal{L}_\alpha \psi(\theta, \phi)$
of the
spherical functions  around the axis $\alpha$ for
$\alpha \in \{x,y,z \}$.
Providing a direct correspondence,
spin operators $\mathrm{I}=(I_{x},I_{y},I_{z})$ generalize the angular momentum 
$\mathcal{L}=(\mathcal{L}_{x},\mathcal{L}_{y},\mathcal{L}_{z})$.
The Wigner representation for spins describes
in general a mixed quantum state using  a
linear combination
of spherical harmonics.
We show that the time evolution of 
the Wigner representation of
a spin $1/2$ is closely related to the
time evolution of an infinite-dimensional quantum system by rewriting the equation of motion into a form 
which is
analogous to the Schrödinger equation in Eq.~\eqref{Schroedinger}.

The equation of motion of a single spin $1/2$ is given by 
Eq.~\eqref{NeumannWignerEvolution} combining the Hamiltonian $\mathcal{H}$ and an operator $A$,
while applying the Poisson bracket from Eq.~\eqref{PBDef}. It can be reformulated by
defining a differential operator $\mathcal{D}$ as a function of the Hamiltonian, acting
on the Wigner function $W_{A}$ of $A$:
\begin{equation}
\partial W_{A}/\partial t = 
-i\mathcal{D}(  W_{A} )
= -i\left(   f^\theta_{(\mathcal{H}) } \tfrac{\partial}{\partial \phi}    -   
f^\phi_{(\mathcal{H})}\tfrac{\partial}{\partial \theta}    \right) W_{A}.
\end{equation}
Here, the definitions  $f^\theta_{(\mathcal{H}) }= ({\partial 
	W_{\mathcal{H}}}/{ \partial \theta}) ({i}/{R \sin \theta})$ and  
$f^\phi_{(\mathcal{H}) }= ({\partial W_{\mathcal{H}}}/{ \partial \phi})({i}/{R \sin \theta})$
have been applied.
Integrating the differential equation, one obtains the propagator
\begin{equation}
W_{A}(t) = 
\exp (  -i \mathcal{D}t )  W_{A}(0),
\end{equation}
where the differential operator $\mathcal{D}$ depends on $\mathcal{H}$. 
We consider the Hamiltonians $\mathcal{H}=\w_z I_z$, $\mathcal{H}=\w_x I_x$, and $\mathcal{H}= \w_y I_y$:
Assuming $\mathcal{H}= \w_z I_z$, it follows that
$W_{\mathcal{H}}= {\w_z} \Y_{1,0} /{\sqrt{2}} = R \, \w_z   \cos\theta$.
One obtains the differential operator 
$\mathcal{D}=- \w_z  i ({\partial}/{\partial \phi})    - 0 
= \w_z \mathcal{L}_z$, 
where $\mathcal{L}_z$ is the canonical angular momentum operator $\mathcal{L}_z 
= (r \times p )_z$ in spherical
coordinates (see pp.~662 in \cite{cohen1991quantum}).
The corresponding propagator is then given by $\exp( - i \w_z \mathcal{L}_z  t )$, which by its definition
rotates the Wigner function by an angle $\w_z t$ around the $z$ axis. In the
case of $\mathcal{H}=\w_x I_x$, the Wigner representation has the form $W_{\mathcal{H}}=\w_x R \sin \theta \cos \phi$
and the differential operator is given by the expected canonical angular momentum component
$\mathcal{D}=  i \w_x [  \sin \phi    ({\partial}/{\partial \theta})    + 
\cot \theta  \cos \phi ({\partial}/{\partial \phi})   ]= \w_x \mathcal{L}_x$. 
Finally, $\mathcal{H}=\w_y I_y$ leads to the differential operator $\mathcal{D}= \w_y \mathcal{L}_y$.

As a conclusion for a single
spin $1/2$, 
Wigner functions together with a star commutator correspond to 
canonical angular momentum operators with a cross product. 
This means that a Hamiltonian $\mathcal{H} = \sum_{\alpha\in\{x,y,z\}}  \w_\alpha I_\alpha $
is mapped to the differential operator 
$\mathcal{D} =f ( {\mathcal{L}} )= \sum_\alpha  \w_\alpha ({r} \times {p} )_\alpha$, and the time evolution
is given by 
\begin{equation}
\label{WignertoSchroedingerEq}
i \partial W_{\rho}/\partial t=[W_{\mathcal{H}},W_\rho]_{\star} = f ( {\mathcal{L}}  ) W_\rho.
\end{equation}
The corresponding propagator can be written as
\begin{equation}
\exp( -i \mathcal{D}t ) =  \exp [-i t f ( {\mathcal{L}} ) ].
\end{equation}
Generalizations to the case of 
linear Hamiltonians $^J\mathcal{H}=\w_x \, ^J \Jspace I_x + \w_y\,^J \Jspace I_y + \w_z\,^J \Jspace I_z$
with arbitrary $J$ are also possible, cf.\ Eq.~\eqref{evolutionArbitraryJ}.

Comparing Eq.~\eqref{WignertoSchroedingerEq} with Eq.~\eqref{Schroedinger}, one concludes
that the time evolution of Wigner functions for spins is formally equivalent to
the time evolution of the angular part of infinite-dimensional quantum states.
The time evolution of spin-$1/2$ Wigner functions described in Section~\ref{subsecStarProdDef}
is based on the equation of motion given by the star commutator
[refer to Eq.~\eqref{NeumannWignerEvolution}], however, 
Eq.~\eqref{WignertoSchroedingerEq} provides an alternative formulation for a spin $1/2$ based
on the Schrödinger equation by mapping the spin operator $I_\alpha$ onto $({r} \times {p} )_\alpha$.

\subsection{Finite- and infinite-dimensional degrees of freedom \label{InfDimConnection}}
We discuss now
important relations between Wigner representations for finite- and infinite-dimensional degrees of freedom.
In the finite-dimensional case (i.e.\ for spins), Wigner representations
have
been developed in Sec.~\ref{theorysection} for coupled systems extending  approaches
based on the Stratonovich postulates (see Sec.~\ref{AppendixWignerRepr}).
Methods for Wigner representations applicable to 
infinite-dimensional quantum systems with a flat phase space have been more widely discussed in the 
literature, see Sec.~\ref{introtowignerf} and
\cite{carruthers1983,hillery1997,kim1991,lee1995,gadella1995,zachos2005,schroeck2013,SchleichBook,Curtright-review}.
We explore how spin operators can be uniquely mapped onto functions over a phase space that
are restricted to the surface of a sphere. This provides
a rotational covariance for operators and their corresponding Wigner representations.
Expectation values of operators are calculated as 
quasi-probability weighted integrals of phase-space functions. 

We discuss basic properties of the (classical)
Wigner functions for infinite-dimensional spaces (see Sec.~\ref{introtowignerf})
which
have the quality of a flat phase space, in contrast to the case for spins.
Flat phase-space coordinates translate 
in the case of spins 
to
curvilinear spherical coordinates which 
form a Heisenberg pair as their star commutator
(as described in Eq.~\eqref{evolutionArbitraryJ} of
Sec.~\ref{subsecStarProdDef})
results in the canonical commutation relation $[q,p]_\star = i \hbar$.
We also compute an upper bound for the absolute value of Wigner functions and  we show 
by investigating its limit
for $J\rightarrow \infty$ that
arbitrary large
values corresponding to 
localized probability distributions
are possible.

\subsubsection{Phase-space coordinates} \label{connectionPScordinates}
Flat phase-space coordinates $(p,q)$ are replaced by curvilinear coordinates $(R\cos\theta,\phi)$
in the Wigner formalism for spins, where  $R$ denotes a proportionality factor.
This implies that $p$ and $q$ describe coordinates on the surface of a three-dimensional sphere.
In analogy to infinite-dimensional quantum mechanics, where the momentum operator $p$ generates
the translation $(p,q+\mathrm{d}q)$ in the $(p,q)$ phase-space,
the spin-$1/2$ Wigner function $R\cos\theta=\wigner (I_z)$ generates a rotation $(R\cos\theta,\phi+\mathrm{d} \phi)$ 
by the infinitesimal angle $\mathrm{d}\phi$
in the spherical phase-space coordinates $(R\cos\theta,\phi)$. In general,
the operator $^J \Jspace I_z$ generates a rotation of a spin $J$ around the $z$ axis, 
and $^J \Jspace I_z$ is mapped to the function
$R\cos\theta/N_J$.\footnoteref{\theglobprefactor}

\subsubsection{Commutators} \label{connectionComutators}
In the case of the infinite-dimensional Wigner representation,
the star commutator $[f,g]_\star$
is given by $ f {\star} g - g {\star} f  = i \hbar \{f,g\}$ up to $\mathcal{O}(\hbar^3)$
\cite{zachos2005,schroeck2013,SchleichBook}
where $\{f,g\} =\partial_q f \partial_pg - \partial_pf \partial_qg$
denotes here 
the Poisson bracket
from classical physics (see, e.g., Vol.~1, §42 of \cite{landau1976}).
Switching from flat phase-space coordinates $(p,q)$ to the curvilinear coordinates 
$(R\cos\theta,\phi)$
of the spin-$1/2$
Wigner representation and setting $\hbar \rightarrow 1$,
one obtains the same star commutator as in \ref{singlespinstar}, see Eq.~\eqref{NeumannWignerEvolution}.

The canonical commutation relation $[q,p]=i \hbar$,
which translates to $[q,p]_\star = i \hbar$ in the Wigner representation,
states that the infinite-dimensional coordinate and momentum operators are not simultaneously determined.
The coordinates $(R\cos\theta,\phi)$ also form a Heisenberg pair, as conjugate variables via 
the star commutator $[\phi, R\cos\theta]_\star = i$. In general, the formula
$[\phi, \wigner (^J \Jspace I_z)]_\star = i$ is implied by Eqs.~\eqref{CommutatorPBCorrespondance}
and \eqref{evolutionArbitraryJ}, and $^J \Jspace I_z$ is mapped to the function
$R\cos\theta/N_J$.\footnoteref{\theglobprefactor}

The canonical commutation relation implies that the infinite-dimensional operators
$p$ and $q$ have only infinite-dimensional matrix representations.
The same holds in case of spins for the coordinate $\phi$ describing the phase angle in the $x$-$y$ plane: 
it has 
no finite-dimensional matrix representation
as $\wigner[\wigner^{-1} ( \phi ) ] = \phi$ is valid only in the 
case of $J \rightarrow \infty$. 
This can be verified
by defining the inverse Wigner transform of $\phi$ as $\wigner^{-1}(\phi)$ 
[see Eq.~\eqref{inverseWigner}],
and this results in
\begin{align}
& \int_{\theta=0}^{\pi} \int_{\phi=0}^{2 \pi}   \phi \,  \Delta_{J}(\theta, \phi)  
\: \sin{\theta} \,\mathrm{d}\theta  \, \mathrm{d}\phi \\
& \propto \sum_{j=0}^{2J} \sum_{m=-j}^j \Tj_{jm} 
 \int_{\theta=0}^{\pi} 
P_{j |m|} (\cos{\theta}  ) \sin{\theta}  \,\mathrm{d}\theta 
\int_{\phi=0}^{2 \pi} \phi   \, \exp(-i m \phi)  \, \mathrm{d}\phi.  \nonumber
\end{align}
Here, the last integral is the Fourier series expansion of $\phi$ and it specifies
an infinite series in $m$. This implies that a unique matrix representation for $\phi$
exists only in case of $J\rightarrow \infty$.
The canonical commutation relation $[q,p]=i \hbar$ is only valid for 
infinite-dimensional representations.
Likewise, the kernel defining the Wigner transformation of spin operators
[see Eq.~\eqref{SingleKernelDefinition}]
becomes in the limit of $J\rightarrow \infty$
identical to the kernel of an infinite-dimensional quantum system \cite{amiet2000}.

\subsubsection{Normalization and upper bound} \label{connectionNormalization}
In general, normalized operators $\tr(A A^\dagger)=1$ are mapped to functions $W_A(\theta, \phi)$ on
the unit sphere, where the square $\abs{W_A(\theta, \phi)}^2$ of the complex absolute value  provides a 
normalized surface integral.
This can be checked by expanding the operator $A=\sum_{j,m} c_{jm} \Tj_{jm}$ into tensor operators, then its
Wigner transformation is given by $W_A=\sum_{j,m} c_{jm} \Y_{jm}$. The condition $\tr(A A^\dagger)=1$ 
is mapped to
the condition $\sum_{j,m} c_{jm} c^*_{jm}=1$; the normalized surface integral 
$\abs{W_A(\theta, \phi)}^2=| \sum_{j,m} c_{jm} \Y_{jm}|^2 $
for a linear combination of spherical harmonics
follows from the orthonormality of spherical harmonics in Eq.~\eqref{OrtRelat}.
Also, the norm of a matrix is conserved in the Wigner representation.

We compute now an upper bound for the absolute value of the Wigner function $W_A$ 
of a normalized operator $A$ with $\tr(A A^\dagger)=1$.
The Cauchy-Schwarz inequality, 
implies the inequality
\begin{equation}\label{left_right}
\tr(\Delta_J \Delta_J) \tr(A A^\dagger) \geq | \tr(A \, \Delta_J  )  |^2.
\end{equation}
The right-hand side is equal to $|W_A(\theta,\phi)|^2$ where we have applied the definition
of the Wigner function $W_A(\theta,\phi)$ from Eq.~\eqref{WignerTransform}. 
Assuming a normalized operator $A$,
the left-hand side of Eq.~\eqref{left_right} is equivalent to the trace of the square of
the kernel defined in Eq.~\eqref{SingleKernelDefinition}, i.e., the left-hand side is equal to 
\begin{equation*}
\tr(\Delta_J \Delta_J) =\sum_{j=0}^{2J} \sum_{m=-j}^j \Y_{jm} \Y^*_{jm} = \sum_{j=0}^{2J} \frac{2j{+}1}{4 \pi}
=\frac{(2J{+}1)^2}{4\pi}. 
\end{equation*}
The aforementioned statements imply the upper bound 
$|W_A(\theta,\phi)| \leq (2J{+}1)/\sqrt{4 \pi}$ for  
normalized operators $A$. For $J \rightarrow \infty$, the
upper bound goes to infinity, allowing localized but normalized quasiprobability distributions
$W_A(\theta,\phi)=\delta_{\theta-\theta'} \delta_{\phi-\phi'}/\sin{\theta}$
w.r.t.\ both $\theta$ and $\phi$, which
correspond to classical vectors
pointing to the surface of a sphere of unit radius.
Even though spins have no classical counterparts,
a classical description emerges from the quantum one
in the limit of $J \rightarrow \infty$. This follows as 
the growing number $2J{+}1$ of states
allow for larger values in the Wigner function,
while negative regions shrink.

\subsubsection{Implications}
The Stratonovich postulates provide an abstract formulation for the phase-space
representation of spins. Here, we showed the most important links between 
Wigner functions of 
finite-
and infinite-dimensional quantum systems and how to interpret basic properties
of phase-space representations. Phase-space coordinates $(R\cos\theta/N_J,\phi)$ of spins 
span the surface of a sphere. These two coordinates form a Heisenberg pair
with $[\phi, R\cos\theta/N_J]_\star = i$ and consequently $ R\cos\theta/N_J$ generates the
translation of the coordinate $\phi$ corresponding to the rotation of the sphere
around the $z$ axis. The coordinate $\phi$ has no unique matrix representation for a finite
spin $J$. We also showed that an upper bound for the absolute value
of a normalized Wigner functions is proportional 
to the number $(2J{+}1)$ of degrees of freedom. 

\subsection{Wigner functions and quaternions \label{Quaternions}}
In this section, we introduce a variant of Wigner functions based on quaternions.
Quaternions can be represented by $2\times 2$ matrices and the quaternionic product by 
matrix multiplication. 
In Sec.~\ref{matrixquaternion}, we show that the Wigner transformation of these matrices combined
with the star product
derived in Sec.~\ref{singlespinstar} also provides a valid representation of quaternions.
Quaternions can also be represented as a three-dimensional vector equipped with a scalar part,
offering a geometrical interpretation of the quaternionic product.
In Sec.~\ref{vectquaternion}, we show how a three-dimensional vector, 
corresponding to the Pauli vector can be mapped onto
Wigner functions. We provide an explicit form for the quaternionic product in this case.

Based on these results we show in Sec.~\ref{quaternionstarprod} that the star product of
spin-$1/2$ operators is formally analogous to the quaternionic product, when applied to
quaternionic Wigner representations. This offers a geometrical interpretation for the
star product derived in Sec.~\ref{singlespinstar}.

\subsubsection{Matrix representation of quaternions \label{matrixquaternion}}
The set $\mathbb{H}$ of quaternions  can be identified with a four-dimensional vector space
over real numbers $\mathbb{R}^4$. Every element $q \in  \mathbb{H}$ is given as a linear combination
of the basis elements $\mathrm{1}$, $\mathrm{i}$, $\mathrm{j}$, and $\mathrm{k}$, where 
$\mathrm{i}^2=\mathrm{j}^2=\mathrm{k}^2=\mathrm{i}\mathrm{j}\mathrm{k}=-\mathrm{1}$.
Quaternions can be identified with $2\times 2$ matrices spanned by 
the basis elements 
$\mathrm{1}\cong \unity_2$, $\mathrm{i}\cong -i\sigma_x$, 
$\mathrm{j}\cong -i\sigma_y$, $\mathrm{k}\cong -i\sigma_z$, where matrix multiplication replaces
the quaternionic product.
A quaternionic Wigner representation is obtained by mapping the Pauli
matrices to their respective Wigner functions (see Sec.~\ref{AppendixWignerRepr} for the Wigner transformation), 
i.e.,
${W}_{\mathrm{1}}:=1/\sqrt{2 \pi}$, ${W}_{\mathrm{i}}:=-i W_x=-i \wigner(\sigma_x)$, 
${W}_\mathrm{j}:=-i W_y=-i \wigner(\sigma_y)$, ${W}_\mathrm{k}:=-i W_z=-i \wigner(\sigma_z)$ 
where the Wigner representations have the form
\begin{equation}
W_x=2R \sin\theta\cos\phi,\,
W_y=2R \sin\theta\sin\phi,\, W_z=2R \cos\theta.   \label{WignerQuaternionbases} 
\end{equation}
The corresponding quaternionic multiplication is given by the star product described in
Sec.~\ref{singlespinstar}, namely
${W}_\mathrm{i} \star {W}_\mathrm{i}={W}_\mathrm{j} \star {W}_\mathrm{j}
={W}_\mathrm{k} \star {W}_\mathrm{k}={W}_\mathrm{i} \star {W}_\mathrm{j} \star {W}_\mathrm{k}=-{W}_\mathrm{1}$.

\subsubsection{Vectorial representation of quaternions \label{vectquaternion}}
There are other ways to represent quaternions and the quaternionic product.
Let us identify an arbitrary quaternion $q \in  \mathbb{H}$ with a four-dimensional real vector 
$h=h(q)=(r,\vec{v})$ where $r \in \mathbb{R}$
is a real number and $\vec{v} \in \mathbb{R}^3$ defines a three-dimensional real vector.
The product of two quaternions $(r_1,\vec{v}_1)$ and $(r_2,\vec{v}_2)$ is now given by \cite{altmann2005}
\begin{equation} \label{quaternionproduct}
(r_1,\vec{v}_1) (r_2,\vec{v}_2) =(r_1r_2 - \vec{v}_1 \cdot \vec{v}_2, r_1 \vec{v}_2 +
r_2 \vec{v}_1 + \vec{v}_1 \times \vec{v}_2),
\end{equation}
where $\vec{v}_1 \cdot \vec{v}_2$ denotes the scalar product and
$\vec{v}_1 \times \vec{v}_2$ is the cross product.

We can directly translate this into the Wigner representation.
The vectorial part $\vec{v}$ of the quaternion is replaced by the Wigner function
$\mathrm{W}_{\vec{v}}$\footnote{
	The Wigner representation of
	$\vec{v} \cdot \vec{\sigma}=v_x \sigma_x + v_y \sigma_y + v_z \sigma_z$ 
	is exactly $\mathrm{W}_{\vec{v}}$.
}, while the scalar part is replaced by the identity element ${W}_1 := 1/\sqrt{2 \pi}$. 
The Wigner function $\mathrm{W}_h$ of a quaternion $h=(r,\vec{v})$ is then given by
$\mathrm{W}_h = r {W}_1 + v_x {W}_x + v_y {W}_y + v_z {W}_z$
where the basis operators are defined as Wigner representations of the Pauli matrices,
see Eq.~\eqref{WignerQuaternionbases}.
The scalar product for the Wigner functions corresponding to quaternions
is given in terms of spherical functions by
\begin{equation} \label{WignerScalarProd}
\langle f | g \rangle = \tfrac{1}{2} \int_{\theta=0}^{\pi} \int_{\phi=0}^{2 \pi}  f(\theta, \phi)   
g  (\theta, \phi)   \sin{\theta} \,\mathrm{d}\theta  \, \mathrm{d}\phi.
\end{equation}
One can write the real coefficients $(r,v_x,v_y,v_z)$ as projections onto the basis functions
$r=\langle {W}_1 | \mathrm{W}_h \rangle$ and $v_\alpha=\langle {W}_\alpha | \mathrm{W}_h \rangle$
with $\alpha\in \{x,y,z\}$. Note that the factor $1/2$ before the integral in Eq.~\eqref{WignerScalarProd}
is included to obtain normalized
basis elements such that  $\langle {W}_1 | {W}_1 
\rangle=\langle {W}_\alpha | {W}_\alpha \rangle=1$.

This construction represents quaternions in the Wigner space as real valued functions.
The correspondence between $h=(r,\vec{v})$ and $\mathrm{W}_h=r{W}_1+\mathrm{W}_{\vec{v}}$
is also implied by the rotational
covariance of a three-dimensional vector $\vec{v}$ and its Wigner representation $\mathrm{W}_{\vec{v}}$.
The respective quaternionic product in this case is detailed in the following
\begin{lemma}\label{quaternionlemma}
	Given two quaternions $h_1=(r_1,\vec{v}_1)$ and $h_2=(r_2,\vec{v}_2)$, their
	product $h_3=h_1 h_2$ is defined in Eq.~\eqref{quaternionproduct}. The corresponding 
	Wigner representations
	$\mathrm{W}_{h_1}= r_1 {W}_1 + \mathrm{W}_{\vec{v}_1}$,
	$\mathrm{W}_{h_2}= r_2 {W}_1 + \mathrm{W}_{\vec{v}_2}$, and
	$\mathrm{W}_{h_3}= r_3 {W}_1 + \mathrm{W}_{\vec{v}_3}$
	satisfy $r_3 =(r_1 r_2 - \langle \mathrm{W}_{\vec{v}_1} | \mathrm{W}_{\vec{v}_2} \rangle )$
	and 
	$\mathrm{W}_{\vec{v}_3}=r_1 \mathrm{W}_{\vec{v}_2} + r_2\mathrm{W}_{\vec{v}_1}
	- \tfrac{1}{2} \{ \mathrm{W}_{\vec{v}_1}, \mathrm{W}_{\vec{v}_2} \} $,
	where the scalar product $\langle \mathrm{W}_{\vec{v}_1} | \mathrm{W}_{\vec{v}_2} \rangle$ is defined in
	Eq.~\eqref{WignerScalarProd} and $ \{ \mathrm{W}_{\vec{v}_1}, \mathrm{W}_{\vec{v}_2} \}$ is the 
	Poisson bracket 
	as defined 
	in Eq.~\eqref{PBDef}.
\end{lemma}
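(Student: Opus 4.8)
The plan is to verify the two claimed identities componentwise. I would write $\vec{v}_k=(v_{kx},v_{ky},v_{kz})$ and expand $\mathrm{W}_{\vec{v}_k}=\sum_{\alpha\in\{x,y,z\}} v_{k\alpha}W_\alpha$ using the explicit basis functions $W_x,W_y,W_z$ from Eq.~\eqref{WignerQuaternionbases}, together with $W_1=1/\sqrt{2\pi}$. By the defining relation \eqref{quaternionproduct}, the product $h_3=h_1h_2$ has scalar part $r_3=r_1r_2-\vec{v}_1\cdot\vec{v}_2$ and vector part $\vec{v}_3=r_1\vec{v}_2+r_2\vec{v}_1+\vec{v}_1\times\vec{v}_2$; since $\vec{v}\mapsto\mathrm{W}_{\vec{v}}$ is linear, it suffices to establish the two facts $\langle\mathrm{W}_{\vec{v}_1}\,|\,\mathrm{W}_{\vec{v}_2}\rangle=\vec{v}_1\cdot\vec{v}_2$ and $-\tfrac{1}{2}\{\mathrm{W}_{\vec{v}_1},\mathrm{W}_{\vec{v}_2}\}=\mathrm{W}_{\vec{v}_1\times\vec{v}_2}$, and then to invoke bilinearity of the scalar product \eqref{WignerScalarProd} and of the Poisson bracket \eqref{PBDef}.

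For the first fact, I would note that $W_1,W_x,W_y,W_z$ are, up to the normalizations fixed just before the lemma, linear combinations of the rank-zero and rank-one spherical harmonics, so that the orthonormality relation \eqref{OrtRelat} together with \eqref{WignerScalarProd} gives $\langle W_\alpha\,|\,W_\beta\rangle=\delta_{\alpha\beta}$ for $\alpha,\beta\in\{x,y,z\}$ and $\langle W_1\,|\,W_\alpha\rangle=0$. Expanding then yields $\langle\mathrm{W}_{\vec{v}_1}\,|\,\mathrm{W}_{\vec{v}_2}\rangle=\sum_{\alpha,\beta}v_{1\alpha}v_{2\beta}\langle W_\alpha\,|\,W_\beta\rangle=\sum_\alpha v_{1\alpha}v_{2\alpha}=\vec{v}_1\cdot\vec{v}_2$, which combined with \eqref{quaternionproduct} gives $r_3=r_1r_2-\langle\mathrm{W}_{\vec{v}_1}\,|\,\mathrm{W}_{\vec{v}_2}\rangle$, the first assertion of the lemma.

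For the second fact, the central step is the bracket algebra of the three basis functions. Writing $W_x=\Y_{1,-1}-\Y_{11}$, $W_y=i(\Y_{1,-1}+\Y_{11})$, $W_z=\sqrt{2}\,\Y_{10}$ and applying the correspondence \eqref{CommutatorPBCorrespondance} (or, equivalently, evaluating \eqref{PBDef} directly, e.g.\ $\{W_x,W_y\}=-4R\cos\theta=-2W_z$) gives $\{W_\alpha,W_\beta\}=-2\sum_\gamma\varepsilon_{\alpha\beta\gamma}W_\gamma$, i.e.\ the angular-momentum algebra up to the overall constant $-2$. Hence $-\tfrac{1}{2}\{\mathrm{W}_{\vec{v}_1},\mathrm{W}_{\vec{v}_2}\}=\sum_{\alpha,\beta,\gamma}v_{1\alpha}v_{2\beta}\varepsilon_{\alpha\beta\gamma}W_\gamma=\sum_\gamma(\vec{v}_1\times\vec{v}_2)_\gamma W_\gamma=\mathrm{W}_{\vec{v}_1\times\vec{v}_2}$. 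Since the Poisson bracket annihilates the constant $W_1$, one concludes $\mathrm{W}_{\vec{v}_3}=r_1\mathrm{W}_{\vec{v}_2}+r_2\mathrm{W}_{\vec{v}_1}+\mathrm{W}_{\vec{v}_1\times\vec{v}_2}=r_1\mathrm{W}_{\vec{v}_2}+r_2\mathrm{W}_{\vec{v}_1}-\tfrac{1}{2}\{\mathrm{W}_{\vec{v}_1},\mathrm{W}_{\vec{v}_2}\}$, which is the second assertion.

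The only genuinely delicate point is pinning down the constant $-2$ in $\{W_\alpha,W_\beta\}=-2\varepsilon_{\alpha\beta\gamma}W_\gamma$: this requires consistently tracking the normalization $R=\sqrt{3/(8\pi)}$ in \eqref{PBDef}, the factor $\sqrt{2}$ relating $I_z$ to $\T_{10}$, and the factor $2$ relating the Pauli matrix $\sigma_\alpha$ to $I_\alpha$. Everything else is routine bookkeeping. As an alternative derivation of this bracket identity one could extract the antisymmetric part of the single-spin-$1/2$ star product of Sec.~\ref{singlespinstar}, applying $\wigner(A)\star\wigner(B)=\wigner(AB)$ to $A=\vec{v}_1\cdot\vec\sigma$ and $B=\vec{v}_2\cdot\vec\sigma$ together with $(\vec{v}_1\cdot\vec\sigma)(\vec{v}_2\cdot\vec\sigma)=(\vec{v}_1\cdot\vec{v}_2)\unity_2+i(\vec{v}_1\times\vec{v}_2)\cdot\vec\sigma$; however, the direct componentwise computation above is shorter and self-contained.
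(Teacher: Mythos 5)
Your proof is correct and follows essentially the same route as the paper's: it reduces both identities to the orthonormality relation $\langle W_a|W_b\rangle=\delta_{ab}$ for the basis elements and to the Poisson-bracket algebra $-\tfrac{1}{2}\{W_\alpha,W_\beta\}=\sum_\gamma\epsilon_{\alpha\beta\gamma}W_\gamma$, then concludes by bilinearity. The only difference is that you explicitly verify the constant (e.g.\ $\{W_x,W_y\}=-2W_z$), which the paper leaves as an implicit check; your computation of that constant is right.
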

\begin{proof} 
	The equation
	$\langle \mathrm{W}_{\vec{v}_1} | \mathrm{W}_{\vec{v}_2} \rangle  = \vec{v}_1 \cdot \vec{v}_2=v_xv_x+v_yv_y+v_zv_z$ 
	is implied by the orthonormality of basis elements, i.e., $\langle {W}_a | {W}_b \rangle=\delta_{ab}$ holds
	for $a,b \in \{1,x,y,z\}$.
	Finally, the Wigner representation $\mathrm{W}_{\vec{v}_1 \times \vec{v}_2}$ of the cross product
	is equal to $- \{ \mathrm{W}_{\vec{v}_1}, \mathrm{W}_{\vec{v}_2} \}/2$, which
	can be verified by computing Poisson brackets 
	$-\{ {W}_\alpha, {W}_\beta \}/2=\sum_\gamma \epsilon_{\alpha \beta \gamma} {W}_\gamma$
	of basis elements. Here,  
	$\epsilon_{\alpha \beta \gamma}$ denotes the fully antisymmetric Levi-Civita symbol.
\end{proof}

\subsubsection{Relation to the star product \label{quaternionstarprod}}
We will now explain how the star product defined in Result~\ref{result2}
can be
decomposed into a sum of different terms which all relate to a simple multiplication, scalar product, or
cross product  originating from the quaternionic product in Eq.~\eqref{quaternionproduct}.

Consider the matrix representation  $s_1=r_1 \unity_2 - i \vec{v}_1 \cdot \vec{\sigma}$,
and $s_2=r_2 \unity_2 - i \vec{v}_2 \cdot \vec{\sigma}$ 
of two quaternions
with $\vec{\sigma}:=(\sigma_x,\sigma_y,\sigma_z)$ (see Sec.~\ref{matrixquaternion}).
The Wigner functions can be written as
$\mathsf{W}_{s_1}=r_1 {W}_1-i\mathrm{W}_{\vec{v}_1}$ and $\mathsf{W}_{s_2}=r_2 {W}_1-i\mathrm{W}_{\vec{v}_2}$.
The Wigner representation of three-dimensional vectors was discussed in Sec.~\ref{vectquaternion}
and it was shown in Sec.~\ref{matrixquaternion} that the 
quaternionic product corresponds to the star product 
[see Eq.~\eqref{StarStatement}]
\begin{equation}
\mathsf{W}_{s_1} \star \mathsf{W}_{s_2} = \label{starprodquateernion}
\sqrt{2 \pi} \Proj \mathsf{W}_{s_1} \mathsf{W}_{s_2}
{-}
\tfrac{i}{2} \Proj \{ \mathsf{W}_{s_1}  , \mathsf{W}_{s_2} \},
\end{equation}
which
consists of a sum of two terms.
The second term in Eq.~\eqref{starprodquateernion}
is proportional to a Poisson bracket and equals 
\begin{equation*}
- \tfrac{i}{2} \Proj \{ \mathsf{W}_{s_1}  , \mathsf{W}_{s_2} \}  
= \tfrac{i}{2} \{ \mathrm{W}_{\vec{v}_1}  , \mathrm{W}_{\vec{v}_2} \} 
= -i \mathrm{W}_{\vec{v}_1 \times \vec{v}_2};
\end{equation*}
this illustrates the connection to the cross product  $\vec{v}_1 \times \vec{v}_2$
corresponding to the vectorial parts of two quaternions.

The first term in Eq.~\eqref{starprodquateernion} is a projected
pointwise product of Wigner functions 
and can be computed as
\begin{equation}
\Proj \sqrt{2 \pi} ( \mathsf{W}_{s_1}  \mathsf{W}_{s_2} )
=r_1 r_2 {W}_1 + r_1 (-i\mathrm{W}_{\vec{v}_2}) + r_2 (-i\mathrm{W}_{\vec{v}_1})
- \sqrt{2 \pi}  \Proj  (\mathrm{W}_{\vec{v}_1} \mathrm{W}_{\vec{v}_2}) \label{first_term}
\end{equation}
by applying the definitions of
$\mathsf{W}_{s_1}$ and $ \mathsf{W}_{s_2}$.
The last term in Eq.~\eqref{first_term}
is the projection of $\sqrt{2 \pi}  \Proj  (\mathrm{W}_{\vec{v}_1} \mathrm{W}_{\vec{v}_2})$ 
onto the set of
rank-zero spherical harmonics (as there are no rank-one contributions). 
Hence, we obtain
\begin{align}
&\sqrt{2 \pi}  \Proj  (\mathrm{W}_{\vec{v}_1} \mathrm{W}_{\vec{v}_2}) \nonumber
=  
\sqrt{2 \pi}  \Y_{00} \int_{\theta=0}^{\pi} \int_{\phi=0}^{2 \pi} \Y^*_{00} \mathrm{W}_{\vec{v}_1} \mathrm{W}_{\vec{v}_2} 
\\
& \times
\sin{\theta} \,\mathrm{d}\theta  \, \mathrm{d}\phi
=\langle \mathrm{W}_{\vec{v}_1} | \mathrm{W}_{\vec{v}_2} \rangle {W}_1 =  
(\vec{v}_1 \cdot \vec{v}_2) {W}_1.
\end{align}
Finally, substituting all  results back into Eq.~\eqref{starprodquateernion}, the star product can be written as
\begin{equation*}
\mathsf{W}_{s_1} {\star} \mathsf{W}_{s_2} = 
(r_1 r_2 {-} \vec{v}_1 \cdot \vec{v}_2)\: {W}_1 
-i[ r_1 \mathrm{W}_{\vec{v}_2} {+} r_2 \mathrm{W}_{\vec{v}_1} {+} \mathrm{W}_{\vec{v}_1 \times \vec{v}_2}].
\end{equation*}
Consequently, the star product of Wigner functions for a single spin $1/2$ 
(as given in Sec.~\ref{singlespinstar})
can be nicely described in geometrical terms related to quaternionic products.

\subsubsection{Implications}
The density operator of a spin-$1/2$ state can be represented in a geometrical fashion by mapping 
the Pauli vector $\vec{\sigma}$ onto $\vec{v} \in \mathbb{R}^3$ and $\unity_2$ onto $r \in \mathbb{R}$,
refer to Sec.~\ref{quaternionstarprod}.
The Wigner transformation of this geometrized
density operator is then given by ${W}_\rho = r_\rho {W}_1 + \mathrm{W}_{\vec{v}_\rho} $,
and also ${W}_\mathcal{H} = r_\mathcal{H}{W}_1 + \mathrm{W}_{\vec{v}_\mathcal{H}} $, 
refer to Sec.~\ref{vectquaternion}.
The star product of two such Wigner functions 
was determined in Sec.~\ref{singlespinstar} and
can be decomposed into
a sum of different terms, where each term from Eq.~\eqref{prestar_single} can be given a geometrical interpretation.
The Poisson bracket  
$\tfrac{i}{2} \{ \mathrm{W}_{\vec{v}_\rho}  , \mathrm{W}_{\vec{v}_\mathcal{H}} \} $ 
in Eq.~\eqref{prestar_single}
is the analogue
of the cross product of two vectors and results in $ -i \mathrm{W}_{\vec{v}_\rho \times \vec{v}_\mathcal{H}}$.
The projection of the product $ \sqrt{2 \pi}  \Proj  (\mathrm{W}_{\vec{v}_\rho} \mathrm{W}_{\vec{v}_\mathcal{H}})$
in Eq.~\eqref{prestar_single} results in the scalar product of two vectors 
$ (\vec{v}_\rho \cdot \vec{v}_\mathcal{H} ) {W}_1$. Consequently, the 
time evolution from Eq.~\eqref{NeumannWignerEvolution}
translates to $\partial \mathrm{W}_{\vec{v}_\rho} /\partial t = 2 \mathrm{W}_{\vec{v}_\rho \times \vec{v}_\mathcal{H}}$.
For a single spin $1/2$ in an external magnetic field $\vec{B}$ the Hamiltonian has the form
$\vec{v}_\mathcal{H} = \gamma \vec{B}/2$ with $\gamma$ being the gyromagnetic ratio.
The time evolution is therefore given by 
$\partial \mathrm{W}_{\vec{v}_\rho} /\partial t = \gamma \mathrm{W}_{\vec{v}_\rho \times \vec{B}}$, 
and is formally identical to the classical equation of motion,
refer also to Theorem 5 in \cite{VGB89}. But we emphasize that
non-hermitian states cannot be represented using a single quaternion, 
which can be very well achieved using a single Wigner function, 
see Figure~\ref{NonHermitian}.
Non-hermitian spin operators are discussed further in
Section~\ref{nonhermitianstatessection}.

\subsection{Evolution of non-hermitian states \label{nonhermitianstatessection}}

We consider the time evolution of non-hermitian states 
to highlight that Wigner functions offer a natural way to
represent also non-hermitian spin operators.
In Section~\ref{singlespinnonhermitian}, we discuss the time evolution of the
coherence state $I_{-}$ of a single spin, while Section~\ref{twospinnonheritiansection}
provides an alternative representation for a two-spin example from Section~\ref{twospinwignereval}
utilizing a decomposition of the Wigner function
of a hermitian operator
into non-hermitian parts.

\begin{figure}[t]
	\centering
	\includegraphics{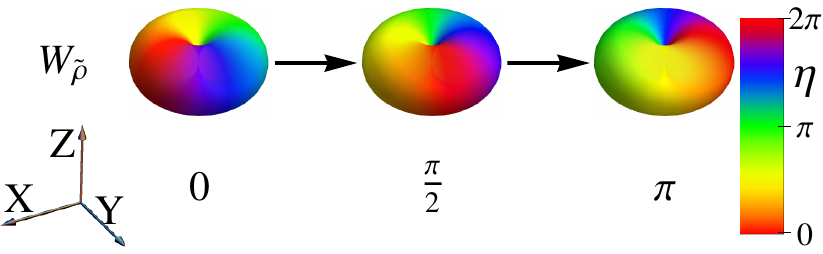}
	\caption{\label{NonHermitian}(Color online) Evolution of the non-hermitian state $\tilde{\rho}=I_{-}$ under the
		Hamiltonian $\mathcal{H}=\w I_z$. The Wigner function  $W_{\tilde{\rho}}(t)=\exp{(i\w t)} \Y_{1,-1}$
		is plotted at times $t=0$, $\w t = \pi /2$, and $\w t = \pi$. The rainbow colors 
		represent the complex phase factor $\exp{(i\eta)}$ of the Wigner function.
		The color changes clockwise from red (dark gray) at the phase value of zero via
		yellow to green (light gray) at the phase value of $\eta=\pi$ and continues
		via blue back to red at the phase value of $\eta=2\pi$, refer to the color bar.}
\end{figure}

\subsubsection{Evolution of a single spin \label{singlespinnonhermitian}}
We consider the evolution of non-hermitian single-spin states.
Starting from the traceless deviation matrix $\tilde{\rho}=I_{-}=\T_{1,-1}$, the corresponding Wigner function
is $W_{\tilde{\rho}}(\theta,\phi)=\Y_{1,-1}(\theta,\phi)$. The time evolution
is then determined by Eq.~\eqref{Equation_Motion} as 
$
\partial W_{\tilde{\rho}}(\theta, \phi, 0)  / \partial t
= \w 	\{  \Y_{1,-1},  \Y_{10}    \}   / \sqrt{2}
=i \w  \Y_{1,-1}
$
[see, e.g., Eq.~\eqref{CommutatorPBCorrespondance}].
The complex Wigner function $ W_{\tilde{\rho}}(t)=\exp{(i\w t)} \Y_{1,-1}$
picks up only a phase factor during the evolution.
Figure~\ref{NonHermitian} shows $W_{\tilde{\rho}}(t)$ at different times. 
The Wigner functions capture the rotational covariance of the coherence state $I_-$.
The colors yellow and blue correspond to the phase factors $i$ and $-i$, respectively.

\subsubsection{Two coupled spins \label{twospinnonheritiansection}}
The solution of the Wigner function in Eq.~\eqref{twospinexamplesolution}
can be rewritten in terms of spherical harmonics as
\begin{align*} 
W_{\rho}(t) &= \tfrac{1}{\sqrt{2}} c(t)   [ \Y_{1,-1}(\theta_1, \phi_1) 
{-}  \Y_{11}(\theta_1, \phi_1)  ] 
\Y_{00}(\theta_2, \phi_2)  \\
&+ \tfrac{i}{\sqrt{2}} s(t)   [ \Y_{1,-1}(\theta_1, \phi_1) 
{+} \Y_{11}(\theta_1, \phi_1)  ]  \Y_{10}(\theta_2, \phi_2).
\end{align*}
Rearranging the terms as $W_{\rho}(t)=W_A(t) + W_A^*(t)$ with
\begin{equation} 
W_A(t)=\tfrac{1}{\sqrt{2}}\Y_{1,-1}(\theta_1,\phi_1)  
[  c(t) \Y_{00}(\theta_2, \phi_2) +  i  s(t) Y_{10}(\theta_2,\phi_2)],  \label{reagrangement} 
\end{equation}
the time dependence can entirely be brought into the second spin,
see Fig.~\ref{TwoSpinNonHermitian}. The overall Wigner function is still hermitian,
even though it is represented as a sum of two not necessarily hermitian operators.

\begin{figure}[tb]
	\centering
	\includegraphics{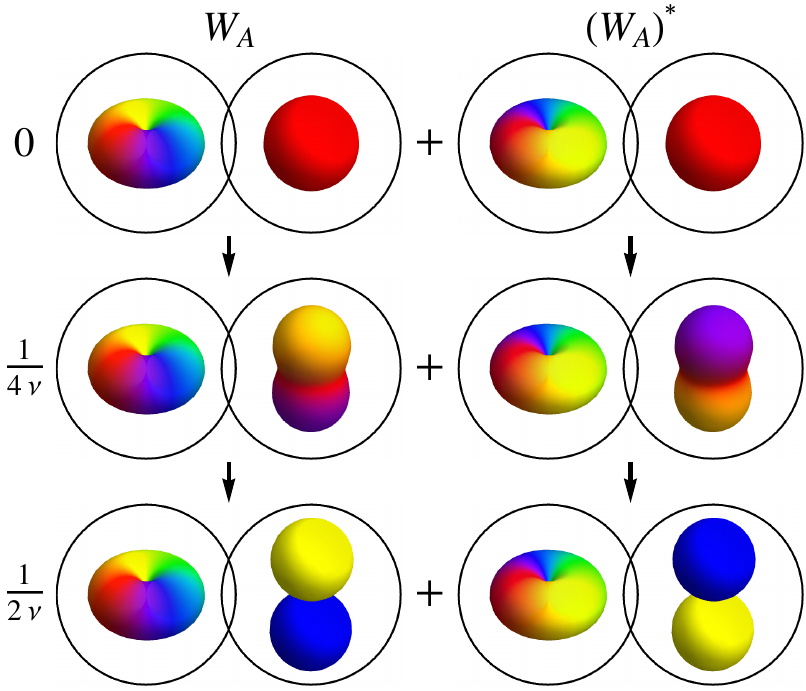}
	\caption{\label{TwoSpinNonHermitian} (Color online) 
		Alternative representation of the deviation density matrix
		$\rho(t) = \cos( \pi \JC t )  I_{1x} + \sin( \pi \JC t)   2 I_{1y} I_{2z}$ by the Wigner
		function in the form $W_{\rho}(t)=W_A(t) + W_A^*(t)$ as shown in Eq.~\eqref{reagrangement}.  
		The Wigner function $W_A^*(t)$ can be obtained by rotating the
		Wigner function $W_A(t)$ by $\pi$ around the $x$ axis.
		Only the spherical function of the second
		spin is time dependent. The second component of $W_A(t)$ corresponds to
		$\cos{( \pi \JC t ) } \Y_{00} +  i  \sin{( \pi \JC t ) } Y_{10}(\theta_2,\phi_2)$, and the first one to 
		$\Y_{1,-1}(\theta_1,\phi_1)$.
	}
\end{figure}

\section{Conclusion\label{conclusion}}
We presented a general approach for representing arbitrary coupled spin operators in the form of spherical functions,
which we propose as an extension and unification of Wigner function formalisms for single spins.
In particular, we solved the open question of how to compute the time evolution 
of coupled spins
in a consistent Wigner frame.
Our approach
gives also rise to the possibility of visualizing spin operators in terms of a linear combination
of spherical harmonics.

A Wigner function is formally a quasi-probability distribution 
as negative values appear for certain operators.
The negativity of the Wigner function might be interpreted as a signature
of quantumness \cite{Ferrie11}.
However, the significance of the negativity of Wigner functions as an indicator of quantumness
is still debated 
in the literature,
even in the infinite-dimensional case \cite{Johansen97,BW99,BW98,KZ04,RMMJ05,DMWS06,Spekkens08,MKC09,KMMR09,WB12,MO14}.
In the finite-dimensional case of a spherical phase space, oscillating fringes and interference patterns
have been interpreted as quantum signatures \cite{DowlingAgarwalSchleich,Harland,Agarwal81,BC99,APS97};
we refer to \cite{Ferrie11,FE08,FE09} for a more systematic approach.
Recently, the negativity question has also been examined in the context of discrete Wigner functions
\cite{howard2014}.
Potential implications
from the discussion of the negativity of Wigner functions
do not impact our approach to describe the time evolution
of coupled spins systems 
using Wigner functions
and go well beyond the scope and intent of the current work.

We were especially interested in how the time evolution of a coupled spin system can be predicted if 
only the Wigner representations of
both the Hamiltonian and the density operator are available. We introduced
a general method for computing the time evolution of arbitrary coupled spin-$1/2$ systems 
while operating directly on their Wigner functions. This method
is based on a generalization of the Poisson bracket, 
which consists of partial derivatives for both the Hamiltonian and the density operator.
Hence, we provide an interpretation for 
how the time evolution in the Wigner space is governed.
We focused in this work on the
non-trivial case of coupled spins $1/2$, while 
the generalization to arbitrary spin numbers $J$ will be considered elsewhere.

In order to describe the time evolution of coupled spins $1/2$, the star product of two spherical
functions was discussed and developed in detail, and its properties were studied.
Simplified formulas for the time evolution
have been given for
up to three coupled spins $1/2$.
Multiple examples were analyzed and visualized to convey 
important features of our approach
and to stress the operator decomposition into sums of product operators.
We also discussed how the Wigner representation
of spins is related to the canonical angular momentum and quaternions.
Moreover, its relation to the classical,
infinite-dimensional Wigner representation was investigated.

There are different possibilities for mapping spin operators onto visualizable objects
like vectors or spherical functions.
We focused on the Wigner-function technique generalizing spherical functions 
and applied these  to coupled spins. This
phase-space representation transforms naturally under arbitrary rotations of the individual spins.
Similar visualization approaches such as the DROPS representation of \cite{DROPS} might have advantages
for certain applications, and we hope to extend our method for modeling the time evolution
in the Wigner representation to these approaches in the future. And interest in these Wigner methods
in applications can also appreciated from recent related work 
\cite{tilma2016,CBJ15}.

More broadly, we have made theoretical concepts
usually established in the context of Wigner functions more palpable
by visualizing their effects using three-dimensional  illustrations.
This provides a convenient tool for analyzing the time evolutions
of finite-dimensional coupled quantum systems 
using Wigner functions on a continuous phase space
and 
facilitates the adoption of the Wigner formalism 
for coupled spin systems.

\section*{Acknowledgments}
	B.K.\ acknowledges financial support from the scholarship program of the 
	Bavarian Academic Center for Central, Eastern and Southeastern Europe (BAYHOST).
	R.Z. and S.J.G. acknowledge support from the Deutsche
	Forschungsgemeinschaft (DFG) through Grant No.\ Gl 203/7-2.

\appendix

\section{Tensor operators, embedded operators, and normalization factors \label{EmbeddedOperatorsExplanation}}
Here, we provide a short tutorial on tensor operators, embedded operators,
and normalization factors
as far as they are used in Secs.~\ref{Examplessection}, \ref{theorysection}, and \ref{advancedexamples}.
The four single-spin tensor operators
$\T_{00}=\unity/\sqrt{2}$, $\T_{1,-1}$, $\T_{10}$, and $\T_{11}$
from Eq.~\eqref{TensorOperatorMatricesALL}
span the space of all $2\times 2$ matrices, and they conform to
the defining relation in Eq.~\eqref{TOpDefALL}.
Any $2\times 2$ matrix $A$ can be expanded 
in terms of the tensor operators
$\T_{jm}$ as 
$A=\sum_{j\in\{0,1\}} \sum_{-j \leq m\leq j} \tr( \T_{jm}^\dagger A)\, \T_{jm}$.
This follows since tensor operators are chosen as orthonormal, i.e.,
the product $\T_{jm}^\dagger \T_{j'm'}$ of any two of them
has trace one for identical operators with $j=j'$ and $m=m'$, and trace zero
otherwise. However,
Cartesian product operators, such as $I_x$ are only orthogonal, and 
not normalized [e.g.\ $\tr(I_x^\dagger I_x)=1/2$].

Operators for a two-spin system can be decomposed into sums of 
tensor products $\T_{j_1m_1} \otimes \T_{j_2m_2}$
of two $2\times 2$ matrices. There are sixteen linearly independent	
tensor product operators of this form, providing
an orthonormal basis of $4 \times 4$ matrices (cf.\ Lemma~\ref{basis}).
The tensor product operators can be divided into four subsets
depending on which spins they act on: The normalized
identity operator $\T_{00} \otimes \T_{00}$ acts on no spin at all.
The normalized linear operators $\T_{1m} \otimes \T_{00}$ with $m \in \{-1,0,1\}$ act on the first spin,
and $\T_{00} \otimes \T_{1m}$ acts on the second one. 
Finally, the normalized bilinear 
operators $\T_{1m_1} \otimes \T_{1m_2}$ act on both spins ($m_1,m_2 \in \{-1,0,1\}$).

\begin{table}[tb] 
	\centering
	\caption{\label{ProdOpSimplif} Cartesian product operators in a coupled two-spin system.
		Any linear combination of operators in the same column or row
		remains a product operator.}
	\begin{tabular}{@{\hspace{6mm}}l@{\hspace{16mm}}l@{\hspace{16mm}}l@{\hspace{16mm}}l@{\hspace{6mm}} }
		\\[-2mm]
		\hline\hline
		\\[-2mm]
		$\unity_{4}$  &   $ I_{1x}$ &   $I_{1y}$ &  $I_{1z}$  \\[1mm]
		$I_{2x}$     &    $ I_{1x}I_{2x}$ &      $I_{1y}I_{2x}$  &  $I_{1z}I_{2x}$\\[1mm] 
		$I_{2y}$     &    $ I_{1x}I_{2y}$ &      $I_{1y}I_{2y}$  &  $I_{1z}I_{2y}$\\[1mm] 
		$I_{2z}$     &    $ I_{1x}I_{2z}$ &      $I_{1y}I_{2z}$  &  $I_{1z}I_{2z}$
		\\[1mm] \hline \hline
	\end{tabular} 
\end{table}

Similarly, one can also introduce embedded operators 
$\Ta_{1m}:=\T_{1m} \otimes \T_{00}$ and 
$\Tb_{1m}:=\T_{00} \otimes \T_{1m}$ which 
are single-spin operators embedded into
$4 \times 4$ matrices (cf.\ Lemma~\ref{single-spin}).
These embedded operators enable a description without
(explicit) tensor products. This implies that 
a bilinear operator $\T_{j_1m_1} \otimes \T_{j_2m_2}$ 
can written as a matrix product of embedded operators
with an additional normalization factor.
For example, the tensor product
\begin{equation*}
B=\T_{10} \otimes \T_{11} = 
\begin{pmatrix}
\tfrac{1}{\sqrt{2}} & 0\\
0 & \tfrac{-1}{\sqrt{2}}
\end{pmatrix}
\otimes
\begin{pmatrix}
0 & -1\\
0 & 0
\end{pmatrix}
=
\begin{pmatrix}
0 & \tfrac{-1}{\sqrt{2}} & 0 & 0 \\
0 & 0 & 0 & 0 \\
0 & 0 & 0 & \tfrac{1}{\sqrt{2}} \\
0 & 0 & 0 & 0 
\end{pmatrix}
\end{equation*}
is normalized. Using the normalized 
single-spin operators
\begin{equation*}
\Ta_{10}=
\begin{pmatrix}
\tfrac{1}{2} & 0 & 0 & 0 \\
0 & \tfrac{1}{2} & 0 & 0 \\
0 & 0 & \tfrac{-1}{2} & 0 \\
0 & 0 & 0 & \tfrac{-1}{2} 
\end{pmatrix}  \,\text{ and }\,
\Tb_{11}=
\begin{pmatrix}
0 & \tfrac{-1}{\sqrt{2}} & 0 & 0 \\
0 & 0 & 0 & 0 \\
0 & 0 & 0 & \tfrac{-1}{\sqrt{2}} \\
0 & 0 & 0 & 0 
\end{pmatrix},
\end{equation*}
we can form the matrix product
\begin{equation*}
\Ta_{10} \Tb_{11}=
\begin{pmatrix}
0 & \tfrac{-1}{2\sqrt{2}} & 0 & 0 \\
0 & 0 & 0 & 0 \\
0 & 0 & 0 & \tfrac{1}{2\sqrt{2}} \\
0 & 0 & 0 & 0 
\end{pmatrix}
=B/2,
\end{equation*}
which differs from
$B=\T_{10} \otimes \T_{11}$ by a factor of $2$; 
$B/2$ has a squared norm of $\tr(B^\dagger B)/4=1/4$.
The normalization factors for general spin systems are given in 
Table~\ref{tensordef}.
Products of embedded single-spin operators commute if they act on different spins,
e.g. $\Ta_{10} \Tb_{11}=\Tb_{11} \Ta_{10}$.

\section{Visualization of Wigner functions \label{visualizationExplain}}

In this appendix, we shortly discuss certain 
properties of the visualization of Wigner functions
introduced in Sec.~\ref{Examplessection}.
The Wigner functions of product operators 
in a two-spin system
are in general of the form
$\lambda f(\theta_1,\phi_1)g(\theta_2,\phi_2)$ with  $\lambda\in \C$.
This is visualized as two overlapping circles to reflect the product nature of the spherical functions.
The first circle contains the Wigner function $\sqrt{\lambda}f(\theta_1,\phi_1)$, and 
the second one contains $\sqrt{\lambda} g(\theta_2,\phi_2)$.
The prefactor $\lambda$ is distributed equally, but different choices are possible,
and the size of the visualized
objects can vary assuming that $\lambda$ stays constant.

If the desired operator is a linear combination of product operators, it is represented as a sum of 
product operators in the PROPS representation.
However the decomposition of an operator into sums of product operators is not unique, consequently there are different
visualizations possible for the same operator. In a system of $N$ spins 1/2 there are $4^N$ independent basis operators,
and these can be combined in order to minimize the number of product operators in a sum decomposition.
We provide a {\sc mathematica} package for computing with examples of Wigner functions
of spin operators, their Poisson brackets, and 
the time evolution \cite{supplemental}.

For a two-spin system there are sixteen independent product operators, but any operator can be decomposed
into a sum of at most four product operators. This is verified using Table~\ref{ProdOpSimplif}
where basis operators are grouped such that
any linear combination within the same row or column results again in a product operator,
for example
\begin{equation}
aI_{2z} + b I_{1x}I_{2z} + c I_{1y}I_{2z} + d I_{1z}I_{2z}  = (a \, \unity + b I_{1x} + c I_{1y} + d I_{1z})I_{2z}.
\end{equation}

\section{Stratonovich postulates \label{stratpostulates}}
We summarize the postulates on which the continuous Wigner representation
for spins relies. Starting from the original set of postulates defined by Stratonovich \cite{stratonovich},
which are then
generalized to the case of coupled spins.

\subsection{Stratonovich postulates for a single spin \label{stratpostulatessingle}}
The Wigner representation for spins is a generalization of the flat phase-space representation
$(p,q)$ of quantum mechanics to the phase-space representation $(R\cos\theta,\phi)$ over the sphere.
In the generalized Wigner representation, spin operators are mapped onto spherical functions.
This mapping is based on the Stratonovich postulates \cite{stratonovich}, providing four fundamental
requirements for pairs of operators and their corresponding Wigner functions. Let us denote the
Wigner function of an arbitrary spin operator $A$ as $W_A$, then the postulates are given by
\begin{align*}
\text{(i)} \quad & \text{linearity:}  \quad \text{ $A \rightarrow W_A$ is one-to-one}, \\
\text{(ii)} \quad & \text{reality:} \quad  \text{if $B=A^\dagger$, then $W_B=W_A^*$},  \\
\text{(iiia)} \quad & \text{normalization:} \quad  \tr(\unity A) = \int_{S^2} W_{\unity} W_A  \mathrm{d} 
\Omega,  \\
\text{(iiib)} \quad & \text{traciality:} \quad  \tr( B^\dagger A ) = \int_{S^2} W_B^* W_{A} \mathrm{d} 
\Omega,  \\
\text{(iv)} \quad & \text{covariance:} \quad  
\text{$W_{\mathcal{R}(A)}(\Omega)=W_A(\mathcal{R}^{-1} \Omega)$},
\end{align*}
where $\mathrm{d} \Omega$ denotes the surface element 
$\sin\theta \mathrm{d}\theta \mathrm{d}  \phi$ in the spherical phase space.
A consequence of the postulate (ii) is that Wigner functions of a hermitian operator are real functions.
The postulates (iiia) and (iiib) can be reinterpreted by introducing the scalar products 
$\langle B|A\rangle=\tr(B^\dagger A)$ for matrices and 
$\langle W_B|W_A\rangle=\int_{S^2} W_B^* W_{A} \mathrm{d} \Omega$
for Wigner functions (by integrating over the unit sphere).
Then, the postulates (iiia) and (iiib) can be summarized as $\langle B|A\rangle=\langle W_B|W_A\rangle$.
The postulate (iv) determines the Wigner function of an operator $A$
if $A$ is rotated by a unitary conjugation resulting in $\mathcal{R}(A)$,
then the arguments of the Wigner function (or equivalently the sphere) are 
rotated inversely by the same rotation.

\subsection{Generalization of the Stratonovich postulates \label{stratpostulatesmulti}}

We extend the Wigner formalism to multiple, coupled spins.
The postulates  (iii) and (iv) are generalized in order to interpret 
the Stratonovich postulates for the Wigner representation of $N$ coupled spins 
as mapping operators onto sums of function on $N$ spheres.
The condition in postulate (iii) is extended to an integral 
$\int_{S_1^2} \dots \int_{S_N^2} W_B^* W_{A} \mathrm{d} \Omega_1 \dots \mathrm{d} \Omega_N$
over $N$ spheres.
For the PROPS representation, Postulate (iv) is generalized by interpreting $\mathcal{R}$ as an 
arbitrary rotation in the form of a product of
local rotations as $\mathcal{R}=\prod_k^N \mathcal{R}_k$. 
An arbitrary local rotation $\mathcal{R}_k$ acting on spin $k$ 
is described in the Wigner space as the inverse rotation of the $k${th} sphere
$W_A(\Omega_1, \dots,  \mathcal{R}_k^{-1} \Omega_k, \dots \Omega_N )$ of 
the corresponding Wigner representation. The mapping of spin operators 
onto spherical functions described in \cite{DROPS} satisfies the generalized Stratonovich postulates under simultaneous 
rotations $\mathcal{R}$ built from equal rotations $\mathcal{R}_k$ of all spheres.
Consequently, the current approach generalizes the rotational covariance criterion in \cite{DROPS} 
to arbitrary local rotations.
Finally, the generalized postulates are
\begin{align*}
\text{(i)} \quad & \text{linearity:}  \quad \text{ $A \rightarrow W_A$ is one-to-one}, \\
\text{(ii)} \quad & \text{reality:} \quad  \text{if $B=A^\dagger$, then $W_B=W_A^*$},  \\
\text{(iiia)} \quad & \text{normalization:} \:  \tr(\unity A) = \idotsint\displaylimits_{S_1^2\dots S_N^2} W_{\unity} W_A 
\prod_k^N \mathrm{d}  \Omega_k,  \\
\text{(iiib)} \quad & \text{traciality:} \quad  \tr(B^\dagger A) = \idotsint\displaylimits_{S_1^2\dots S_N^2}  W_B^* W_{A} 
\prod_k^N \mathrm{d}  \Omega_k, \\
\text{(iv)} \quad & \text{covariance:} \quad  
W_{\mathcal{R}(A)}(\Omega_1,\ldots,\Omega_N)
=W_A(\mathcal{R}_1^{-1} \Omega_1,\ldots,\mathcal{R}_N^{-1}\Omega_N).  
\end{align*}

\begin{figure}[ht!]
	\centering
	\includegraphics{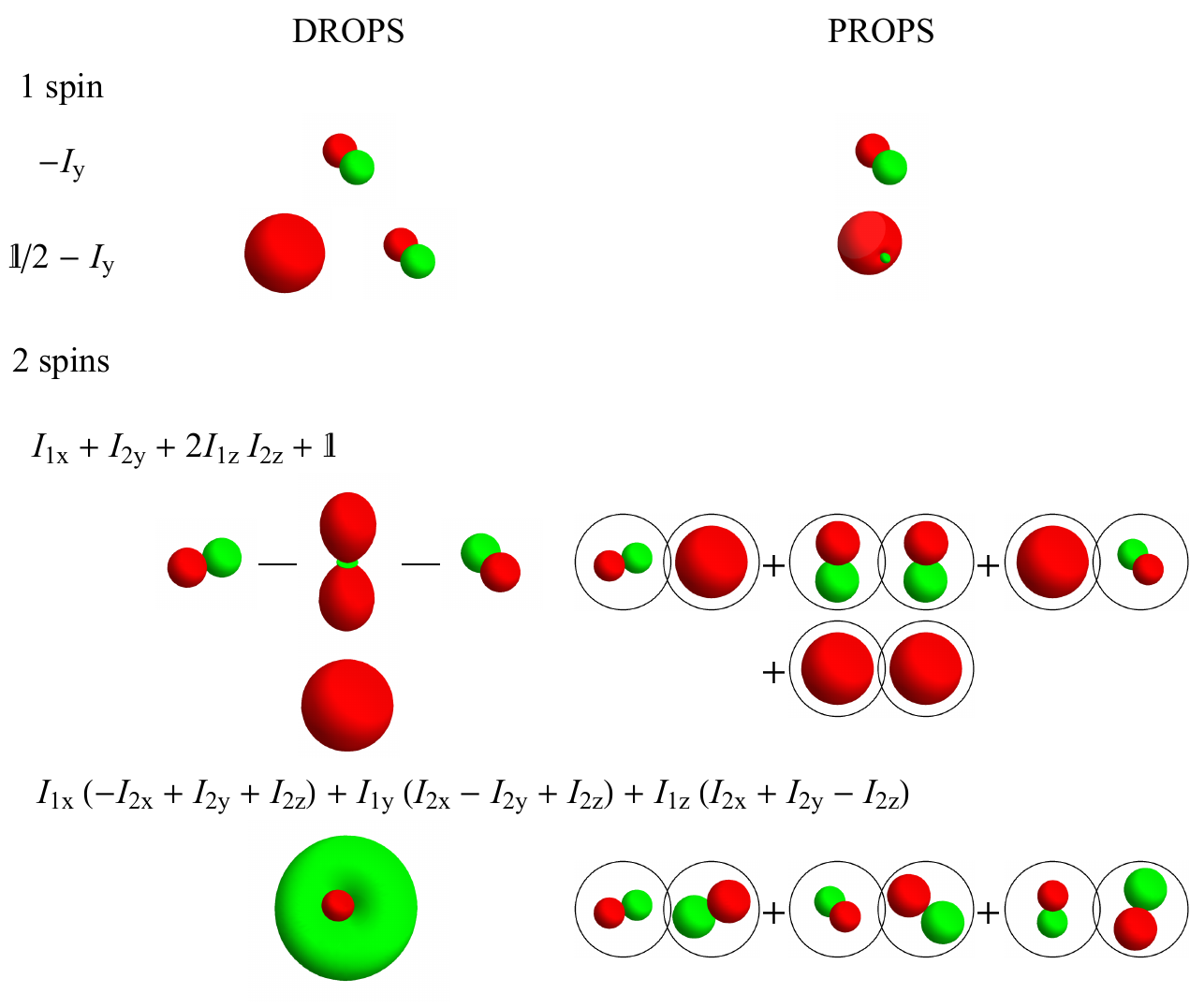}
	\caption{\label{DROPS_PROPS} (Color online) The
		DROPS representation using the LISA basis is compared to the PROPS representations. 
		For a single spin $1/2$, the DROPS representation consists in general of
		two spherical function where the first one corresponds to the identity part $\unity_2$ which acts on no spin at all
		and the second one corresponds in this example to the operator $-I_y$. Both contributions
		are combined in the PROPS representation.        
		The DROPS representation 
		of two coupled spins $1/2$
		consists in general of four spherical functions: the identity part
		$\unity_4$, linear operators such as $I_{2x}$ and $I_{2y}$ acting on the first  and  second  spin, as well as
		bilinear operators such as $2I_{1z}I_{2z}$ acting on both spins. 
		The number of product operators in the PROPS representation
		can be in certain cases reduced by combining some of its components, e.g., 
		$I_{1x}+I_{2y}+2I_{1z}I_{2z}+\unity_4$ equals to 
		$(I_x{+}\unity_2/2)\otimes \unity_2 +  \unity_2 \otimes
		(I_y{+}\unity_2/2) + 2I_{1z}I_{2z}$.\footnoteref{\thecolors}
	}
\end{figure}

\section{Comparison to the DROPS representation \label{dropsappendix}}

We compare our visualization technique with the
so-called DROPS representation from \cite{DROPS} both of which represent operators of  
coupled spin systems in the form of a collection of spherical functions.
A particular case of the DROPS representation is given by the LISA basis \cite{DROPS}
which is spanned  by
a set of tensor operators $T^{(\ell)}_{jm}$ with $\ell \in L$.
Each element $T^{(\ell)}_{jm}$ is mapped to a spherical harmonic 
$\Y_{jm}$ and
transforms naturally 
under simultaneous rotations of all spins.
The DROPS representation using the LISA basis
provides a compact visualization of coupled spin operators 
by plotting $|L|$  spherical functions. 
Figure~\ref{DROPS_PROPS} 
contains a comparison for spin operators of one and two spins.

\begin{figure}[ht!]
	\centering
	\includegraphics{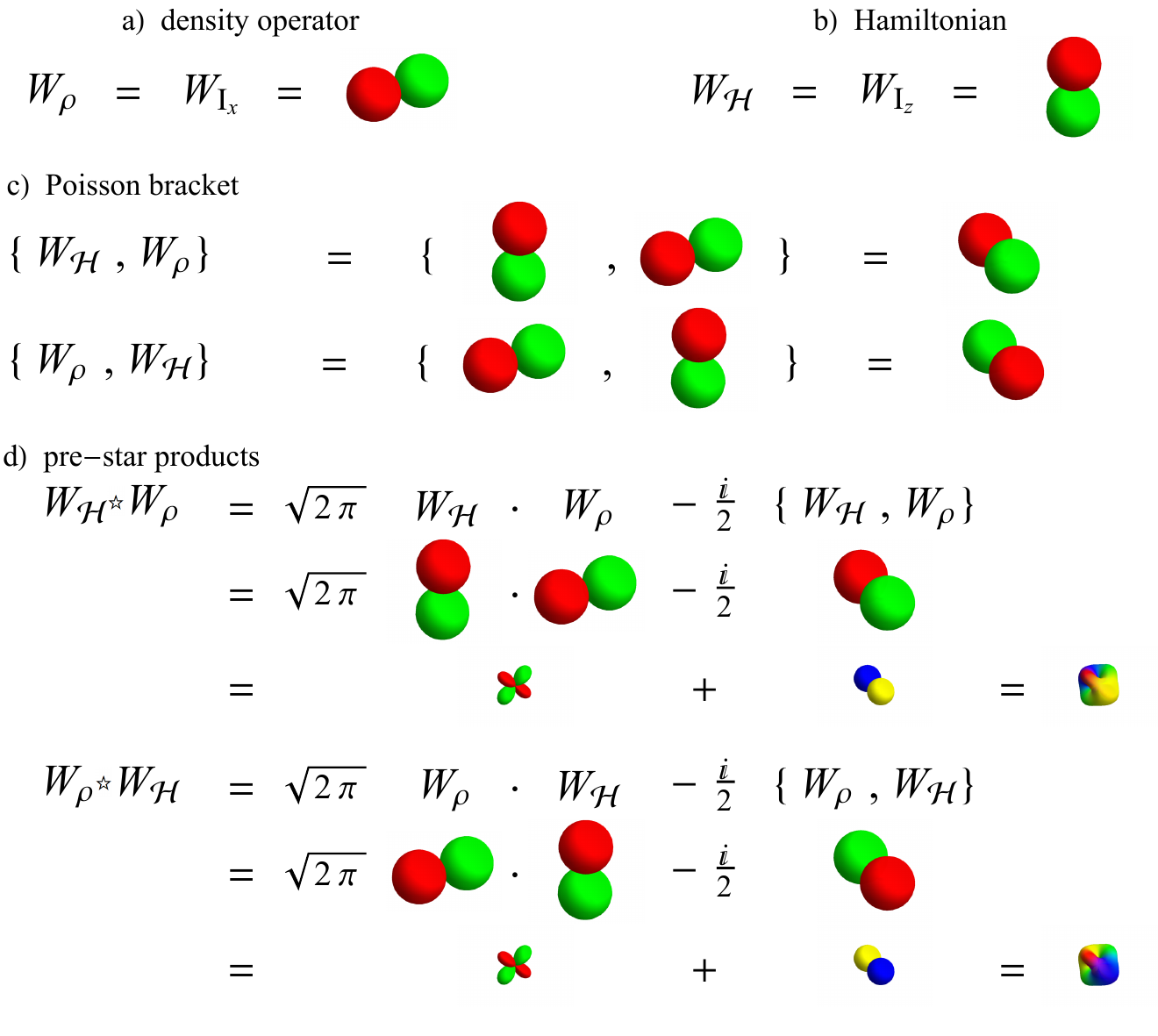}
	\caption{(Color online)
		Graphical visualization of the equation of motion for a 
		single spin $1/2$.
		a) and b) depict 
		the Hamiltonian and deviation density matrix from 
		Eq.~\eqref{InitSingleSpin}.
		Individual steps illustrate how the equation of motion for a single spin $1/2$ in
		Figure~\ref{GraphicalStarProduct2} g)
		[refer to Eq.~\eqref{Equation_Motion}] is derived by specifying c) the Poisson brackets
		(c.f. Fig~\ref{graphicalPB}), 
		d) and the pre-star products of the Wigner
		functions $W_\mathcal{H}$ and $W_\rho$.
		Refer also to Figure~\ref{GraphicalStarProduct2}.\footnoteref{\thecolors}
		\label{GraphicalStarProduct}}
\end{figure}

\section{Time evolution of a single spin~\texorpdfstring{$1/2$}{1/2} \label{AppendixStarProd}}

\begin{figure}[t]
	\centering
	\includegraphics{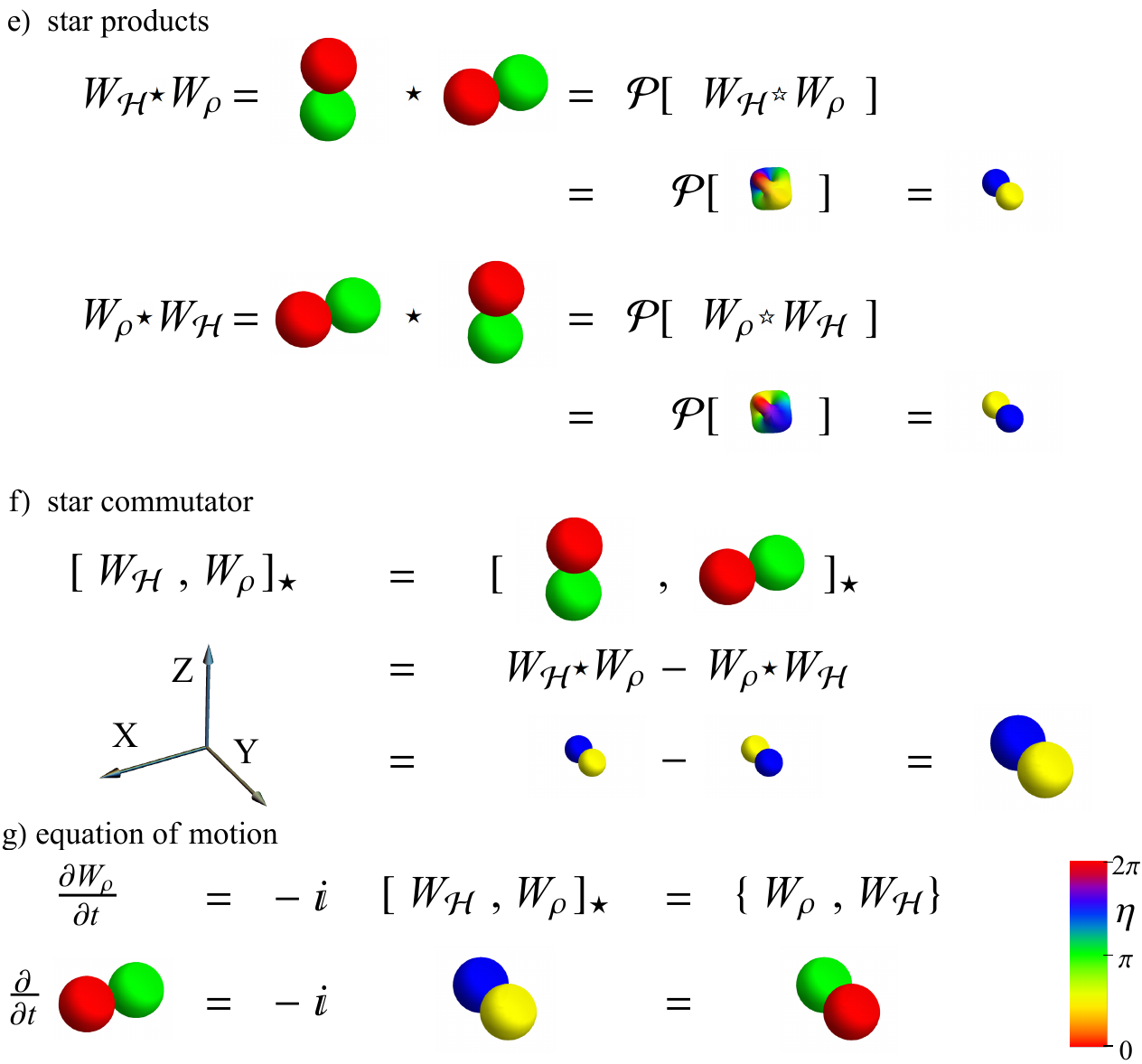}
	\caption{(Color online) Continuation of Figure~\ref{GraphicalStarProduct}.
		e) star products, f) the star commutator, and g) the equation of motions for the Wigner
		functions $W_\mathcal{H}$ and $W_\rho$.
		Refer to the main text for further details.\footnoteref{\thecolors}
		\label{GraphicalStarProduct2}
	}
\end{figure}

We further detail the derivation of the equation of motion for a single
spin $1/2$ which had been deferred from Sec.~\ref{Ex_one_Wigner} to this appendix.
The Figures~\ref{GraphicalStarProduct}-\ref{GraphicalStarProduct2} illustrate the
individual steps required to derive the time evolution of the Wigner function for
a single spin $1/2$: Subfigures a) and b) contain graphical representation 
of the Hamiltonian and density operator from
Eq.~\eqref{InitSingleSpin}. In Subfigure c), the Poisson bracket
of $W_\mathcal{H}$ and $W_\rho$ is computed following Fig.~\ref{graphicalPB};
the Poisson bracket is antisymmetric with respect to the order of its two arguments.
The pre-star product is decomposed in Subfigure d)
into a sum of the pointwise product and the
Poisson bracket of $W_\mathcal{H}$ and $W_\rho$, and the summands are weighted by the prefactors 
$\sqrt{2\pi}$ and $-i/2$, respectively [see Eq.~\eqref{prestar_single}].
In Subfigure e), the star product, which is the analog of the matrix product
of the two Wigner functions $W_\mathcal{H}$ and $W_\rho$
(see Sec.~\ref{subsecStarProdDef}), is obtained by projecting the
pre-star product onto spherical harmonics of rank-zero and one [refer to Eq.~\eqref{StarStatement}].
Subfigure 
f) shows the star commutator as the star analog of the matrix commutator
(see Sec.~\ref{subsecStarProdDef}).
Finally, 
Subfigure g)
determines 
the time evolution of the
density operator by specifying its time derivative.
The result in this particular case conforms with the general argument 
that the time evolution is given by the Poisson bracket of $W_\mathcal{H}$ and $W_\rho$.

\section{Integral form of the star product \label{IntegralStarProd}}

Similarly as in \cite{VGB89}, we evaluate the integral form of the star product of Wigner functions
corresponding to arbitrary operators $A$ and $B$ acting on  a single spin $J$. Based on the explicit form of the
kernel for the Wigner transformation in Eq.~\eqref{WignerTransform} and the expansion formula of tensor-operator
matrix products in Eq.~\eqref{ProdTop}, we evaluate the explicit form of the trikernel.
Let us first consider the
matrix product $AB$  of two arbitrary operators $A$ and $B$ of a spin $J$ and the corresponding Wigner function
$W_{AB}$, which is obtained according to Eq.~\eqref{WignerTransform} as
\begin{equation}\label{app_subst}
\wigner ( AB ) := W_{AB} (\theta, \phi)= \tr [  \Delta_{J}(\theta, \phi) AB ].
\end{equation}
The definition of the integral star product is derived by
substituting $A$ and $B$ in Eq.~\eqref{app_subst}
with 
the formula 
for the inverse Wigner transform of  $W_A$ and $W_B$
from Eq.~\eqref{inverseWigner}.
This leads to
\begin{align}	 
&W_{AB} (\theta, \phi)= W_A(\theta_1, \phi_1) \star  W_B(\theta_2, \phi_2) \label{integralstarprod} =    \\
& \idotsint\displaylimits_{\theta_1,\theta_2,\phi_1,\phi_2\geq 0}^{\theta_1,\theta_2\leq \pi, \phi_1,\phi_2\leq 2\pi}  
\Delta^{(T)}_{J} W_A W_B
\: \sin{\theta_1} \sin{\theta_2} \,\mathrm{d}\theta_1 
\,\mathrm{d}\theta_2  \, \mathrm{d}\phi_1 \, \mathrm{d}\phi_2, \nonumber
\end{align}
where $W_A:=W_A(\theta_1, \phi_1)$ and $W_B:=W_B(\theta_2, \phi_2)$ denote Wigner functions of
$A$ and $B$ depending on different variables, and the trikernel is given by
\begin{equation}
\Delta^{(T)}_{J} = \tr [  \Delta_{J}(\theta, \phi) \Delta_{J}(\theta_1, \phi_1) \Delta_{J}(\theta_2, \phi_2)  ].
\end{equation}
Using the expansion formula in Eq.~\eqref{ProdTop} to evaluate the explicit form of the trikernel, one obtains
\begin{equation} 
\Delta^{(T)}_{J} 
= 
\sum_{\vec{j},\vec{m}}
\Y^*_{j_1 m_1}(\theta_1, \phi_1)
\Y^*_{j_2 m_2}(\theta_2, \phi_2)  \sum_{L=|j_1-j_2|}^{n} 
Q^{(J)}_{j_1 j_2 L} \,
C^{LM}_{j_1 m_1 j_2 m_2}
\Y_{L M}(\theta, \phi),
\end{equation}
where $\vec{j}=(j_1,j_2)$, $\vec{m}=(m_1,m_2)$ and $0 \leq j_1,j_2 \leq 2J$, and
the upper limit $n$ given as $n:=\min( j_1{+}j_2, 2J )$; note $M=m_1{+}m_2$.
With this form of the trikernel, Eq.~\eqref{integralstarprod} can be interpreted in the following way:
Integrating over the product $W_A W_B$ multiplied by the trikernel, one obtains the 
decomposition of both $W_A$ and $W_B$ 
into spherical harmonics,
and the Wigner transformation of
the product $\Tj_{j_1,m_1}\Tj_{j_2,m_2}$ is obtained
for each pair $\Y_{j_1m_1}$ and $\Y_{j_2m_2}$ of spherical harmonics.

\allowdisplaybreaks

\section*{References}


\begin{thebibliography}{100}
	\expandafter\ifx\csname url\endcsname\relax
	\def\url#1{\texttt{#1}}\fi
	\expandafter\ifx\csname urlprefix\endcsname\relax\def\urlprefix{URL }\fi
	\expandafter\ifx\csname href\endcsname\relax
	\def\href#1#2{#2} \def\path#1{#1}\fi
	
	\bibitem{Wey27}
	H.~Weyl, {Quantenmechanik und Gruppentheorie}, Z. Phys. 46 (1927) 1--33.
	
	\bibitem{Weyl31}
	H.~Weyl, {Gruppentheorie und Quantenmechanik}, 2nd Edition, Hirzel, Leipzig,
	1931, english translation in \cite{Weyl50}.
	
	\bibitem{Weyl50}
	H.~Weyl, The theory of groups \& quantum mechanics, 2nd Edition, Dover Publ.,
	New York, 1950.
	
	\bibitem{wigner1932}
	E.~Wigner, On the quantum correction for thermodynamic equilibrium, Phys. Rev.
	40~(5) (1932) 749.
	
	\bibitem{Gro46}
	H.~Groenewold, On the principles of elementary quantum mechanics, Physica 12
	(1946) 405--460.
	
	\bibitem{Moy49}
	J.~E. Moyal, Quantum mechanics as a statistical theory, Proc. Camb. Phil. Soc.
	45 (1949) 99--124.
	
	\bibitem{cohen1991quantum}
	C.~Cohen-Tannoudji, B.~Diu, F.~Laloe, Quantum mechanics, Vol. 1, Wiley, New
	York, 1991.
	
	\bibitem{feynman2005}
	R.~P. Feynman, Hibbs, Quantum mechanics and path integrals, McGraw-Hill, New
	York, 1965.
	
	\bibitem{carruthers1983}
	P.~Carruthers, F.~Zachariasen, Quantum collision theory with phase-space
	distributions, Rev. Mod. Phys. 55~(1) (1983) 245.
	
	\bibitem{hillery1997}
	M.~Hillery, R.~F. O'Connell, M.~O. Scully, E.~P. Wigner, Distribution functions
	in physics: fundamentals, Phys. Rep. 106~(3) (1984) 121--167.
	
	\bibitem{kim1991}
	Y.-S. Kim, M.~E. Noz, Phase space picture of quantum mechanics: group
	theoretical approach, World Scientific, Singapore, 1991.
	
	\bibitem{lee1995}
	H.-W. Lee, Theory and application of the quantum phase-space distribution
	functions, Phys. Rep. 259~(3) (1995) 147--211.
	
	\bibitem{gadella1995}
	M.~Gadella, Moyal formulation of quantum mechanics, Fortschr. Phys. 43~(3)
	(1995) 229.
	
	\bibitem{zachos2005}
	C.~K. Zachos, D.~B. Fairlie, T.~L. Curtright, Quantum mechanics in phase space:
	an overview with selected papers, World Scientific, Singapore, 2005.
	
	\bibitem{schroeck2013}
	F.~E. Schroeck~Jr, Quantum mechanics on phase space, Springer, Dordrecht, 2013.
	
	\bibitem{SchleichBook}
	W.~P. Schleich, Quantum optics in phase space, Wiley-VCH, Berlin, 2001.
	
	\bibitem{Curtright-review}
	T.~L. Curtright, D.~B. Fairlie, C.~K. Zachos, A concise treatise on quantum
	mechanics in phase space, World Scientific, Singapore, 2014.
	
	\bibitem{1bayen1978}
	F.~Bayen, M.~Flato, C.~Fronsdal, A.~Lichnerowicz, D.~Sternheimer, {Deformation
		theory and quantization. I. Deformations of symplectic structures}, Ann.
	Phys. 111~(1) (1978) 61--110.
	
	\bibitem{2bayen1978}
	F.~Bayen, M.~Flato, C.~Fronsdal, A.~Lichnerowicz, D.~Sternheimer, {Deformation
		theory and quantization. II. Physical applications}, Ann. Phys. 111~(1)
	(1978) 111--151.
	
	\bibitem{landau1976}
	L.~D. Landau, E.~M. Lifshitz, Course on theoretical physics, Pergamon Press,
	Oxford, 1976.
	
	\bibitem{stratonovich}
	R.~L. Stratonovich, On distributions in representation space, J. Exptl.
	Theoret. Phys. (U.S.S.R.) 31 (1956) 1012--1020.
	
	\bibitem{VGB89}
	J.~C. V{\'a}rilly, J.~M. Garcia-Bond{\'i}a, The {Moyal} representation for
	spin, Ann. Phys. 190 (1989) 107--148.
	
	\bibitem{Brif97}
	C.~Brif, A.~Mann, {A general theory of phase-space quasiprobability
		distributions}, J. Phys. A 31 (1997) L9--L17.
	
	\bibitem{Brif98}
	C.~Brif, A.~Mann, {Phase-space formulation of quantum mechanics and
		quantum-state reconstruction for physical systems with Lie-group symmetries},
	Phys. Rev. A 59~(2) (1999) 971.
	
	\bibitem{StarProd}
	A.~B. Klimov, P.~Espinoza, {Moyal-like form of the star product for generalized
		$\mathrm{SU}(2)$ Stratonovich-Weyl symbols}, J. Phys. A 35 (2002) 8435.
	
	\bibitem{klimov2005classical}
	A.~Klimov, P.~Espinoza, Classical evolution of quantum fluctuations in
	spin-like systems: squeezing and entanglement, J. Opt. B 7~(6) (2005) 183.
	
	\bibitem{tilma2016}
	T.~Tilma, M.~J. Everitt, J.~H. Samson, W.~J. Munro, K.~Nemoto, Wigner functions
	for arbitrary quantum systems, Phys. Rev. Lett. 117~(18) (2016) 180401.
	
	\bibitem{bishop1994}
	R.~F. Bishop, A.~Vourdas, Displaced and squeezed parity operator: its role in
	classical mappings of quantum theories, Phys. Rev. A 50~(6) (1994) 4488.
	
	\bibitem{EBW87}
	R.~R. Ernst, G.~Bodenhausen, A.~Wokaun, Principles of nuclear magnetic
	resonance in one and two dimensions, Clarendon Press, Oxford, 1987.
	
	\bibitem{klimov2002ExactEvolution}
	A.~Klimov, Exact evolution equations for $\mathrm{SU}(2)$ quasidistribution
	functions, J. Math. Phys. 43~(5) (2002) 2202--2213.
	
	\bibitem{Gratus97}
	J.~Gratus, {A natural basis of states for the noncommutative sphere and its
		Moyal bracket}, J. Math. Phys. 38 (1997) 4283.
	
	\bibitem{klimov2008generalized}
	A.~Klimov, J.~Romero, A generalized {Wigner} function for quantum systems with
	the $\mathrm{SU}(2)$ dynamical symmetry group, J. Phys. A 41~(5) (2008)
	055303.
	
	\bibitem{wootters1987}
	W.~K. Wootters, {A Wigner-function formulation of finite-state quantum
		mechanics}, Ann. Phys. 176~(1) (1987) 1--21.
	
	\bibitem{leonhardt1996}
	U.~Leonhardt, Discrete {Wigner} function and quantum-state tomography, Phys.
	Rev. A 53~(5) (1996) 2998.
	
	\bibitem{gibbons2004}
	K.~S. Gibbons, M.~J. Hoffman, W.~K. Wootters, Discrete phase space based on
	finite fields, Phys. Rev. A 70~(6) (2004) 062101.
	
	\bibitem{ferrie2009}
	C.~Ferrie, J.~Emerson, Framed {Hilbert} space: hanging the quasi-probability
	pictures of quantum theory, New J. Phys. 11~(6) (2009) 063040.
	
	\bibitem{marchiolli2013}
	M.~A. Marchiolli, D.~Galetti, T.~Debarba, Spin squeezing and entanglement via
	finite-dimensional discrete phase-space description, Int. J. Quantum Inf.
	11~(01) (2013) 1330001.
	
	\bibitem{Feynman_Vernon_57}
	R.~P. Feynman, F.~L. Vernon, Jr., R.~W. Hellwarth, Geometrical representation
	of the {Schr{\"o}dinger} equation for solving maser problems, J. Appl. Phys.
	28 (1957) 49--52.
	
	\bibitem{handbook_04}
	M.~A. Bernstein, K.~F. King, X.~J. Zhou, Handbook of MRI pulse sequences,
	Elsevier, London, 2004.
	
	\bibitem{Racah42}
	G.~Racah, Theory of complex spectra {II}, Phys. Rev. 62 (1942) 438--462.
	
	\bibitem{Jac99}
	J.~D. Jackson, Classical electrodynamics, 3rd Edition, John Wiley \& Sons, New
	York, 1999.
	
	\bibitem{Pines}
	A.~Pines, S.~Vega, D.~J. Ruben, T.~W. Shattuck, D.~E. Wemmer, {Double quantum
		NMR in solids}, in: R.~Blinc, G.~Lahajnar (Eds.), Magnetic resonance in
	condensed matter: recent developments, proceedings of the IVth Ampere
	international summer school, Pula, Yugoslavia, University of Ljubljana, 1976,
	pp. 127--179.
	
	\bibitem{HalsMNMRIX}
	T.~K. Halstead, P.~A. Osment, Multipole {NMR.} {IX.} {Polar} graphical
	representation of nuclear spin polarizations, J. Magn. Reson. 60 (1984)
	382--396.
	
	\bibitem{SH91}
	B.~C. Sanctuary, T.~K. Halstead, {Multipole NMR}, Adv. Opt. NMR Reson. 15
	(1991) 97--161.
	
	\bibitem{DowlingAgarwalSchleich}
	J.~P. Dowling, G.~S. Agarwal, W.~P. Schleich, {Wigner} distribution of a
	general angular-momentum state: applications to a collection of two-level
	atoms, Phys. Rev. A 49~(5) (1994) 4101--4109.
	
	\bibitem{JHKS}
	P.~S. Jessen, D.~L. Haycock, G.~Klose, G.~A. Smith, I.~H. Deutsch, G.~K.
	Brennen, {Quantum control and information processing in optical lattices},
	Quant. Inf. Comp. 1 (2001) 20--32.
	
	\bibitem{PhilpKuchel}
	D.~J. Philp, P.~W. Kuchel, A way of visualizing {NMR} experiments on
	quadrupolar nuclei, Concepts Magn. Reso. A 25A (2005) 40--52.
	
	\bibitem{MJD}
	S.~T. Merkel, P.~S. Jessen, I.~H. Deutsch, Quantum control of the
	hyperfine-coupled electron and nuclear spins in alkali-metal atoms, Phys.
	Rev. A 78 (2008) 023404.
	
	\bibitem{Harland}
	D.~Harland, M.~J. Everitt, K.~Nemoto, T.~Tilma, T.~P. Spiller, {Towards a
		complete and continious Wigner function for an ensamble of spins or qubits},
	Phys. Rev. A 86 (2012) 062117.
	
	\bibitem{DROPS}
	A.~Garon, R.~Zeier, S.~J. Glaser, Visualizing operators of coupled spin
	systems, Phys. Rev. A 91 (2015) 042122.
	
	\bibitem{NC00}
	M.~A. Nielsen, I.~L. Chuang, Quantum computation and quantum information,
	Cambridge University Press, Cambridge, UK, 2000.
	
	\bibitem{Roadmap2015}
	S.~J. Glaser, U.~Boscain, T.~Calarco, C.~P. Koch, W.~K{\"o}ckenberger,
	R.~Kosloff, I.~Kuprov, B.~Luy, S.~Schirmer, T.~Schulte-Herb{\"u}ggen,
	D.~Sugny, F.~K. Wilhelm, {Training Schr{\"o}dinger’s Cat: Quantum Optimal
		Control}, Eur. Phys. J. D 69 (2015) 279.
	
	\bibitem{Scholl11}
	U.~Schollw{\"o}ck, {The density-matrix renormalization group in the age of
		matrix product states}, Ann. Phys. 326~(1) (2011) 96--192.
	
	\bibitem{Orus14}
	R.~Or{\'u}s, {A practical introduction to tensor networks: Matrix product
		states and projected entangled pair states}, Ann. Phys. 349 (2014) 117--158.
	
	\bibitem{HHHH09}
	R.~Horodecki, P.~Horodecki, M.~Horodecki, K.~Horodecki, {Quantum entanglement},
	Rev. Mod. Phys. 81~(2) (2009) 865--942.
	
	\bibitem{GT09}
	O.~G{\"u}hne, G.~T{\'o}th, {Entanglement detection}, Phys. Rep. 474 (2009)
	1--75.
	
	\bibitem{ES14}
	C.~Eltschka, J.~Siewert, {Quantifying entanglement resources}, J. Phys. A 47
	(2014) 424005.
	
	\bibitem{Dicke1954}
	R.~H. Dicke, {Coherence in spontaneous radiation processes}, Phys. Rev. 93~(1)
	(1954) 99--110.
	
	\bibitem{stockton2003}
	J.~K. Stockton, J.~M. Geremia, A.~C. Doherty, H.~Mabuchi, {Characterizing the
		entanglement of symmetric many-particle spin-1 2 systems}, Phys. Rev. A
	67~(2) (2003) 022112.
	
	\bibitem{toth2010}
	G.~T{\'o}th, W.~Wieczorek, D.~Gross, R.~Krischek, C.~Schwemmer, H.~Weinfurter,
	{Permutationally invariant quantum tomography}, Phys. Rev. Lett. 105~(25)
	(2010) 250403.
	
	\bibitem{lucke2014}
	B.~L{\"u}cke, J.~Peise, G.~Vitagliano, J.~Arlt, L.~Santos, G.~T{\'o}th,
	C.~Klempt, {Detecting multiparticle entanglement of Dicke states}, Phys. Rev.
	Lett. 112~(15) (2014) 155304.
	
	\bibitem{KZG17}
	B.~Koczor, R.~Zeier, S.~J. Glaser, {Continuous phase-space representations for
		finite-dimensional quantum states and their tomography}.
	\newblock \href {http://arxiv.org/abs/1711.07994} {\path{arXiv:1711.07994}}.
	
	\bibitem{sakurai1995modern}
	J.~J. Sakurai, {Modern Quantum Mechanics}, rev. Edition, Addison-Wesley,
	Reading, 1994.
	
	\bibitem{schwinger65}
	J.~Schwinger, {On Angular Momentum}, in: L.~C. Biedenharn, H.~Van~Dam (Eds.),
	{Quantum Theory of Angular Momentum}, Academic Press, New York, 1965, pp.
	229--279.
	
	\bibitem{oliveira2011}
	I.~Oliveira, R.~Sarthour~Jr, T.~Bonagamba, E.~Azevedo, J.~C. Freitas, NMR
	quantum information processing, Elsevier, Amsterdam, 2011.
	
	\bibitem{farrar1990}
	T.~C. Farrar, Density matrices in {NMR} spectroscopy: part {I}, Concepts Magn.
	Reson. A 2~(1) (1990) 1--12.
	
	\bibitem{cavanagh1995}
	J.~Cavanagh, W.~J. Fairbrother, A.~G. Palmer, N.~J. Skelton, Protein {NMR}
	spectroscopy: principles and practice, Academic Press, San Diego, 1996.
	
	\bibitem{abragam1961}
	A.~Abragam, H.~Y. Carr, The principles of nuclear magnetism, Clarendon Press,
	Oxford, 1961.
	
	\bibitem{Glaser1998}
	S.~J. Glaser, T.~Schulte-Herbr{\"u}ggen, M.~Sieveking, O.~Schedletzky, N.~C.
	Nielsen, O.~W. S{\o}rensen, C.~Griesinger, Unitary control in quantum
	ensembles: maximizing signal intensity in coherent spectroscopy, Science
	280~(5362) (1998) 421--424.
	
	\bibitem{messiah1962}
	A.~Messiah, Quantum mechanics, Vol.~I, North-Holland, Amsterdam, 1961.
	
	\bibitem{BL81}
	L.~C. Biedenharn, J.~D. Louck, {Angular Momentum in Quantum Physics},
	Addison-Wesley, Reading, MA, 1981.
	
	\bibitem{Fano53}
	U.~Fano, Geometrical characterization of nuclear states and the theory of
	angular correlations, Phys. Rev. 90~(4) (1953) 577--579.
	
	\bibitem{Varshalovich}
	D.~A. Varshalovich, A.~N. Moskalev, V.~K. Khersonskii, Quantum theory of
	angular momentum, World Scientific, Singapore, 1988.
	
	\bibitem{fuzzy}
	L.~Freidel, K.~Krasnov, The fuzzy sphere star product and spin networks, J.
	Math. Phys. 43  1737.
	
	\bibitem{arfken2005mathematical}
	G.~B. Arfken, H.~J. Weber, Mathematical methods for physicists, 6th Edition,
	Academic Press, Amsterdam, 2005.
	
	\bibitem{bransden2000}
	B.~H. Bransden, C.~J. Joachain, Quantum mechanics, Pearson Education, 2000.
	
	\bibitem{rae2008}
	A.~Rae, Quantum mechanics, 5th Edition, Taylor \& Francis, Boca Raton, FL,
	2008.
	
	\bibitem{carrington1967}
	A.~Carrington, A.~D. McLachlan, Introduction to magnetic resonance with
	applications to chemistry and chemical physics, Harper \& Row, New York,
	1967.
	
	\bibitem{KeelerUnderstanding}
	J.~Keeler, Understanding NMR spectroscopy, 2nd Edition, John Wiley \& Sons,
	Chichester, 2011.
	
	\bibitem{levittspind}
	M.~H. Levitt, Spin dynamics: basics of nuclear magnetic resonance, John Wiley
	\& Sons, Chichester, 2001.
	
	\bibitem{amiet2000}
	J.-P. Amiet, S.~Weigert, Contracting the {Wigner} kernel of a spin to the
	{Wigner} kernel of a particle, Phys. Rev. A 63~(1) (2000) 012102.
	
	\bibitem{altmann2005}
	S.~L. Altmann, Rotations, quaternions, and double groups, Dover, Mineola, 2005.
	
	\bibitem{Ferrie11}
	C.~Ferrie, {Quasi-probability representations of quantum theory with
		applications to quantum information science}, Rep. Prog. Phys. 74 (2011)
	116001.
	
	\bibitem{Johansen97}
	L.~M. Johansen, {EPR correlations and EPW distributions revisited}, Phys. Lett.
	A 236 (1997) 173--176.
	
	\bibitem{BW99}
	K.~Banaszek, K.~W{\'o}dkiewicz, Testing quantum nonlocality in phase space,
	Phys. Rev. Lett. 82 (1999) 2009--2013.
	
	\bibitem{BW98}
	K.~Banaszek, K.~W{\'o}dkiewicz, {Nonlocality of the Einstein-Podolsky-Rosen
		state in the Wigner representation}, Phys. Rev. A 58 (1998) 4345--4347.
	
	\bibitem{KZ04}
	A.~Kenfack, K.~{\.Z}yczkowski, {Negativity of the Wigner function as an
		indicator of non-classicality}, J. Opt. B 6 (2004) 396--404.
	
	\bibitem{RMMJ05}
	M.~Revzen, P.~A. Mello, A.~Mann, L.~M. Johansen, {Bell's inequality violation
		with non-negative Wigner functions}, Phys. Rev. A 71 (2005) 022103.
	
	\bibitem{DMWS06}
	J.~P. Dahl, H.~Mack, A.~Wolf, W.~P. Schleich, {Entanglement versus negative
		domains of Wigner functions}, Phys. Rev. A 74 (2006) 042323.
	
	\bibitem{Spekkens08}
	R.~W. Spekkens, Negativity and contextuality and equivalent notions of
	nonclassicality, Phys. Rev. Lett. 101 (2008) 020401.
	
	\bibitem{MKC09}
	A.~Mandilara, E.~Karpov, N.~J. Cerf, {Extending Hudson's theorem to mixed
		states}, Phys. Rev. A 79 (2009) 062302.
	
	\bibitem{KMMR09}
	A.~Kalev, A.~Mann, P.~A. Mello, M.~Revzen, {Inadequacy of a classical
		interpretation of quantum projective measurements via Wigner functions},
	Phys. Rev. A 79 (2009) 014104.
	
	\bibitem{WB12}
	J.~J. Wallman, S.~D. Bartlett, {Non-negative subtheories and quasiprobability
		representations of qubits}, Phys. Rev. A 85 (2012) 062121.
	
	\bibitem{MO14}
	K.-P. Marzlin, T.~A. Osborn, {Quantum-collapse Bell inequalities}, Phys. Rev. A
	89 (2014) 032123.
	
	\bibitem{Agarwal81}
	G.~S. Agarwal, Relation between atomic coherent-state representation, state
	multipoles, and generalized phase-space distributions, Phys. Rev. A 24 (1981)
	2889--2896.
	
	\bibitem{BC99}
	M.~G. Benedict, A.~Czirj{\'a}k, {Wigner functions, squeezing properties, and
		slow decoherence of mesoscopic superposition of two-level atoms}, Phys. Rev.
	A 60 (1999) 4034--4044.
	
	\bibitem{APS97}
	G.~S. Agarwal, R.~R. Puri, R.~P. Singh, {Atomic Schrödinger cat states}, Phys.
	Rev. A 56 (1997) 2249--2254.
	
	\bibitem{FE08}
	C.~Ferrie, J.~Emerson, {Frame representations of quantum mechanics and the
		necessity of negativity in quasi-probability representations}, J. Phys. A 41
	(2008) 352001.
	
	\bibitem{FE09}
	C.~Ferrie, J.~Emerson, {Framed Hilbert space: hanging the quasi-probability
		pictures of quantum theory}, New J. Phys. 11 (2009) 063040.
	
	\bibitem{howard2014}
	M.~Howard, J.~Wallman, V.~Veitch, J.~Emerson, Contextuality supplies the
	`magic' for quantum computation, Nature 510~(7505) (2014) 351--355.
	
	\bibitem{CBJ15}
	R.~Cabrera, D.~I. Bondar, K.~Jacobs, H.~A. Rabitz, {Efficient method to
		generate time evolution of the Wigner function for open quantum systems},
	Phys. Rev. A 92~(4) (2015) 042122.
	
	\bibitem{supplemental}
	{See Ancillary files at arxiv.org/abs/1612.06777 for the Mathematica files.}
	
\end{thebibliography}

\end{document}